\pdfoutput=1
\documentclass[10pt]{article}
\usepackage[margin=1in]{geometry}

\usepackage{amsmath,amssymb,amsthm}
\usepackage{graphicx}
\usepackage{booktabs}
\usepackage{multirow}
\usepackage{url}
\usepackage{algorithm}
\usepackage[noend]{algpseudocode}
\usepackage{tikz}
\usetikzlibrary{arrows.meta,positioning,calc,shapes.geometric}
\usepackage[hidelinks]{hyperref}
\hypersetup{
  pdftitle={SOLO: Certified-Recall Metric Similarity Search with
            Scan-Only Sampled Inverted Lists},
  pdfauthor={Edgar Chavez}}

\definecolor{sloblue}{HTML}{2a78d6}
\definecolor{sloorange}{HTML}{eb6834}

\usepackage{expl3}
\ExplSyntaxOn
\prop_new:N \g_num_prop
\NewDocumentCommand{\defnum}{mm}{\prop_gput:Nnn \g_num_prop {#1} {#2}}
\NewDocumentCommand{\num}{m}{\prop_item:Nn \g_num_prop {#1}}
\ExplSyntaxOff
\defnum{abl/hyb-qps-ks128}{35}
\defnum{abl/hyb-qps-ks64}{72}
\defnum{abl/hyb-recall-ks128}{0.9976}
\defnum{abl/hyb-recall-ks64}{0.9890}
\defnum{abl/rec-qps-ks128}{52}
\defnum{abl/rec-qps-ks64}{124}
\defnum{abl/rec-recall-ks128}{0.9978}
\defnum{abl/rec-recall-ks64}{0.9899}
\defnum{capped/1g-io-ks256}{37}
\defnum{capped/1g-io-ks512}{71}
\defnum{capped/1g-qps-ks256}{19.8}
\defnum{capped/1g-qps-ks512}{10.2}
\defnum{capped/1g-recall-d2}{0.9977}
\defnum{capped/1g-recall-ks256}{0.9965}
\defnum{capped/1g-recall-ks512}{0.9979}
\defnum{capped/bytes-per-object}{10.7}
\defnum{capped/diskann-died-gb}{8}
\defnum{capped/diskann-floor-gb}{12}
\defnum{capped/fullset-recall-ks256}{0.9982}
\defnum{capped/fullset-recall-ks512}{0.9990}
\defnum{capped/ivfpq-1g-qps}{4.5}
\defnum{capped/ivfpq-bytes}{56.1}
\defnum{capped/ivfpq-ceiling}{0.8747}
\defnum{capped/peak-gb}{1}
\defnum{capped/rabitq-rq1-codes/12g-qps}{1{,}300}
\defnum{capped/rabitq-rq1-codes/12g-recall}{0.3450}
\defnum{capped/rabitq-rq1-codes/1g-qps}{167}
\defnum{capped/rabitq-rq1-codes/1g-recall}{0.3450}
\defnum{capped/rabitq-rq1-codes/2g-qps}{624}
\defnum{capped/rabitq-rq1-codes/2g-recall}{0.3450}
\defnum{capped/rabitq-rq1-codes/4g-qps}{1{,}371}
\defnum{capped/rabitq-rq1-codes/4g-recall}{0.3450}
\defnum{capped/rabitq-rq1-codes/6g-qps}{1{,}350}
\defnum{capped/rabitq-rq1-codes/6g-recall}{0.3450}
\defnum{capped/rabitq-rq1-codes/8g-qps}{1{,}382}
\defnum{capped/rabitq-rq1-codes/8g-recall}{0.3450}
\defnum{capped/rabitq-rq1-codes/bytes}{28.1}
\defnum{capped/rabitq-rq1-codes/ceiling}{0.3450}
\defnum{capped/rabitq-rq1-codes/died-gb}{--}
\defnum{capped/rabitq-rq1-codes/floor-gb}{1}
\defnum{capped/rabitq-rq1-codes/uncapped-qps}{1{,}357}
\defnum{capped/rabitq-rq1-codes/uncapped-recall}{0.3450}
\defnum{capped/rabitq-rq1-refine/48g-qps}{98.0}
\defnum{capped/rabitq-rq1-refine/48g-recall}{0.9942}
\defnum{capped/rabitq-rq1-refine/bytes}{412.1}
\defnum{capped/rabitq-rq1-refine/ceiling}{0.9942}
\defnum{capped/rabitq-rq1-refine/died-gb}{12}
\defnum{capped/rabitq-rq1-refine/floor-gb}{48}
\defnum{capped/rabitq-rq1-refine/uncapped-qps}{97}
\defnum{capped/rabitq-rq1-refine/uncapped-recall}{0.9942}
\defnum{capped/rabitq-rq4-codes/12g-qps}{96.5}
\defnum{capped/rabitq-rq4-codes/12g-recall}{0.8715}
\defnum{capped/rabitq-rq4-codes/1g-qps}{4.1}
\defnum{capped/rabitq-rq4-codes/1g-recall}{0.8715}
\defnum{capped/rabitq-rq4-codes/2g-qps}{5.1}
\defnum{capped/rabitq-rq4-codes/2g-recall}{0.8715}
\defnum{capped/rabitq-rq4-codes/4g-qps}{9.7}
\defnum{capped/rabitq-rq4-codes/4g-recall}{0.8715}
\defnum{capped/rabitq-rq4-codes/6g-qps}{32.3}
\defnum{capped/rabitq-rq4-codes/6g-recall}{0.8715}
\defnum{capped/rabitq-rq4-codes/8g-qps}{96.5}
\defnum{capped/rabitq-rq4-codes/8g-recall}{0.8715}
\defnum{capped/rabitq-rq4-codes/bytes}{76.1}
\defnum{capped/rabitq-rq4-codes/ceiling}{0.8715}
\defnum{capped/rabitq-rq4-codes/died-gb}{--}
\defnum{capped/rabitq-rq4-codes/floor-gb}{1}
\defnum{capped/rabitq-rq4-codes/uncapped-qps}{96}
\defnum{capped/rabitq-rq4-codes/uncapped-recall}{0.8715}
\defnum{capped/rabitq-rq4-refine/48g-qps}{95.1}
\defnum{capped/rabitq-rq4-refine/48g-recall}{0.9987}
\defnum{capped/rabitq-rq4-refine/bytes}{460.1}
\defnum{capped/rabitq-rq4-refine/ceiling}{0.9987}
\defnum{capped/rabitq-rq4-refine/died-gb}{24}
\defnum{capped/rabitq-rq4-refine/floor-gb}{48}
\defnum{capped/rabitq-rq4-refine/uncapped-qps}{95}
\defnum{capped/rabitq-rq4-refine/uncapped-recall}{0.9987}
\defnum{capped/scann-ah/12g-qps}{4{,}608}
\defnum{capped/scann-ah/12g-recall}{0.4281}
\defnum{capped/scann-ah/8g-qps}{4{,}525}
\defnum{capped/scann-ah/8g-recall}{0.4281}
\defnum{capped/scann-ah/bytes}{52.2}
\defnum{capped/scann-ah/ceiling}{0.4281}
\defnum{capped/scann-ah/died-gb}{6}
\defnum{capped/scann-ah/floor-gb}{8}
\defnum{capped/scann-ah/uncapped-qps}{4{,}423}
\defnum{capped/scann-ah/uncapped-recall}{0.4281}
\defnum{capped/scann-bf16/32g-qps}{985}
\defnum{capped/scann-bf16/32g-recall}{0.9915}
\defnum{capped/scann-bf16/bytes}{244.2}
\defnum{capped/scann-bf16/ceiling}{0.9937}
\defnum{capped/scann-bf16/died-gb}{24}
\defnum{capped/scann-bf16/floor-gb}{32}
\defnum{capped/scann-bf16/uncapped-qps}{607}
\defnum{capped/scann-bf16/uncapped-recall}{0.9937}
\defnum{capped/solo-1g-vs-2g-pct}{12}
\defnum{capped/solo-bytes}{2.7}
\defnum{capped/solo-floor-mb}{256}
\defnum{capped/solo-floor-recall}{0.9964}
\defnum{capped/solo-floor-threads}{4}
\defnum{capped/solo-floor512-mb}{256}
\defnum{capped/solo-floor512-qps}{2.9}
\defnum{capped/solo-floor512-recall}{0.9977}
\defnum{capped/solo-floor512-threads}{2}
\defnum{capped/solo-ks512-256m-qps}{2.9}
\defnum{capped/solo-ks512-256m-threads}{2}
\defnum{capped/solo-ks512-384m-qps}{5.0}
\defnum{capped/solo-ks512-384m-threads}{4}
\defnum{capped/solo-ks512-512m-qps}{5.3}
\defnum{capped/solo-ks512-512m-threads}{4}
\defnum{capped/solo-ks512-768m-qps}{8.3}
\defnum{capped/solo-ks512-768m-threads}{8}
\defnum{capped/spann-8g-qps}{185}
\defnum{capped/spann-8g-recall}{0.9911}
\defnum{capped/spann-died-gb}{6}
\defnum{capped/spann-floor-gb}{8}
\defnum{capped/spann-uncapped-qps}{186}
\defnum{ceiling/1e10-d2-router-gib}{1.8}
\defnum{ceiling/1e10-d3-router-mib}{38}
\defnum{ceiling/1e9-d2-analytic-mib}{171}
\defnum{ceiling/1e9-flat-analytic-gib}{8.34}
\defnum{ceiling/bytes-per-object}{8.96}
\defnum{ceiling/nstar-1gb}{1.2}
\defnum{cert/msmarco100m-best-recall}{0.8987}
\defnum{cert/msmarco100m-cov-ceiling}{0.9069}
\defnum{cert/msmarco100m-max-err}{0.0001}
\defnum{cert/msmarco100m-points}{6}
\defnum{cert/msmarco100m-router-cost}{0.0194}
\defnum{certboot/deep10m/cell-b}{4}
\defnum{certboot/deep10m/cell-cov}{0.9944}
\defnum{certboot/deep10m/cell-ks}{256}
\defnum{certboot/deep10m/cell-ratio}{1.25}
\defnum{certboot/deep10m/m}{10{,}000}
\defnum{certboot/deep10m/min-cov}{0.9110}
\defnum{certboot/deep10m/nboot}{1{,}000}
\defnum{certboot/deep10m/ratio-max}{1.51}
\defnum{certboot/glove/cell-b}{32}
\defnum{certboot/glove/cell-cov}{0.9905}
\defnum{certboot/glove/cell-ks}{128}
\defnum{certboot/glove/cell-ratio}{1.23}
\defnum{certboot/glove/m}{10{,}000}
\defnum{certboot/glove/min-cov}{0.7319}
\defnum{certboot/glove/nboot}{1{,}000}
\defnum{certboot/glove/ratio-max}{2.06}
\defnum{certboot/halfwidth-max-99}{0.0007}
\defnum{certboot/ratio-max}{2.06}
\defnum{certboot/ratio-max-99}{1.26}
\defnum{certboot/sift/cell-b}{16}
\defnum{certboot/sift/cell-cov}{0.9972}
\defnum{certboot/sift/cell-ks}{32}
\defnum{certboot/sift/cell-ratio}{1.15}
\defnum{certboot/sift/m}{10{,}000}
\defnum{certboot/sift/min-cov}{0.9567}
\defnum{certboot/sift/nboot}{1{,}000}
\defnum{certboot/sift/ratio-max}{1.35}
\defnum{cov/glove-b16-ks128}{0.9683}
\defnum{cov/glove-b16-ks256}{0.9890}
\defnum{cov/glove-b32-ks128}{0.9902}
\defnum{cov/glove-b32-ks256}{0.9977}
\defnum{cov/glove-b4-ks128}{0.8626}
\defnum{cov/glove-b4-ks256}{0.9126}
\defnum{cov/glove-b64-ks128}{0.9980}
\defnum{cov/glove-b64-ks256}{0.9997}
\defnum{cov/glove-b8-ks128}{0.9272}
\defnum{cov/glove-b8-ks256}{0.9635}
\defnum{cov/sift-b16-ks128}{1.0000}
\defnum{cov/sift-b16-ks64}{0.9997}
\defnum{cov/sift-b32-ks128}{1.0000}
\defnum{cov/sift-b32-ks64}{1.0000}
\defnum{cov/sift-b4-ks128}{0.9965}
\defnum{cov/sift-b4-ks64}{0.9849}
\defnum{cov/sift-b64-ks128}{1.0000}
\defnum{cov/sift-b64-ks64}{1.0000}
\defnum{cov/sift-b8-ks128}{0.9996}
\defnum{cov/sift-b8-ks64}{0.9977}
\defnum{d2c/d1-1g-ks256-qps-cold}{8.0}
\defnum{d2c/d1-1g-ks256-qps-warm}{1.7}
\defnum{d2c/d1-1g-ks256-recall}{0.9965}
\defnum{d2c/d1-1g-ks256-route-ms}{547.3}
\defnum{d2c/d1-2g-ks256-qps-cold}{20.9}
\defnum{d2c/d1-2g-ks256-qps-warm}{20.2}
\defnum{d2c/d1-2g-ks256-recall}{0.9965}
\defnum{d2c/d1-2g-ks256-route-ms}{1.0}
\defnum{d2c/d1-2g-ks512-qps-cold}{10.7}
\defnum{d2c/d1-2g-ks512-qps-warm}{10.5}
\defnum{d2c/d1-2g-ks512-recall}{0.9979}
\defnum{d2c/d1-2g-ks512-route-ms}{1.3}
\defnum{d2c/d2-1g-ks256-qps-cold}{17.6}
\defnum{d2c/d2-1g-ks256-qps-warm}{16.3}
\defnum{d2c/d2-1g-ks256-recall}{0.9964}
\defnum{d2c/d2-1g-ks256-route-ms}{12.8}
\defnum{d2c/d2-1g-ks512-qps-cold}{9.1}
\defnum{d2c/d2-1g-ks512-qps-warm}{8.7}
\defnum{d2c/d2-1g-ks512-recall}{0.9977}
\defnum{d2c/d2-1g-ks512-route-ms}{18.4}
\defnum{d2c/d2-256m-b-per-object}{2.7}
\defnum{d2c/d2-256m-ks256-qps-warm}{7.3}
\defnum{d2c/d2-2g-ks256-qps-cold}{19.9}
\defnum{d2c/d2-2g-ks256-qps-warm}{19.9}
\defnum{d2c/d2-2g-ks256-recall}{0.9964}
\defnum{d2c/d2-2g-ks256-route-ms}{3.6}
\defnum{d2c/d2-2g-ks512-qps-cold}{10.2}
\defnum{d2c/d2-2g-ks512-qps-warm}{9.8}
\defnum{d2c/d2-2g-ks512-recall}{0.9977}
\defnum{d2c/d2-2g-ks512-route-ms}{7.5}
\defnum{d2c/d2-512m-b-per-object}{5.4}
\defnum{d2c/d2-512m-ks256-qps-warm}{12.8}
\defnum{d2c/router-d2-mib}{18.3}
\defnum{d2c/router-flat-mib}{963}
\defnum{d2c/router-io-mb}{4.8}
\defnum{d2c/router-ratio}{53}
\defnum{deep100m/d2/floor256-b-per-object}{2.7}
\defnum{deep100m/d2/floor256-cap-gb}{0.25}
\defnum{deep100m/d2/floor256-qps-warm}{7.3}
\defnum{deep100m/d2/floor256-recall}{0.9964}
\defnum{deep100m/d2/floor256-threads}{4}
\defnum{deep100m/d2/floor512-b-per-object}{2.7}
\defnum{deep100m/d2/floor512-cap-gb}{0.25}
\defnum{deep100m/d2/floor512-qps-warm}{2.9}
\defnum{deep100m/d2/floor512-recall}{0.9977}
\defnum{deep100m/d2/floor512-threads}{2}
\defnum{deep100m/matched-qps}{936}
\defnum{deep100m/matched-recall}{0.9989}
\defnum{deep100m/matched-speedup}{1.8}
\defnum{deep100m/repl-hnsw-gain}{1.6}
\defnum{deep100m/repl-hnsw-qps}{876}
\defnum{deep100m/repl-matched-qps}{1{,}918}
\defnum{deep100m/repl-matched-speedup}{2.2}
\defnum{deep100m/repl-max-qps}{1{,}631}
\defnum{deep100m/repl-solo-gain}{2.2}
\defnum{deep100m/repl-speedup-hi}{2.2}
\defnum{deep100m/repl-speedup-lo}{1.2}
\defnum{deep100m/solo-max-qps}{268}
\defnum{deep100m/solo-max-recall}{0.9991}
\defnum{deep100m/speedup-floor}{0.9974}
\defnum{deep100m/speedup-hi}{1.8}
\defnum{deep100m/speedup-lo}{1.1}
\defnum{deep100m/stream-max-qps}{755}
\defnum{deep100m/stream-max-recall}{0.9990}
\defnum{deep1b/1e8-512m-b-per-object}{5.4}
\defnum{deep1b/1e8-512m-recall}{0.9964}
\defnum{deep1b/1e8-flat-floor-b-per-object-hi}{21}
\defnum{deep1b/1e8-flat-floor-b-per-object-lo}{16}
\defnum{deep1b/1g-16thr-qps}{11.92}
\defnum{deep1b/2g-qps}{20.52}
\defnum{deep1b/2g-recall}{0.9832}
\defnum{deep1b/best-cov}{0.9986}
\defnum{deep1b/best-mb}{37.5}
\defnum{deep1b/bitmap-gib-64}{7.45}
\defnum{deep1b/bitmap-mib-thread}{119}
\defnum{deep1b/cov-b16-ks128}{0.9889}
\defnum{deep1b/cov-b16-ks256}{0.9960}
\defnum{deep1b/cov-b16-ks512}{0.9986}
\defnum{deep1b/cov-b16-ks64}{0.9725}
\defnum{deep1b/cov-b8-ks128}{0.9711}
\defnum{deep1b/cov-b8-ks256}{0.9885}
\defnum{deep1b/cov-b8-ks512}{0.9955}
\defnum{deep1b/cov-b8-ks64}{0.9389}
\defnum{deep1b/d1-collapse-8g}{64}
\defnum{deep1b/d2-gain-16g}{1.19}
\defnum{deep1b/d2-gain-8g}{63}
\defnum{deep1b/floor-b-per-object}{0.54}
\defnum{deep1b/floor-cap-mb}{512}
\defnum{deep1b/floor-nq}{200}
\defnum{deep1b/floor-qps}{6.93}
\defnum{deep1b/floor-recall}{0.9925}
\defnum{deep1b/floor-threads}{8}
\defnum{deep1b/floor-vs-1e8}{10}
\defnum{deep1b/list-skew}{47}
\defnum{deep1b/ns}{20{,}000{,}000}
\defnum{deep1b/qps-d1-12g}{1.84}
\defnum{deep1b/qps-d1-16g}{25.44}
\defnum{deep1b/qps-d1-8g}{0.40}
\defnum{deep1b/qps-d2-16g}{30.15}
\defnum{deep1b/qps-d2-4g}{19.59}
\defnum{deep1b/qps-d2-8g}{25.18}
\defnum{deep1b/recall-d1}{0.9912}
\defnum{deep1b/recall-d2}{0.9904}
\defnum{deep1b/route-ms-d1-12g}{516.6}
\defnum{deep1b/route-ms-d1-16g}{12.4}
\defnum{deep1b/route-ms-d1-8g}{2442.2}
\defnum{deep1b/route-ms-d2-16g}{9.1}
\defnum{deep1b/route-ms-d2-4g}{11.4}
\defnum{deep1b/route-ms-d2-8g}{9.2}
\defnum{deep1b/router-d2-mib}{189}
\defnum{deep1b/router-flat-mib}{9{,}882}
\defnum{deep1b/router-gap}{0.0002}
\defnum{deep1b/router-h}{1.29}
\defnum{deep1b/router-io-mb}{9.2}
\defnum{deep1b/router-ratio}{52}
\defnum{deep1b/s2/cand}{387{,}665}
\defnum{deep1b/s2/cascade-gain}{1.03}
\defnum{deep1b/s2/cov-same}{0.9944}
\defnum{deep1b/s2/identity-err}{0.0000}
\defnum{deep1b/s2/io-mb}{20.0}
\defnum{deep1b/s2/nq}{500}
\defnum{deep1b/s2/pread-gain}{2.9}
\defnum{deep1b/s2/qps-cascade}{8.08}
\defnum{deep1b/s2/qps-ks256}{26.4}
\defnum{deep1b/s2/qps-mmap}{7.81}
\defnum{deep1b/s2/qps-pread}{22.3}
\defnum{deep1b/s2/recall}{0.9912}
\defnum{deep1b/s2/recall-ks256}{0.9838}
\defnum{deep1b/s2/route-ms}{12.8}
\defnum{deep1b/s2/router-gb}{9.65}
\defnum{deep1b/s2/scan-ms-mmap}{116.0}
\defnum{deep1b/s2/scan-ms-pread}{32.0}
\defnum{deep1b/s2/screen-loss}{0.0032}
\defnum{deep1b/s2/stream-gb}{358}
\defnum{deep1b/sig-gib}{60}
\defnum{deep1b/starts-mib-after}{153}
\defnum{deep1b/starts-mib-before}{1{,}373}
\defnum{deep1b/starts-ratio}{9}
\defnum{floor/1500m-b-per-object}{16.1}
\defnum{floor/1500m-qps-ks256-warm}{19.9}
\defnum{floor/1500m-qps-ks512-warm}{1.1}
\defnum{floor/1g-evict-qps-ks256-cold}{7.7}
\defnum{floor/1g-evict-qps-ks256-warm}{1.7}
\defnum{floor/1g-prewarm-qps-ks256-cold}{20.5}
\defnum{floor/1g-prewarm-qps-ks256-warm}{10.9}
\defnum{floor/2g-b-per-object}{21.5}
\defnum{floor/2g-qps-ks512-warm}{10.3}
\defnum{glove/flat-max-qps}{271}
\defnum{glove/flat-max-recall}{0.9955}
\defnum{glove/h2-qps-ks128-ksl16}{68}
\defnum{glove/h2-qps-ks128-ksl32}{52}
\defnum{glove/h2-qps-ks64-ksl16}{150}
\defnum{glove/h2-qps-ks64-ksl32}{124}
\defnum{glove/h2-recall-ks128-ksl16}{0.9973}
\defnum{glove/h2-recall-ks128-ksl32}{0.9978}
\defnum{glove/h2-recall-ks64-ksl16}{0.9883}
\defnum{glove/h2-recall-ks64-ksl32}{0.9899}
\defnum{glove/interleaved-seeds}{3}
\defnum{glove/kb}{32}
\defnum{glove/matched-qps}{808}
\defnum{glove/matched-recall}{0.9881}
\defnum{glove/matched-speedup}{3.0}
\defnum{glove/qps-at-099}{641}
\defnum{glove/recall-at-099}{0.9910}
\defnum{glove/seeds}{3}
\defnum{glove/solo-max}{0.9978}
\defnum{glove/speedup-floor}{0.9533}
\defnum{glove/speedup-hi}{3.0}
\defnum{glove/speedup-lo}{1.4}
\defnum{insurance/j-median}{13}
\defnum{insurance/j1-frac}{2.5}
\defnum{law/glove-contour1024-hi}{0.9332}
\defnum{law/glove-contour1024-lo}{0.9126}
\defnum{lidlaw/beyond-deep10m}{0}
\defnum{lidlaw/beyond-glove}{0}
\defnum{lidlaw/beyond-laion-in}{0}
\defnum{lidlaw/beyond-laion-ood}{1}
\defnum{lidlaw/beyond-pubmed23-in}{0}
\defnum{lidlaw/beyond-pubmed23-ood}{4}
\defnum{lidlaw/beyond-sift}{0}
\defnum{lidlaw/beyond-t2i-in}{0}
\defnum{lidlaw/beyond-t2i-ip-in}{0}
\defnum{lidlaw/beyond-t2i-ip-ood}{66}
\defnum{lidlaw/beyond-t2i-ood}{86}
\defnum{lidlaw/beyond-wiki-in}{0}
\defnum{lidlaw/beyond-wiki-ood}{4}
\defnum{lidlaw/cov-8-64-deep10m}{0.9851}
\defnum{lidlaw/cov-8-64-glove}{0.8792}
\defnum{lidlaw/cov-8-64-laion-in}{0.9913}
\defnum{lidlaw/cov-8-64-laion-ood}{0.9920}
\defnum{lidlaw/cov-8-64-pubmed23-in}{0.9937}
\defnum{lidlaw/cov-8-64-pubmed23-ood}{0.9853}
\defnum{lidlaw/cov-8-64-sift}{0.9977}
\defnum{lidlaw/cov-8-64-t2i-in}{0.9953}
\defnum{lidlaw/cov-8-64-t2i-ip-in}{0.9964}
\defnum{lidlaw/cov-8-64-t2i-ip-ood}{0.9543}
\defnum{lidlaw/cov-8-64-t2i-ood}{0.9281}
\defnum{lidlaw/cov-8-64-wiki-in}{0.9959}
\defnum{lidlaw/cov-8-64-wiki-ood}{0.9926}
\defnum{lidlaw/err-deep10m}{0.0000}
\defnum{lidlaw/err-glove}{0.0021}
\defnum{lidlaw/err-laion-ood}{0.0003}
\defnum{lidlaw/err-max}{0.0039}
\defnum{lidlaw/err-naive-deep10m}{0.0001}
\defnum{lidlaw/err-naive-glove}{0.0055}
\defnum{lidlaw/err-naive-laion-ood}{0.0006}
\defnum{lidlaw/err-naive-pubmed23-ood}{0.0084}
\defnum{lidlaw/err-naive-sift}{0.0004}
\defnum{lidlaw/err-naive-t2i-ip-ood}{0.0421}
\defnum{lidlaw/err-naive-t2i-ood}{0.0672}
\defnum{lidlaw/err-naive-wiki-ood}{0.0033}
\defnum{lidlaw/err-pubmed23-ood}{0.0039}
\defnum{lidlaw/err-sift}{0.0003}
\defnum{lidlaw/err-t2i-ip-ood}{0.0233}
\defnum{lidlaw/err-t2i-ood}{0.0440}
\defnum{lidlaw/err-wiki-ood}{0.0009}
\defnum{lidlaw/exact-frac-max}{1.1}
\defnum{lidlaw/laion-adaptive-cov-max}{0.9963}
\defnum{lidlaw/laion-adaptive-cov-min}{0.9946}
\defnum{lidlaw/laion-adaptive-mass-saving-max}{36}
\defnum{lidlaw/laion-adaptive-mass-saving-min}{23}
\defnum{lidlaw/laion-beyond-p99}{1}
\defnum{lidlaw/laion-cov-in-8-64}{0.9913}
\defnum{lidlaw/laion-cov-ood-8-64}{0.9920}
\defnum{lidlaw/lid-med-deep10m}{21.1}
\defnum{lidlaw/lid-med-glove}{53.0}
\defnum{lidlaw/lid-med-laion-in}{24.5}
\defnum{lidlaw/lid-med-laion-ood}{24.1}
\defnum{lidlaw/lid-med-pubmed23-in}{22.8}
\defnum{lidlaw/lid-med-pubmed23-ood}{25.8}
\defnum{lidlaw/lid-med-sift}{18.4}
\defnum{lidlaw/lid-med-t2i-in}{16.9}
\defnum{lidlaw/lid-med-t2i-ip-in}{18.3}
\defnum{lidlaw/lid-med-t2i-ip-ood}{57.4}
\defnum{lidlaw/lid-med-t2i-ood}{58.4}
\defnum{lidlaw/lid-med-wiki-in}{35.6}
\defnum{lidlaw/lid-med-wiki-ood}{40.0}
\defnum{lidlaw/n-corpora}{7}
\defnum{lidlaw/n-sets}{13}
\defnum{lidlaw/pending}{0}
\defnum{lidlaw/pubmed23-adaptive-cov-max}{0.9940}
\defnum{lidlaw/pubmed23-adaptive-cov-min}{0.9926}
\defnum{lidlaw/pubmed23-adaptive-mass-saving-max}{34}
\defnum{lidlaw/pubmed23-adaptive-mass-saving-min}{29}
\defnum{lidlaw/pubmed23-beyond-p99}{4}
\defnum{lidlaw/pubmed23-cov-in-8-64}{0.9937}
\defnum{lidlaw/pubmed23-cov-ood-8-64}{0.9853}
\defnum{lidlaw/ratio-glove-sift}{2.72}
\defnum{lidlaw/ratio-glove-sift-bins}{8}
\defnum{lidlaw/ratio-laion-in-sift}{2.31}
\defnum{lidlaw/ratio-laion-in-sift-bins}{12}
\defnum{lidlaw/ratio-laion-in-wiki-in}{0.80}
\defnum{lidlaw/ratio-laion-in-wiki-in-bins}{13}
\defnum{lidlaw/ratio-laion-ood-laion-in}{0.99}
\defnum{lidlaw/ratio-laion-ood-laion-in-bins}{17}
\defnum{lidlaw/ratio-pubmed23-in-sift}{2.17}
\defnum{lidlaw/ratio-pubmed23-in-sift-bins}{11}
\defnum{lidlaw/ratio-pubmed23-in-wiki-in}{0.65}
\defnum{lidlaw/ratio-pubmed23-in-wiki-in-bins}{12}
\defnum{lidlaw/ratio-pubmed23-ood-pubmed23-in}{1.04}
\defnum{lidlaw/ratio-pubmed23-ood-pubmed23-in-bins}{16}
\defnum{lidlaw/ratio-sift-deep10m}{1.00}
\defnum{lidlaw/ratio-sift-deep10m-bins}{13}
\defnum{lidlaw/ratio-t2i-in-deep10m}{1.29}
\defnum{lidlaw/ratio-t2i-in-deep10m-bins}{14}
\defnum{lidlaw/ratio-t2i-ip-in-deep10m}{1.50}
\defnum{lidlaw/ratio-t2i-ip-in-deep10m-bins}{15}
\defnum{lidlaw/ratio-t2i-ip-in-t2i-in}{1.11}
\defnum{lidlaw/ratio-t2i-ip-in-t2i-in-bins}{14}
\defnum{lidlaw/ratio-t2i-ip-ood-glove}{1.48}
\defnum{lidlaw/ratio-t2i-ip-ood-glove-bins}{28}
\defnum{lidlaw/ratio-t2i-ip-ood-t2i-ip-in}{2.51}
\defnum{lidlaw/ratio-t2i-ip-ood-t2i-ip-in-bins}{5}
\defnum{lidlaw/ratio-t2i-ip-ood-t2i-ood}{1.05}
\defnum{lidlaw/ratio-t2i-ip-ood-t2i-ood-bins}{34}
\defnum{lidlaw/ratio-t2i-ood-glove}{1.40}
\defnum{lidlaw/ratio-t2i-ood-glove-bins}{28}
\defnum{lidlaw/ratio-t2i-ood-t2i-in}{2.70}
\defnum{lidlaw/ratio-t2i-ood-t2i-in-bins}{4}
\defnum{lidlaw/ratio-wiki-in-glove}{1.11}
\defnum{lidlaw/ratio-wiki-in-glove-bins}{18}
\defnum{lidlaw/ratio-wiki-in-sift}{3.27}
\defnum{lidlaw/ratio-wiki-in-sift-bins}{7}
\defnum{lidlaw/ratio-wiki-ood-wiki-in}{0.95}
\defnum{lidlaw/ratio-wiki-ood-wiki-in-bins}{16}
\defnum{lidlaw/rho-deep10m}{-0.57}
\defnum{lidlaw/rho-glove}{-0.71}
\defnum{lidlaw/rho-laion-in}{-0.59}
\defnum{lidlaw/rho-laion-ood}{-0.60}
\defnum{lidlaw/rho-max}{-0.71}
\defnum{lidlaw/rho-min}{-0.46}
\defnum{lidlaw/rho-pubmed23-in}{-0.55}
\defnum{lidlaw/rho-pubmed23-ood}{-0.58}
\defnum{lidlaw/rho-sift}{-0.52}
\defnum{lidlaw/rho-t2i-in}{-0.58}
\defnum{lidlaw/rho-t2i-ip-in}{-0.57}
\defnum{lidlaw/rho-t2i-ip-ood}{-0.46}
\defnum{lidlaw/rho-t2i-ood}{-0.47}
\defnum{lidlaw/rho-wiki-in}{-0.52}
\defnum{lidlaw/rho-wiki-ood}{-0.48}
\defnum{lidlaw/spread-deep10m}{7}
\defnum{lidlaw/spread-glove}{18}
\defnum{lidlaw/spread-laion-in}{5}
\defnum{lidlaw/spread-laion-ood}{6}
\defnum{lidlaw/spread-max}{18}
\defnum{lidlaw/spread-min}{4}
\defnum{lidlaw/spread-pubmed23-in}{5}
\defnum{lidlaw/spread-pubmed23-ood}{5}
\defnum{lidlaw/spread-sift}{5}
\defnum{lidlaw/spread-t2i-in}{6}
\defnum{lidlaw/spread-t2i-ip-in}{6}
\defnum{lidlaw/spread-t2i-ip-ood}{4}
\defnum{lidlaw/spread-t2i-ood}{4}
\defnum{lidlaw/spread-wiki-in}{4}
\defnum{lidlaw/spread-wiki-ood}{4}
\defnum{lidlaw/t2i-adaptive-cov-max}{0.9689}
\defnum{lidlaw/t2i-adaptive-cov-min}{0.9523}
\defnum{lidlaw/t2i-adaptive-mass-saving-max}{1}
\defnum{lidlaw/t2i-adaptive-mass-saving-min}{1}
\defnum{lidlaw/t2i-beyond-p99}{86}
\defnum{lidlaw/t2i-cov-in-8-64}{0.9953}
\defnum{lidlaw/t2i-cov-ood-8-64}{0.9281}
\defnum{lidlaw/t2i-ip-adaptive-cov-max}{0.9779}
\defnum{lidlaw/t2i-ip-adaptive-cov-min}{0.9762}
\defnum{lidlaw/t2i-ip-adaptive-mass-saving-max}{2}
\defnum{lidlaw/t2i-ip-adaptive-mass-saving-min}{2}
\defnum{lidlaw/t2i-ip-beyond-p99}{66}
\defnum{lidlaw/t2i-ip-cov-in-8-64}{0.9964}
\defnum{lidlaw/t2i-ip-cov-ood-8-64}{0.9543}
\defnum{msmarco/cand-1e7}{285,892}
\defnum{msmarco/cand-1e8}{309,191}
\defnum{msmarco/drift}{0.055}
\defnum{msmarco/drift-1e7}{0.9539}
\defnum{msmarco/drift-1e8}{0.8987}
\defnum{msmarco/n-d}{100,000,000}
\defnum{msmarco/qps-1e8-b16-ks128}{440}
\defnum{msmarco/qps-1e8-b16-ks256}{235}
\defnum{msmarco/qps-1e8-b16-ks64}{820}
\defnum{msmarco/qps-1e8-b8-ks128}{575}
\defnum{msmarco/qps-1e8-b8-ks256}{308}
\defnum{msmarco/qps-1e8-b8-ks64}{1{,}057}
\defnum{msmarco/recall-1e7-b16-ks128}{0.9212}
\defnum{msmarco/recall-1e7-b16-ks256}{0.9539}
\defnum{msmarco/recall-1e7-b16-ks64}{0.8728}
\defnum{msmarco/recall-1e7-b8-ks128}{0.8623}
\defnum{msmarco/recall-1e7-b8-ks256}{0.9111}
\defnum{msmarco/recall-1e7-b8-ks64}{0.7983}
\defnum{msmarco/recall-1e8-b16-ks128}{0.8644}
\defnum{msmarco/recall-1e8-b16-ks256}{0.8987}
\defnum{msmarco/recall-1e8-b16-ks64}{0.8178}
\defnum{msmarco/recall-1e8-b8-ks128}{0.8199}
\defnum{msmarco/recall-1e8-b8-ks256}{0.8602}
\defnum{msmarco/recall-1e8-b8-ks64}{0.7695}
\defnum{ood/adaptive-cov-b16}{0.9939}
\defnum{ood/adaptive-cov-b32}{0.9948}
\defnum{ood/adaptive-cov-b8}{0.9934}
\defnum{ood/cov-in-8-32}{0.9869}
\defnum{ood/cov-in-8-64}{0.9959}
\defnum{ood/cov-ood-8-32}{0.9775}
\defnum{ood/cov-ood-8-64}{0.9926}
\defnum{ood/err-naive-8-32}{0.0094}
\defnum{ood/err-naive-8-64}{0.0033}
\defnum{ood/err-table-8-32}{0.0035}
\defnum{ood/err-table-8-64}{0.0009}
\defnum{ood/exact-frac-max}{0.1}
\defnum{ood/fixed-cov-b16}{0.9911}
\defnum{ood/fixed-cov-b32}{0.9921}
\defnum{ood/fixed-cov-b8}{0.9911}
\defnum{ood/kappa-hi}{124}
\defnum{ood/kappa-lo}{479}
\defnum{ood/lid-med-in}{35.6}
\defnum{ood/lid-med-ood}{40.0}
\defnum{ood/miss-ratio-8-64}{1.8}
\defnum{ood/miss-reduction-max}{34}
\defnum{ood/miss-reduction-min}{26}
\defnum{ood/rho-max}{0.52}
\defnum{rq/deep-ceiling}{0.9838}
\defnum{rq/deep-d}{96}
\defnum{rq/deep-evp-cov}{0.9718}
\defnum{rq/deep-evp-diff}{-0.0121}
\defnum{rq/deep-evptt-cov}{0.9504}
\defnum{rq/deep-evptt-diff}{-0.0335}
\defnum{rq/deep-float-cov}{0.9839}
\defnum{rq/deep-nq}{2{,}000}
\defnum{rq/deep-screenK-evpft}{5{,}000}
\defnum{rq/deep-screenK-sq4}{2{,}000}
\defnum{rq/deep-screenK-sq8}{50}
\defnum{rq/deep-sq4-cov}{0.9839}
\defnum{rq/deep-sq4-diff}{+0.0000}
\defnum{rq/deep-sq8-cov}{0.9839}
\defnum{rq/deep-sq8-diff}{+0.0000}
\defnum{rq/deep-sq8-ndiff}{5}
\defnum{rq/gain-evp-d1024}{13.0}
\defnum{rq/gain-evp-d96}{5.1}
\defnum{rq/gain-sq8-d1024}{3.8}
\defnum{rq/gain-sq8-d96}{2.8}
\defnum{rq/glove-ceiling}{0.8618}
\defnum{rq/glove-d}{100}
\defnum{rq/glove-evp-cov}{0.7883}
\defnum{rq/glove-evp-diff}{-0.0731}
\defnum{rq/glove-evptt-cov}{0.7157}
\defnum{rq/glove-evptt-diff}{-0.1456}
\defnum{rq/glove-float-cov}{0.8613}
\defnum{rq/glove-nq}{2{,}000}
\defnum{rq/glove-screenK-evpft}{2{,}000}
\defnum{rq/glove-screenK-sq4}{200}
\defnum{rq/glove-screenK-sq8}{20}
\defnum{rq/glove-sq4-cov}{0.8541}
\defnum{rq/glove-sq4-diff}{-0.0072}
\defnum{rq/glove-sq8-cov}{0.8613}
\defnum{rq/glove-sq8-diff}{+0.0000}
\defnum{rq/glove-sq8-ndiff}{49}
\defnum{rq/nstar-2}{6.1}
\defnum{rq/nstar-4}{4.8}
\defnum{rq/nstar-8}{3.4}
\defnum{rq/nstar-float32}{1.2}
\defnum{rq/page-d768-evp}{20}
\defnum{rq/page-d768-evpsplit}{21}
\defnum{rq/page-d768-f32}{1}
\defnum{rq/page-d768-sq4}{10}
\defnum{rq/page-d768-sq8}{5}
\defnum{rq/page-d96-evp}{146}
\defnum{rq/page-d96-evpsplit}{170}
\defnum{rq/page-d96-f32}{10}
\defnum{rq/page-d96-sq4}{78}
\defnum{rq/page-d96-sq8}{40}
\defnum{rq/page-gain-d96}{4.25}
\defnum{rq/sift-ceiling}{0.9862}
\defnum{rq/sift-d}{128}
\defnum{rq/sift-float-cov}{0.9867}
\defnum{rq/sift-nq}{2{,}000}
\defnum{rq/sift-screenK-sq4}{50}
\defnum{rq/sift-screenK-sq8}{20}
\defnum{rq/sift-sq4-cov}{0.9866}
\defnum{rq/sift-sq4-diff}{-0.0001}
\defnum{rq/sift-sq8-cov}{0.9867}
\defnum{rq/sift-sq8-diff}{+0.0000}
\defnum{rq/sift-sq8-ndiff}{4}
\defnum{rq/sq8-worst-ci}{0.0008}
\defnum{rq/sq8-worst-diff}{0.0000}
\defnum{rq/sq8-worst-ndiff}{49}
\defnum{rq/wiki-ceiling}{0.9864}
\defnum{rq/wiki-d}{1024}
\defnum{rq/wiki-evp-cov}{0.9841}
\defnum{rq/wiki-evp-diff}{-0.0018}
\defnum{rq/wiki-evptt-cov}{0.9827}
\defnum{rq/wiki-evptt-diff}{-0.0032}
\defnum{rq/wiki-float-cov}{0.9859}
\defnum{rq/wiki-nq}{2{,}000}
\defnum{rq/wiki-screenK-evpft}{200}
\defnum{rq/wiki-screenK-sq4}{50}
\defnum{rq/wiki-screenK-sq8}{20}
\defnum{rq/wiki-sq4-cov}{0.9857}
\defnum{rq/wiki-sq4-diff}{-0.0002}
\defnum{rq/wiki-sq8-cov}{0.9859}
\defnum{rq/wiki-sq8-diff}{+0.0000}
\defnum{rq/wiki-sq8-ndiff}{0}
\defnum{s4/d2-floor-cap-mb}{128}
\defnum{s4/d2-floor-qps}{0.5}
\defnum{s4/d2-floor-recall}{0.9925}
\defnum{s4/d2-floor-threads}{2}
\defnum{s4/d2-qps-16g}{25.6}
\defnum{s4/d2-route-ms}{9.1}
\defnum{s4/d3-cov-diff-16g}{+0.0000}
\defnum{s4/d3-extra-io-mb}{4.2}
\defnum{s4/d3-floor-b-per-object}{0.10}
\defnum{s4/d3-floor-cap-mb}{96}
\defnum{s4/d3-floor-qps}{1.6}
\defnum{s4/d3-floor-recall}{0.9925}
\defnum{s4/d3-floor-threads}{2}
\defnum{s4/d3-qps-16g}{29.1}
\defnum{s4/d3-recall-diff-16g}{+0.0000}
\defnum{s4/d3-route-ms}{10.0}
\defnum{s4/d3-router-mib}{3.6}
\defnum{s4/d3-router-ratio}{53}
\defnum{s4/e1-cov-diff-16g}{+0.0000}
\defnum{s4/e1-recall-diff-16g}{+0.0000}
\defnum{s4/e1-route-ms-float}{9.1}
\defnum{s4/e1-route-ms-sq8}{9.1}
\defnum{s5/128m-t2-u8-cov}{0.9925}
\defnum{s5/128m-t2-u8-qps}{1.9}
\defnum{s5/128m-t2-u8-recall}{0.9925}
\defnum{s5/128m-t2-u8-route-ms}{169.2}
\defnum{s5/16g-t64-cov-diff}{+0.0000}
\defnum{s5/16g-t64-recall-diff}{+0.0000}
\defnum{s5/16g-t64-route-ms-float}{9.1}
\defnum{s5/16g-t64-u8-cov}{0.9936}
\defnum{s5/16g-t64-u8-qps}{30.3}
\defnum{s5/16g-t64-u8-recall}{0.9904}
\defnum{s5/16g-t64-u8-route-ms}{8.7}
\defnum{s5/256m-t4-cov-diff}{+0.0000}
\defnum{s5/256m-t4-recall-diff}{+0.0000}
\defnum{s5/256m-t4-route-ms-float}{90.7}
\defnum{s5/256m-t4-u8-cov}{0.9925}
\defnum{s5/256m-t4-u8-qps}{3.7}
\defnum{s5/256m-t4-u8-recall}{0.9925}
\defnum{s5/256m-t4-u8-route-ms}{87.3}
\defnum{s5/512m-t8-cov-diff}{+0.0000}
\defnum{s5/512m-t8-recall-diff}{+0.0000}
\defnum{s5/512m-t8-route-ms-float}{48.1}
\defnum{s5/512m-t8-u8-cov}{0.9925}
\defnum{s5/512m-t8-u8-qps}{6.9}
\defnum{s5/512m-t8-u8-recall}{0.9925}
\defnum{s5/512m-t8-u8-route-ms}{47.3}
\defnum{s5/96m-t2-u8-cov}{0.9925}
\defnum{s5/96m-t2-u8-qps}{1.9}
\defnum{s5/96m-t2-u8-recall}{0.9925}
\defnum{s5/96m-t2-u8-route-ms}{168.0}
\defnum{s5/float-router-mib}{188.7}
\defnum{s5/router-ratio}{2.39}
\defnum{s5/u8-floor-b-per-object}{0.10}
\defnum{s5/u8-floor-cap-mb}{96}
\defnum{s5/u8-floor-qps}{1.9}
\defnum{s5/u8-floor-recall}{0.9925}
\defnum{s5/u8-floor-threads}{2}
\defnum{s5/u8-router-mib}{78.9}
\defnum{sat/graft-deep100m}{0.9992}
\defnum{sat/hnsw-deep100m}{0.9989}
\defnum{sat/hnsw-deep100m-cold-qps}{540}
\defnum{sat/hnsw-deep100m-qps}{531}
\defnum{sat/hnsw-glove}{0.9863}
\defnum{sat/hnsw-glove-cold-qps}{239}
\defnum{sat/hnsw-glove-qps}{268}
\defnum{sat/hnsw-sift}{0.9991}
\defnum{sat/hnsw-sift-cold-qps}{6{,}140}
\defnum{sat/hnsw-sift-qps}{4{,}166}
\defnum{sat/misi-glove}{0.9803}
\defnum{sat/spann-int8}{0.9561}
\defnum{scale/evals-1e6-995}{21,285}
\defnum{scale/evals-1e6-998}{30,603}
\defnum{scale/evals-1e7-995}{50,514}
\defnum{scale/evals-1e7-998}{76,391}
\defnum{scale/evals-1e8-995}{185,856}
\defnum{scale/evals-1e8-998}{185,856}
\defnum{scale/evals-1e9-995}{373,192}
\defnum{scale/evals-1e9-998}{696,138}
\defnum{scale/exp-hi-995}{0.57}
\defnum{scale/exp-hi-998}{0.39}
\defnum{scale/exp-lo-995}{0.38}
\defnum{scale/exp-lo-998}{0.40}
\defnum{scale/exp-top-995}{0.30}
\defnum{scale/exp-top-998}{0.57}
\defnum{scale/frac-1e6-995}{2.128}
\defnum{scale/frac-1e6-998}{3.060}
\defnum{scale/frac-1e7-995}{0.505}
\defnum{scale/frac-1e7-998}{0.764}
\defnum{scale/frac-1e8-995}{0.186}
\defnum{scale/frac-1e8-998}{0.186}
\defnum{scale/frac-1e9-995}{0.037}
\defnum{scale/frac-1e9-998}{0.070}
\defnum{sift/flat-max-qps}{6{,}977}
\defnum{sift/flat-max-recall}{0.9993}
\defnum{sift/interleaved-seeds}{3}
\defnum{sift/kb}{16}
\defnum{sift/matched-qps}{8{,}836}
\defnum{sift/matched-recall}{0.9992}
\defnum{sift/matched-speedup}{2.1}
\defnum{sift/qps-at-099}{27{,}614}
\defnum{sift/recall-at-099}{0.9917}
\defnum{sift/seeds}{5}
\defnum{sift/speedup-floor}{0.9893}
\defnum{sift/speedup-hi}{2.1}
\defnum{sift/speedup-lo}{1.4}
\defnum{split/bytes-per-query}{1312}
\defnum{split/identical}{identical}
\defnum{split/recall}{0.9978}
\defnum{split/router-mib}{19.5}
\defnum{sq8/gain-deep10m}{2.8}
\defnum{sq8/gain-max}{2.8}
\defnum{upward/agree-d1-ks64}{0.9995}
\defnum{upward/agree-d2-ks64}{0.9843}
\defnum{upward/compose-err}{0.0008}
\defnum{upward/max-recall-loss}{0.0009}
\defnum{upward/qps-d1-ks256}{1{,}793}
\defnum{upward/qps-d1-ks64}{4{,}829}
\defnum{upward/qps-d2-ks256}{1{,}341}
\defnum{upward/qps-d2-ks64}{1{,}874}
\defnum{upward/ratio}{50}
\defnum{upward/recall-d1-ks256}{0.9981}
\defnum{upward/recall-d1-ks64}{0.9874}
\defnum{upward/recall-d2-ks256}{0.9975}
\defnum{upward/recall-d2-ks64}{0.9866}
\defnum{upward/resident-flat-mib}{854}
\defnum{upward/resident-mib}{17.1}
\defnum{upward/thru-ks256}{0.75}
\defnum{upward/thru-ks64}{0.39}

\newtheorem{proposition}{Proposition}

\newtheorem{remark}{Remark}

\newcommand{\solo}{\textsf{SOLO}}
\newcommand{\misi}{\textsf{misi}}
\newcommand{\graft}{\textsf{GRAFT}}
\newcommand{\kb}{k_b}
\newcommand{\ks}{k_s}
\newcommand{\kslocal}{k_s'}
\newcommand{\blocal}{b'}

\title{\solo: Certified-Recall Metric Similarity Search\\
with Scan-Only Sampled Inverted Lists}
\author{Edgar Ch\'avez\\ Computer Science Department, CICESE\\ \texttt{elchavez@cicese.edu.mx}}
\date{October 4, 2026}

\begin{document}
\maketitle

\begin{abstract}
In every fast nearest-neighbor index, recall is measured, never predicted: each operating point is tuned by serving it against ground truth. \solo{} is an index whose recall is computed from the index itself, before any query is served.

\solo{} is an inverted file whose vocabulary is a uniform random
sample $S$ of the database, $|S|=\alpha n$. At build time each object
$x$ is searched once in a router over $S$ for its $\kb$ nearest sample
points, and that ranked list, $\mathrm{sig}(x)$, is the only thing
stored about $x$. At serve time two dials are chosen without touching
the stored data: a replication $b\le\kb$, under which $x$ is posted to
the first $b$ entries of $\mathrm{sig}(x)$, and a per-query $\ks$, the
number of nearest sample points the query $q$ is routed to. Serving
reads the $b$-lists of those $\ks$ sample points, evaluates the true
distance $d(q,\cdot)$ on every object in their union, and returns the
$k$ smallest. There is no beam, no vote, no pruning bound: the only
approximate step is the choice of which lists are read.

This makes recall a counting statement. Write $T_b(y)$ for the first
$b$ entries of $\mathrm{sig}(y)$ and $R_{\ks}(q)$ for the $\ks$ sample
points $q$ is routed to. A true $k$-nearest neighbor $y$ of $q$ is
returned if and only if $T_b(y)\cap R_{\ks}(q)\neq\emptyset$: if the
two sets share a sample point, $y$ lies in a list that is scanned in
full and its exact distance ranks it; if they share none, $y$ is never
evaluated and nothing downstream can recover it. Recall@$k$ is
therefore the probability of that intersection over the query
distribution and over the $k$ true neighbors, which we call coverage,
$\mathrm{cov}(b,\ks)$. The equality is exact up to ties at rank $k$,
independent of the router (replacing it by exact search over $S$ moves
recall by at most $10^{-4}$), and independent of the scan kernel as
long as the kernel re-ranks with the true distance.

Because the intersection is a test on stored integers, coverage is
computable without serving. The certificate is obtained in three
steps. (1) Draw $m$ queries from the operator's own workload and
compute their exact $k$ nearest neighbors once by brute force; this
pass is the only cost that grows with $n$. (2) Route each query once to
its $\ks^{\max}$ nearest sample points. (3) For every pair $(q,y)$ and
every $(b,\ks)$ on the grid, test whether $\mathrm{sig}(y)[1..b]$ and
$R(q)[1..\ks]$ intersect; the fraction of pairs that do is
$\mathrm{cov}(b,\ks)$. Because $b$ and $\ks$ index prefixes of the two
stored rankings, one pass over the pairs yields the whole $(b,\ks)$
surface in $m\,k\,\kb\,\ks^{\max}$ integer comparisons, seconds at
$m=10^4$, including configurations never deployed and values of $b$
never materialized as lists. For the recursive index the test composes
level by level: $y$ is returned iff it is covered at every level on
the way to a leaf.

The certificate is then checked, not assumed. The same queries are
served at the same $(b,\ks)$ and recall is measured against the same
ground truth; the residual is the certificate's error. On SIFT-1M,
GloVe-1.2M and Deep-10M at $10^4$ queries it is at most $10^{-4}$
except where exact ground truth itself has ties at rank $k$ (SIFT,
$9\times10^{-4}$, with zero membership misses). On Deep-100M it is
identical to four decimals with exact and with approximate routing. On
MS MARCO Web Search at $10^8$, $768$-dimensional embeddings under inner
product with queries from a different distribution than the passages,
$\num{cert/msmarco100m-points}$ operating points from recall
$\num{msmarco/recall-1e8-b8-ks64}$ to $\num{cert/msmarco100m-best-recall}$
agree to $\num{cert/msmarco100m-max-err}$, so the certificate is right
where recall is poor as well as where it is high. On Deep-1B at
$b=8$, $\ks=512$ under a 16\,GB cap, coverage $\num{deep1b/s2/cov-same}$ less a
measured screen loss of $\num{deep1b/s2/screen-loss}$ equals the served
$\num{deep1b/s2/recall}$. The two-level recursive index lands on its
composed prediction to $2\times10^{-4}$ on GloVe and $8\times10^{-4}$
on Deep-100M. The certificate's own sampling precision is measured by a
clustered bootstrap over queries: at coverage $\ge0.99$ the $95\%$
interval has half-width $\le\num{certboot/halfwidth-max-99}$, and the
clustered standard error exceeds the binomial one by at most
$\num{certboot/ratio-max-99}\times$, rising to
$\num{certboot/ratio-max}\times$ where coverage is
$\num{certboot/glove/min-cov}$ because misses concentrate in hard
queries.

What the certificate guarantees is the expected recall@$k$ on the
query distribution it was computed on; it is an identity, not a bound,
and not a per-query statement. A different workload needs its own
certificate, which costs one brute-force pass and no serving, and the
index carries a per-query provision for queries from outside the
certified distribution, from the order statistics of the routed
distances. Under updates the integer part is maintained in place (an
insert adds one signature, a delete removes one); the ground-truth
part is refreshed with the database when a re-stated figure is wanted.
No navigable graph has an analogous object: its recall is a property
of the traversal, so the only way to learn it is to run it, cell by
cell, and a rebuilt graph must be re-served.

The rest is the same rule applied recursively: a list that outgrows a bound is sampled and split like the database, and so is the vocabulary itself, which is what drives resident memory down. Deep-100M is served at recall $\num{capped/1g-recall-d2}$ from $\num{capped/peak-gb}$\,GB of enforced resident memory, Deep-1B at $\num{deep1b/floor-recall}$ from $\num{s4/d3-floor-cap-mb}$\,MB. Inserts are one search, deletes are exact, and throughput reaches $\num{deep100m/speedup-hi}\times$ a tuned HNSW at $10^8$. Every number is reported against HNSW, DiskANN, \graft{}, NAPP, \misi{}, SPANN, ScaNN and RaBitQ on the same hardware and ground truth.
\end{abstract}

\section{Introduction}\label{sec:intro}

In every fast nearest-neighbor index, recall is measured and never
predicted. A graph's recall is an emergent property of its construction,
known only after serving against ground truth, one operating point at a
time; and at $10^8$ objects the dial that buys it runs out --- an
exhaustively tuned HNSW on Deep-100M stops at recall@10
$\num{sat/hnsw-deep100m}$ at $ef{=}4000$, with no operating point above it
at any cost. Two further frictions follow the same architecture: a
disk-resident graph pins gigabytes of navigation state (DiskANN needs
12\,GB for Deep-100M on our hardware and fails below 9.6\,GB), and
deleting from a navigable graph is delicate enough that update layers are
systems in their own right \cite{singh2021freshdiskann,xu2023spfresh}.

\solo{} (\emph{Scan-Only Lists, One rule})\footnote{Officially
\textsf{SOLOVINO}, after the archetypal stray of Spanish folk naming
(\emph{``solo vino''} --- it came alone): no clustering, no training, no
graph; a random sample and one rule.} is an index whose recall is a
property of the stored index. It is an inverted file over a vocabulary
that is a uniform random sample $S$ of the database itself,
$|S|=\alpha n$. Each object's $\kb$ nearest sample points are stored once,
ranked; at serve time an object is posted under the first $b\le\kb$ of
them, a query is routed to its $\ks$ nearest sample points, and the
touched lists are scanned exhaustively with the true distance. There is no
vote, no beam, no pruning bound: the only approximation in the serving
path is which lists are read. A true neighbor is therefore returned if
and only if it shares a sample point with the query, which is a test on
stored integers, so one ground-truth pass over a sample of the operator's
own queries \emph{certifies} the recall of every $(b,\ks)$ configuration
at once, with nothing served (\S\ref{sec:identity}). The certificate
matches served recall to four decimals from $10^6$ to $10^9$ objects,
including MS MARCO Web Search at $10^8$ --- $768$-dimensional text
embeddings under inner product with queries from a different
distribution than the passages --- where the index performs worst and
the certificate is still right to $\num{cert/msmarco100m-max-err}$
(\S\ref{sec:msmarco}).

One rule completes the design. Random sampling does not balance list
sizes, so a list that outgrows a bound is treated as what it is, a
database: sampled, split by the same rank-$b$ assignment, routed the
same way, its leaves always scanned (\S\ref{sec:recursion}). Pointed at
the router --- itself a database of $\alpha n$ points --- the rule divides
the resident state by $1/\alpha$ per level, and that is what the memory
results rest on. Deep-100M is served at recall
$\num{capped/1g-recall-d2}$ from $\num{capped/peak-gb}$\,GB of enforced
resident memory and from $\num{capped/solo-floor512-mb}$\,MB on
$\num{capped/solo-floor512-threads}$ threads; Deep-1B at
$\num{deep1b/floor-recall}$ from $\num{deep1b/floor-cap-mb}$\,MB, and
from $\num{s4/d3-floor-cap-mb}$\,MB at depth 3
(\S\ref{sec:capped},~\S\ref{sec:deep1b}). Six other systems run through
the same enforced caps either stop below 6--48\,GB or saturate below
recall $0.88$ (\S\ref{sec:capped-baselines}). In RAM, a list-major
stream of 4-bit codes with exact re-ranking serves SIFT-1M at
$\num{sift/matched-recall}$ at $\num{sift/matched-qps}$ queries/s on a
laptop and reaches $\num{deep100m/speedup-hi}\times$ a tuned HNSW at
$10^8$, with operating points to the right of where the graph saturates
(\S\ref{sec:compare}). Inserts are one search and deletes are exact.

\paragraph{Contributions.}
\begin{enumerate}
\item \textbf{The recall certificate.} Under scan-only serving, recall
  equals a coverage probability computable from the stored signatures
  (Proposition~\ref{obs:identity}); one ground-truth pass certifies the
  whole operating surface. Measured to hold within $5\times10^{-4}$ at
  $10^6$--$10^8$ on three datasets and within $10^{-4}$ at $10^8$ and
  $10^9$ on two more, with a measured precision (\S\ref{sec:identity}),
  and carrying a per-query provision for queries from outside the
  certified distribution, fitted on seven corpora and thirteen query sets
  (\S\ref{sec:oos}).
\item \textbf{The equal-work law.} Coverage is a function of the touched
  posting mass $b\cdot\ks$, nearly independent of its factorization; its
  level is a one-scalar signature of the dataset, and the evaluations
  needed at fixed recall grow as $n^{0.4\text{--}0.6}$ over four measured
  decades (\S\ref{sec:laws}).
\item \textbf{The one-rule recursion.} Bounded lists without clustering,
  validated against an ablation that attaches a graph index to each
  oversized list: the recursion wins on recall and throughput together,
  with a build $25\times$ cheaper (\S\ref{sec:recursion},
  \S\ref{sec:ablation}).
\item \textbf{The resident-memory law.} Serving from storage keeps three
  terms resident --- a router of $\alpha^{L}n$ points, per-thread scratch,
  page cache --- and the rule applied to the router divides the first by
  $1/\alpha$ per level. The law is measured through depth 3 at $10^9$,
  where the floor is $\num{s4/d3-floor-b-per-object}$ resident bytes per
  object, and it is stated with what it does not say about $n$
  (\S\ref{sec:ceiling}).
\end{enumerate}

\noindent The serving stack that makes these claims practical --- lossless
8-bit and 4-bit screens with exact re-ranking, a list-major code stream
with rank-ordered postings, an mmap-able router, posting-ordered leaf
blocks read by explicit \texttt{pread} --- is described in
\S\ref{sec:sq}--\ref{sec:kernels}, and every number in the paper is
measured on the same machines, queries and ground truth as HNSW, DiskANN,
\graft{}, NAPP, vote-based serving of the identical postings (\misi{}),
SPANN, ScaNN and RaBitQ.

\paragraph{Scope.} The index is defined over a black-box distance
\cite{chavez2001searching} and one interchangeable subroutine, an
approximate nearest-neighbor index over the $\alpha n$ sample points used
only to route; replacing it with exhaustive search over the sample moves
no recall reported here by more than $10^{-4}$
(Remark~\ref{rem:router}). The representation --- a vocabulary drawn from
the database at a fixed fraction of it, and each object's ranked signature
over it --- was introduced by \misi{} \cite{chavez2026misi}, which served
it by voting; \solo{} keeps the representation and replaces the serving
theory, and \S\ref{sec:design} is self-contained. The paper does not claim
efficiency per gigabyte against DiskANN, which is faster where it fits;
it does not claim competitive high recall on MS MARCO at $10^8$, where
the certificate predicts the shortfall; filtered search is analyzed, not
measured (\S\ref{sec:filter}); and the throughput comparison is on one
machine (\S\ref{sec:setup}).

Section~\ref{sec:design} defines the index, the certificate and the
serving stack. Section~\ref{sec:laws} gives the measured laws and
\S\ref{sec:oos} the certificate out of sample.
Section~\ref{sec:exp} is the evaluation: the operating surface, serving
under enforced caps against six systems, the in-RAM comparison, MS MARCO
and Deep-1B. Section~\ref{sec:ceiling} states the resident-memory law and
\S\ref{sec:where} what it places \solo{} next to. Related work,
limitations and the conclusion follow.

\section{The \solo{} index}\label{sec:design}

\begin{figure}[t]
\centering
\resizebox{0.80\textwidth}{!}{
\begin{tikzpicture}[
  font=\small,
  obj/.style={circle, draw=black!35, fill=black!10, inner sep=0pt,
              minimum size=3.4pt},
  smp/.style={circle, draw=sloblue, fill=sloblue, inner sep=0pt,
              minimum size=4.6pt},
  sel/.style={circle, draw=sloblue, fill=white, inner sep=0pt,
              minimum size=5.4pt, line width=1.1pt},
  bar/.style={draw=black!35, fill=black!12, minimum height=5.4pt,
              inner sep=0pt, anchor=west},
  hit/.style={draw=sloblue, fill=sloblue!25, minimum height=5.4pt,
              inner sep=0pt, anchor=west},
  big/.style={draw=sloorange, fill=sloorange!22, minimum height=5.4pt,
              inner sep=0pt, anchor=west},
  note/.style={font=\scriptsize, color=black!70},
  tag/.style={font=\scriptsize\itshape, color=black!55},
  ar/.style={-{Latex[length=4pt,width=3pt]}, draw=black!55, line width=0.5pt},
  dar/.style={-{Latex[length=4pt,width=3pt]}, draw=sloorange,
              line width=0.6pt, dashed}
]

\begin{scope}[shift={(0,0)}]
  \foreach \p in {(0.10,0.05),(0.72,0.10),(0.22,0.52),(0.60,0.62),
                  (0.90,0.44),(0.36,0.86),(1.02,0.68),
                  (0.16,0.28),(0.52,0.06),(0.96,0.16),(0.30,0.70),(0.68,0.36)}
     \node[obj] at \p {};
  \foreach \p in {(0.42,0.30),(0.05,0.80),(0.78,0.80)} \node[smp] at \p {};
  \node[note, align=center] at (0.54,-0.46)
       {database $X$\\[-2pt] \tiny $n$ objects};
\end{scope}

\draw[ar] (1.35,0.45) -- node[tag, above, pos=0.5, yshift=0pt] {sample}
          (2.25,0.45);

\begin{scope}[shift={(2.45,0)}]
  \foreach \p in {(0.10,0.20),(0.45,0.62),(0.80,0.26),(0.28,0.86),(0.66,0.02)}
     \node[smp] at \p {};
  \node[note, align=center] at (0.45,-0.46)
       {vocabulary $S$\\[-2pt] \tiny $\alpha n$ sampled points};
\end{scope}

\draw[ar] (3.55,0.45) -- node[tag, above, pos=0.5] {rank-$b$}
                          node[tag, below, pos=0.5] {assign}
          (4.55,0.45);

\begin{scope}[shift={(4.95,0)}]
  \node[smp] at (-0.16,0.88) {}; \node[bar, minimum width=0.62cm] at (0,0.88) {};
  \node[sel] at (-0.16,0.66) {}; \node[hit, minimum width=0.95cm] at (0,0.66) {};
  \node[smp] at (-0.16,0.44) {}; \node[bar, minimum width=0.44cm] at (0,0.44) {};
  \node[sel] at (-0.16,0.22) {}; \node[hit, minimum width=0.78cm] at (0,0.22) {};
  \node[smp] at (-0.16,0.00) {}; \node[big, minimum width=1.62cm] at (0,0.00) {};
  \node[tag, color=sloorange, anchor=west] at (1.70,0.00) {$>\tau$};
  \node[note, align=center] at (0.52,-0.46)
       {posting lists $L_s$\\[-2pt] \tiny mean length $b/\alpha$};
  \node[tag, color=sloblue, anchor=west] at (1.05,0.44) {touched};
\end{scope}

\draw[dar] (6.60,-0.06) .. controls (7.05,-1.35) and (3.20,-1.60) ..
           (1.80,-0.70) .. controls (1.48,-0.24) and (1.66,0.10) .. (2.32,0.24);
\node[tag, color=sloorange, align=center] at (4.95,-1.62)
     {a list too big to scan is itself a database: the same rule again};

\node[note] at (-0.34,1.60) {query $q$};
\node[star, star points=5, star point ratio=2.2, draw=sloblue,
      fill=sloblue, inner sep=1.2pt] (q) at (0.50,1.60) {};
\draw[ar] (0.72,1.60) -- node[tag, above] {route to the $k_s$ nearest terms}
          (4.42,1.60);
\draw[ar, draw=sloblue, line width=0.6pt]
      (4.44,1.46) .. controls (4.62,1.14) and (4.74,0.94) .. (4.76,0.74);
\draw[ar, draw=sloblue, line width=0.6pt]
      (4.60,1.46) .. controls (4.94,1.02) and (4.90,0.52) .. (4.76,0.30);

\draw[ar] (7.15,0.45) -- node[tag, above, pos=0.5] {scan,}
                          node[tag, below, pos=0.5] {merge}
          (8.05,0.45);
\node[note, align=left, anchor=west] at (8.15,0.52)
     {exact top-$k$\\[-2pt] \tiny of everything scanned};
\node[tag, align=left, anchor=west] at (8.15,-0.16)
     {no vote, no beam,\\[-2pt] no pruning bound};

\end{tikzpicture}}
\caption{The whole index. A uniform random sample of the database is the
vocabulary; each object is posted to its $b$ nearest sample points; a
query is routed to its $\ks$ nearest sample points and the touched lists
are scanned with the true distance. The one rule that is not a
straight line is the dashed one: a list that outgrows $\tau$ is treated
as a database and decomposed by the same construction. Nothing in the
serving path ranks anything except by $d(q,\cdot)$, which is why recall
is Proposition~\ref{obs:identity}.}
\label{fig:arch}
\end{figure}

The index and its serving path are Algorithm~\ref{alg:solo} in full,
and Figure~\ref{fig:arch} is the same thing as a picture.
The subsections explain the four choices it contains --- a sampled
vocabulary, rank-$b$ assignment, exhaustively scanned leaves, and the
one recursion --- and what each one buys.

\begin{table}[t]
\centering\small
\begin{tabular}{@{}p{2.6cm}p{9.6cm}p{2.6cm}@{}}
\toprule
Name & Meaning & In this paper\\
\midrule
$X$, $n$ & the database and its size & $10^6$--$10^9$\\
$d$ & the distance, a black box: the index evaluates it and compares its values, nothing else & $\ell_2$, angular, inner product\\
$S$, $\alpha$ & the \emph{vocabulary}: a uniform random sample of $X$ with $|S|=\alpha n$ & $\alpha=2\%$\\
term & a sample point $s\in S$ in its role as a word of the vocabulary & \\
router & any high-recall ANN index over $S$; its only job is to find nearest terms & \graft{}\\
$\mathrm{sig}(x)$, $\kb$ & the \emph{signature} of $x$: its $\kb$ nearest terms, ranked by distance, stored once at build & $\kb=64$\\
$b$ & \emph{replication}: $x$ is posted to the first $b$ entries of $\mathrm{sig}(x)$; a prefix, so any $b\le\kb$ is a serve-time choice & $4$--$64$\\
$L_s(b)$ & the \emph{posting list} of term $s$: every $x$ with $s$ among its first $b$ terms; mean length $b/\alpha$ & \\
$\ks$ & the per-query \emph{dial}: $q$ is routed to its $\ks$ nearest terms and their lists are scanned & $32$--$512$\\
$\tau$ & the list bound: a list longer than $\tau$ is decomposed by the same rule, with its own $\blocal,\kslocal$ & \\
leaf & a list of length $\le\tau$ at any depth; always scanned in full with $d$ & \\
$\mathrm{cov}(b,\ks)$ & \emph{coverage}: the probability that a true neighbor and its query share a term & $=\mathrm{recall}$\\
certificate & a recall figure computed from the stored signatures and one ground-truth pass, not by serving & \S\ref{sec:identity}\\
\bottomrule
\end{tabular}

\caption{Every name the paper uses. The representation (a sampled
vocabulary and ranked signatures over it) was introduced by \misi{}
\cite{chavez2026misi}; nothing from that paper is needed here.}
\label{tab:names}
\end{table}

\paragraph{The index in one page.}
Table~\ref{tab:names} fixes the vocabulary, and a worked instance
fixes the orders of magnitude. Take $n=10^8$ objects under a black-box
distance $d$. \emph{Vocabulary.} Draw two million of them at random
($\alpha=2\%$); this set $S$ is the vocabulary, and a router --- any
high-recall ANN index --- is built over it. \emph{Build.} Each of the
$10^8$ objects is searched once in the router for its $\kb=64$
nearest sample points; the ranked list of their identifiers is its
signature (256 bytes at four bytes per identifier). That search is the
whole build: $10^8$ independent searches, in any order, on any number
of machines, with no step that revisits an earlier decision.
\emph{Posting lists} are the signatures read column-wise: the list of
term $s$ at replication $b$ holds every object with $s$ among the first
$b$ entries of its signature. Because the signature is ranked, the
lists for $b=8$ are prefixes of the lists for $b=16$; one build
materializes every $b\le\kb$, and $b$ is chosen when the index is
served, not when it is built. At $b=16$ the mean list holds
$16/0.02=800$ objects. \emph{Serving.} Route $q$ to its $\ks$ nearest
sample points --- one router search, the same operation the build
performs --- read those $\ks$ lists, evaluate $d(q,\cdot)$ on every
object in their union, and return the $k$ closest. At $\ks=256$ the
union holds about $256\times800\approx2\times10^5$ candidates before
deduplication, $0.2\%$ of the database, each evaluated exactly once.
\emph{Recursion.} A term whose list outgrows the bound $\tau$ (hubs
produce them) is handled by the same three steps inside the list:
sample it, post its members to their $\blocal$ nearest local sample
points, route an entering query to $\kslocal$ local lists. Leaves are
scanned in full at every depth. \emph{What is approximate.} Exactly
one thing: which lists are read. The router decides that, and
Remark~\ref{rem:router} measures how little its approximation matters.
Everything downstream of the choice of lists is exact, and
\S\ref{sec:identity} turns that into a recall figure the stored index
computes about itself.

\begin{algorithm}[t]
\caption{\solo{}. $d$ is the black-box distance;
$\textsc{Route}(q,S,m)$ returns the $m$ nearest points of $S$ to $q$
by any high-recall ANN index (Remark~\ref{rem:router}); and
$L_s(b)=\{x : s\in\mathrm{sig}(x)[1..b]\}$ is the posting list of $s$
at replication $b$, a prefix restriction of the stored signatures.}
\label{alg:solo}
\begin{algorithmic}[1]
\Procedure{Build}{$X,\alpha,\kb,\tau$}
  \State $S \gets$ uniform random sample of $X$, $|S| = \alpha|X|$
  \State build a router over $S$
  \ForAll{$x \in X$ \textbf{in parallel}}
    \State $\mathrm{sig}(x) \gets \textsc{Route}(x,S,\kb)$
      \Comment{ranked; stored once, serves every $b \le \kb$}
  \EndFor
  \ForAll{$s \in S$ with $|L_s(\kb)| > \tau$}
    \State \Call{Build}{$L_s(\kb),\alpha',\kb',\tau$}
      \Comment{a list too big to scan is a database}
  \EndFor
\EndProcedure
\Statex
\Procedure{Serve}{$q,b,\ks,k$}
  \State $V \gets \emptyset$
  \ForAll{$s \in \textsc{Route}(q,S,\ks)$}
    \If{$L_s(b)$ is a leaf}
      \State $V \gets V \cup L_s(b)$ \Comment{scan it all; no pruning}
    \Else
      \State $V \gets V \cup {}$\Call{Serve$_{L_s}$}{$q,\blocal,\kslocal,k$}
    \EndIf
  \EndFor
  \State \Return the $k$ objects of $V$ with smallest $d(q,\cdot)$
    \Comment{deduplicated; exact over $V$}
\EndProcedure
\end{algorithmic}
\end{algorithm}

Two properties are visible in the pseudocode and are the source of
everything that follows. \textsc{Build} contains no loop that revisits
a decision: no clustering iteration, no graph refinement, no insertion
order --- the objects' signatures are independent, so construction is
embarrassingly parallel and bitwise reproducible. \textsc{Serve}
contains no comparison between candidates other than the true
distance: no vote, no scoreboard, no beam, no bound. The second
property is what makes recall a theorem (\S\ref{sec:identity}); the
first is what makes the index cheap to build and exact to update
(\S\ref{sec:dynamics}).

\subsection{Vocabulary and assignment}\label{sec:vocab}
Given $X$ with $|X|=n$ in a metric space with black-box distance $d$,
draw a uniform sample $S\subseteq X$, $|S|=\alpha n$ (all experiments:
$\alpha=2\%$). Build any high-recall ANN index over $S$ (the
\emph{router}). For each $x\in X$, one router search yields its ranked
signature: the $\kb$ nearest sample points in order (we build with
$\kb{=}64$). \emph{Assignment is truncation}: posting $x$ to its $b$
nearest sample points, $b\le\kb$, is a serve-time choice --- the
inverted file for any $b$ is a prefix-restriction of the stored ranked
signatures, so one build serves every $b$. Mean list length is
$b/\alpha$. Construction is $n$ independent searches: embarrassingly
parallel, deterministic for fixed seed, with no iterative refinement of
any kind. Sample points are indexed like any other object; each
carries itself at rank~1 of its own signature, so a query whose true
neighbor \emph{is} a vocabulary point retrieves it from that point's
own list (measured: probability $1.0$) --- the double role costs
$0.3\%$ of scanned work in re-evaluated routing distances and one of
$b$ replication slots for the $\alpha$ fraction of objects that are
sampled.

\begin{remark}[Router independence]\label{rem:router}
The router is the only approximate subroutine in \solo{}, and the
index does not inherit its error. On Deep-100M we replaced query
routing with \emph{exact} top-$\ks$ over the two-million-point sample
(chunked exhaustive search) and re-served every frontier
configuration: the served recalls are identical to four decimals at
$b\in\{16,32\}$ and $\ks\in\{256,384,512\}$ (the routed term sets
themselves overlap at $0.9992$--$0.9999$). Whatever recall \solo{}
attains is therefore a property of the assignment geometry, not of
the router --- which we report with \graft{} \cite{graft2026}, and
which any high-recall index over $\alpha n$ points may supply. The
same holds on the build side by the insurance argument of
\S\ref{sec:insurance}: a signature that misses one sample point
forfeits one of $b$ replicas, and coverage depends on the union of
$b$.
\end{remark}

\subsection{Scan-only serving and the recall identity}\label{sec:identity}

\paragraph{What is stored, and what is decided at serve time.}
The build leaves exactly one record per object: $\mathrm{sig}(x)$, the
$\kb$ sample points nearest to $x$, ranked by distance. Nothing else
about $x$ is consulted by the serving path except $x$ itself, when its
distance to a query is evaluated. Two dials are then chosen when the
index is served, and changing either touches no stored data. The
replication $b\le\kb$ decides how many of the $\kb$ entries are
\emph{active}: object $x$ is in the posting list $L_s(b)$ of sample
point $s$ iff $s$ is among the first $b$ entries of $\mathrm{sig}(x)$.
Since the entries are ranked, $L_s(b)\subseteq L_s(b')$ for $b\le b'$:
the lists for every $b$ are prefixes of the lists for $\kb$, and one
build materializes all of them. The dial $\ks$ decides how many lists a
query reads: $q$ is routed to its $\ks$ nearest sample points,
$R_{\ks}(q)$, by the same router search the build used for the
objects.

\paragraph{Serving, step by step.}
Given $q$, $b$, $\ks$ and $k$: (1) route $q$ to $R_{\ks}(q)$; (2) form
the candidate set $V(q)=\bigcup_{s\in R_{\ks}(q)}L_s(b)$, with each
object counted once; (3) evaluate $d(q,x)$ for every $x\in V(q)$;
(4) return the $k$ objects of $V(q)$ with the smallest distance. The
cost of a query is $|V(q)|$ distance evaluations plus the routing
search, and $|V(q)|\le\ks\cdot\max_s|L_s(b)|$, with mean list length
$b/\alpha$. A list that outgrows the bound $\tau$ is not scanned
whole but decomposed by the same rule (\S\ref{sec:recursion}); at
every depth, the lists that are finally read are read in full.

\paragraph{What is approximate, and what is not.}
Steps (2)--(4) are exact: step (3) evaluates the true distance on every
candidate, and step (4) is an exact selection over those values. Step
(1) is approximate only in which sample points it returns, and
Remark~\ref{rem:router} measures that approximation at $\le10^{-4}$ of
recall. Compare this with a navigable graph, where recall depends on
the entry point, the beam width, the order in which the beam visits
candidates, the visited-set bookkeeping and the pruning of the
neighborhood lists at build time --- none of which is a property of
stored data, and all of which interact. \solo{}'s serving path has one
decision that can lose a neighbor, and it is made before any distance
to a database object is evaluated: which lists to read.

\paragraph{The membership event.}
Fix a query $q$ and one of its true $k$ nearest neighbors $y$. Write
$T_b(y)$ for the first $b$ entries of $\mathrm{sig}(y)$, the sample
points $y$ is posted to. Then $y\in V(q)$ iff some $s\in R_{\ks}(q)$
has $y\in L_s(b)$, i.e.\ iff $T_b(y)\cap R_{\ks}(q)\neq\emptyset$. If
the intersection is non-empty, $y$ is a candidate, its exact distance
is evaluated, and since it is a true $k$-nearest neighbor no more than
$k-1$ candidates can rank ahead of it, so it is returned (up to ties
at rank $k$). If the intersection is empty, $y$ is not a candidate, is
never evaluated, and no later step can return an object it has not
evaluated. There is no third case. Recall@$k$ is the fraction of
$(q,y)$ pairs for which the intersection is non-empty, averaged over
the query distribution; we call that fraction the \emph{coverage}
$\mathrm{cov}(b,\ks)$, and the proposition below states the equality.
Both $T_b(y)$ and $R_{\ks}(q)$ are short lists of integers, one stored
and one produced by routing; whether they intersect is decided by
comparing integers, not by running a search over the database.

\paragraph{How the certificate is measured in this paper.}
Every certified figure reported here was checked by serving. The
protocol is the same at every scale: a certification sample of $m$
queries ($m=10^4$ at $10^6$--$10^8$, $500$ at $10^9$) with exact
$k$-nearest neighbors computed by brute force; coverage computed from
the stored signatures and the routed term sets by
Algorithm~\ref{alg:certify}, with exact routing over $S$ where the
router's contribution is to be separated and with the deployed router
where the deployed figure is wanted; then the same queries served at
the same $(b,\ks)$ through the kernels of \S\ref{sec:kernels}, with
recall measured against the same ground truth. The residual
$\mathrm{cov}-\mathrm{recall}$ is the certificate's error, and the
paper reports it rather than absorbing it: ties at rank $k$ are
counted in the ground truth itself; the loss of a quantized screen
that keeps only $R$ survivors is measured separately and, at the
serving defaults, is zero (\S\ref{sec:sq}) or stated
(\S\ref{sec:deep1b}); and at $10^9$ the certificate was computed
(stage 1 of \S\ref{sec:deep1b}) before any serving kernel existed for
that scale, so the served numbers of stages 2--4 landed inside a
surface written down in advance. The precision of the certificate as a
statistic --- a mean over $m$ queries whose $k$ neighbors are not
independent --- is measured by a clustered bootstrap over queries
(Table~\ref{tab:certboot}).

\begin{proposition}[Recall identity]\label{obs:identity}
Write $T_b(x)\subseteq S$ for the $b$ nearest sample points of an
object $x$ and $R_{\ks}(q)\subseteq S$ for the $\ks$ nearest sample
points of a query $q$, and let
\[
\mathrm{cov}(b,\ks)\;=\;\Pr\bigl[\,T_b(y)\cap R_{\ks}(q)\neq\emptyset\,\bigr]
\]
for $y$ a true $k$-NN of $q$: the probability over the query
distribution that a true neighbor and the query share at least one
term. Under scan-only serving,
$\mathrm{recall@}k=\mathrm{cov}(b,\ks)$, exactly, up to ties at
rank $k$.
\end{proposition}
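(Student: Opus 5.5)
The plan is to prove the identity pointwise, for a fixed query, and then average over the query distribution. Fix $q$ and let $N_k(q)$ be its true $k$ nearest neighbors. When there is a tie at rank $k$, fix $N_k(q)$ by a deterministic tie-break; this choice is the content of the phrase ``up to ties''. Recall@$k$ for $q$ is $|N_k(q)\cap A(q)|/k$, where $A(q)$ is the answer set returned by \textsc{Serve}. Coverage for $q$ is $|\{y\in N_k(q): T_b(y)\cap R_{\ks}(q)\neq\emptyset\}|/k$. Taking expectations over $q$, and over $y$ uniform in $N_k(q)$, turns the per-query equality into $\mathrm{recall@}k=\mathrm{cov}(b,\ks)$ by linearity. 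So everything reduces to the claim that, for each $y\in N_k(q)$,
\[
y\in A(q)\iff T_b(y)\cap R_{\ks}(q)\neq\emptyset .
\]

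First I establish that the candidate set satisfies $y\in V(q)\iff T_b(y)\cap R_{\ks}(q)\neq\emptyset$. This is unwinding definitions: $V(q)=\bigcup_{s\in R_{\ks}(q)}L_s(b)$, and $y\in L_s(b)$ iff $s\in T_b(y)$. Here $R_{\ks}(q)$ must be read as the set the deployed router actually returns, not the exact $\ks$ nearest sample points. The identity then needs no assumption on the router, and Remark~\ref{rem:router} only concerns how close that set is to the exact one.

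Next I handle the two directions. ($\Leftarrow$) If the intersection is empty, then $y\notin V(q)$. Step (4) selects only from $V(q)$, so $A(q)\subseteq V(q)$ and $y\notin A(q)$. ($\Rightarrow$) If $y\in V(q)$, every $x\in V(q)$ ranked strictly ahead of $y$ has $d(q,x)<d(q,y)$. Since $y\in N_k(q)$, at most $k-1$ objects of the whole database satisfy this, so at most $k-1$ candidates precede $y$. The exact top-$k$ selection over $V(q)$ therefore returns $y$, except when some candidate outside $N_k(q)$ is tied with $y$ at $d(q,y)$ and wins the tie-break. That is exactly the excluded tie-at-rank-$k$ case. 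Deduplication in step (2) matters here, since a multiply-posted object must not consume two slots.

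The step I expect to be the main obstacle is extending this to the recursive index of \S\ref{sec:recursion}, because there an oversized list is not scanned whole. I would argue by induction on depth. At a non-leaf term $s$, \textsc{Serve}$_{L_s}$ is the same procedure applied to the database $L_s(b)$, so the candidates it contributes from $s$ are exactly the members of $L_s(b)$ covered by the local event at $s$. Then $y\in V(q)$ iff there is a root-to-leaf path along which $y$ is covered at every level, and the selection argument above goes through unchanged. For the proposition as stated, I would define $T_b$ and $R_{\ks}$ to be the effective leaf-level sets, so that the statement holds verbatim. The flat case, with every list a leaf, is then the base of the induction.
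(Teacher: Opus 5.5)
Your argument is correct and is essentially the paper's proof. You show $y\in V(q)$ iff $T_b(y)\cap R_{\ks}(q)\neq\emptyset$ by unwinding the posting lists, then exact top-$k$ selection over $V(q)$ returns every candidate true neighbor (your count of at most $k-1$ strictly closer objects spells out what the paper only asserts), and averaging over queries and neighbors gives the identity, with ties at rank $k$ as the only slack. Two small points: your $(\Rightarrow)$ and $(\Leftarrow)$ labels are swapped, though both directions are proved; and the recursive extension goes beyond the proposition, which is stated for the flat index with $T_b,R_{\ks}\subseteq S$ --- if you keep it, note that \textsc{Serve}$_{L_s}$ returns only its local top-$k$ rather than every locally covered member, which by the same counting argument still keeps every global true $k$-NN.
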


\begin{proof}
Let $V(q)=\bigcup_{s\in R_{\ks}(q)}L_s$ be the scanned set, where $L_s$
is the posting list of $s$. By construction $y\in L_s$ iff
$s\in T_b(y)$, so $y\in V(q)$ iff $T_b(y)\cap R_{\ks}(q)\neq\emptyset$.
Serving evaluates $d(q,\cdot)$ on every element of $V(q)$ --- the true
distance, no bound, no early exit --- and returns the exact top-$k$ of
$V(q)$. Hence a true $k$-NN $y$ is returned iff $y\in V(q)$, i.e.\ iff
$y$ and $q$ share a term; no other event can lose it and no other
event can recover it. Taking expectations over queries and over the
$k$ true neighbors gives the identity. The only slack is at the
boundary: if $d(q,y)$ equals the $k$-th distance, another object at
the same distance may be returned in its place, which is why the
identity is stated up to ties.
\end{proof}

The proof is short because the design is: a scan-only path has exactly
one failure mode, and it is a membership question about stored data
rather than a question about search dynamics. Every quantity a graph
must be trusted on --- entry points, beam width, pruning, the order in
which candidates are met --- is absent, so there is nothing left for
recall to depend on.

\begin{figure}[t]
\centering
\resizebox{0.96\textwidth}{!}{
\begin{tikzpicture}[
  font=\small,
  smp/.style={circle, draw=black!45, fill=black!15, inner sep=0pt, minimum size=4.6pt},
  rq/.style={circle, draw=sloblue, fill=sloblue, inner sep=0pt, minimum size=4.8pt},
  ty/.style={circle, draw=sloorange, fill=sloorange, inner sep=0pt, minimum size=4.8pt},
  both/.style={circle, draw=sloorange, fill=sloblue, inner sep=0pt, minimum size=6.6pt, line width=1.3pt},
  note/.style={font=\scriptsize, color=black!70},
  quiet/.style={font=\scriptsize\itshape, color=black!55},
  ar/.style={-{Latex[length=4pt,width=3pt]}, draw=black!55, line width=0.5pt}
]
\begin{scope}
  \foreach \p in {(0.25,0.30),(3.75,0.45),(3.95,2.35),(0.35,2.30),(3.55,1.45),(1.05,0.25),(2.95,2.75)}
     \node[smp] at \p {};
  \draw[sloblue, densely dashed, line width=0.5pt] (1.10,1.50) circle (0.80);
  \node[star, star points=5, star point ratio=2.2, draw=sloblue, fill=sloblue,
        inner sep=1.4pt] at (1.10,1.50) {};
  \node[note, anchor=east] at (0.96,1.50) {$q$};
  \foreach \p in {(0.55,1.05),(0.65,1.95),(1.45,2.05),(1.50,0.95)} \node[rq] at \p {};
  \node[quiet, color=sloblue, anchor=west, align=left] at (0.00,2.85)
       {$R_{\ks}(q)$: the $\ks$ terms $q$ is routed to};
  \draw[sloorange, densely dashed, line width=0.5pt] (2.25,1.30) circle (0.70);
  \node[circle, draw=black, fill=white, inner sep=0pt, minimum size=5.2pt, line width=0.7pt] at (2.25,1.30) {};
  \node[note, anchor=west] at (3.02,1.30) {$y$, a true neighbor};
  \draw[black!55, line width=0.4pt] (2.33,1.30) -- (2.98,1.30);
  \foreach \p in {(2.65,0.85),(2.75,1.75)} \node[ty] at \p {};
  \node[quiet, color=sloorange, anchor=west, align=left] at (1.60,0.20)
       {$T_b(y)$: the $b$ terms $y$ is posted to};
  \node[both] at (1.70,1.70) {};
  \draw[ar] (3.20,2.20) -- (1.82,1.80);
  \node[note, anchor=west, align=left] at (3.25,2.22)
       {shared term $s$:\\[-2pt] $y\in L_s$, and $L_s$ is scanned};
  \node[note, align=left, anchor=north west] at (0.00,-0.15)
       {$y$ is returned $\iff T_b(y)\cap R_{\ks}(q)\neq\emptyset$\\[1pt]
        {\itshape\color{black!55}one event, decided by stored integers}};
\end{scope}
\begin{scope}[shift={(7.2,0)}]
  \node[note, anchor=west, align=left] at (-0.30,2.95)
       {certification sample: $m$ queries, their $k$ exact neighbors};
  \foreach \j in {1,...,8} {
    \node[note, anchor=south] at (0.35+\j*0.32-0.32,2.52) {\tiny $y_{\j}$};
  }
  \foreach \i in {1,...,6} {
    \node[note, anchor=east] at (0.00,2.30-\i*0.30+0.30) {$q_{\i}$};
  }
  \foreach \i/\row in {0/{1,1,1,1,1,1,1,1},1/{1,1,1,0,1,1,1,1},2/{1,1,1,1,1,1,1,1},
                       3/{1,1,1,1,1,1,0,1},4/{1,1,1,1,1,1,1,1},5/{1,0,1,1,1,1,1,1}} {
    \foreach \v [count=\j from 0] in \row {
      \ifnum\v=1
        \node[circle, fill=sloblue!70, inner sep=0pt, minimum size=4.8pt]
             at (0.35+\j*0.32,2.30-\i*0.30) {};
      \else
        \node[circle, draw=sloorange, line width=0.9pt, inner sep=0pt, minimum size=4.8pt]
             at (0.35+\j*0.32,2.30-\i*0.30) {};
      \fi
    }
  }
  \node[circle, fill=sloblue!70, inner sep=0pt, minimum size=4.8pt] at (3.25,2.30) {};
  \node[note, anchor=west] at (3.38,2.30) {$y$ and $q$ share a term};
  \node[circle, draw=sloorange, line width=0.9pt, inner sep=0pt, minimum size=4.8pt] at (3.25,2.00) {};
  \node[note, anchor=west] at (3.38,2.00) {they do not: $y$ is missed};
  \node[note, anchor=north west, align=left] at (-0.30,0.45)
       {$\mathrm{cov}(b,\ks)=\dfrac{\#\text{filled}}{m\,k}=\mathrm{recall@}k$\\[4pt]
        {\itshape\color{black!55}the brute-force pass that finds the $y$'s is paid once;}\\[-1pt]
        {\itshape\color{black!55}every $(b,\ks)$ after that is a prefix intersection}};
\end{scope}
\end{tikzpicture}}
\caption{The certificate, drawn. Left: one query $q$ and one of its
true neighbors $y$. The $\ks$ terms $q$ is routed to and the $b$ terms
$y$ is posted to are both sets of stored integers; $y$ is returned if
and only if the two sets intersect, because the list of a shared term
is scanned in full and $y$ is in it. Right: the same test over a
certification sample of $m$ queries and their $k$ exact neighbors. The
fraction of pairs that share a term is $\mathrm{cov}(b,\ks)$, and by
Proposition~\ref{obs:identity} it is the recall the index will serve.
The ground-truth pass that supplies the $y$'s is paid once; every
$(b,\ks)$ afterwards is a prefix intersection.}
\label{fig:cert}
\end{figure}

\begin{algorithm}[t]
\caption{\textsc{Certify}: the recall surface from stored integers.
$Q$ is a sample of $m$ queries from the workload; $\mathrm{sig}(\cdot)$
are the stored signatures of Algorithm~\ref{alg:solo}; the
brute-force pass on line~2 is the only step whose cost grows with
$n$, and it is paid once.}
\label{alg:certify}
\begin{algorithmic}[1]
\Procedure{Certify}{$Q,k,\kb,\ks^{\max}$}
  \ForAll{$q\in Q$} $N_k(q)\gets$ exact $k$-NN of $q$ in $X$
     \Comment{one brute-force pass; $m\cdot n$ distances}
  \EndFor
  \ForAll{$q\in Q$} $R(q)\gets\textsc{Route}(q,S,\ks^{\max})$, ranked
     \Comment{or exact top-$\ks^{\max}$ over $S$, to exclude the router}
  \EndFor
  \ForAll{$b\le\kb$, $\ks\le\ks^{\max}$}
    \State $\mathrm{cov}(b,\ks)\gets\dfrac{1}{mk}\sum_{q\in Q}\sum_{y\in N_k(q)}
      \bigl[\,\mathrm{sig}(y)[1..b]\cap R(q)[1..\ks]\neq\emptyset\,\bigr]$
  \EndFor
  \State \Return the surface $\mathrm{cov}(\cdot,\cdot)$
    \Comment{$=\mathrm{recall@}k$ at every $(b,\ks)$, Proposition~\ref{obs:identity}}
\EndProcedure
\end{algorithmic}
\end{algorithm}

\paragraph{Computing a certificate.}
Figure~\ref{fig:cert} is the membership event as a picture, and
Algorithm~\ref{alg:certify} is the procedure. An operator draws $m$
queries from their own workload, computes their exact $k$ nearest
neighbors once by brute force, routes each query once, and then
evaluates the indicator of Figure~\ref{fig:cert} for every pair and
every $(b,\ks)$ of interest. After the brute-force pass the work is
$m\,k\,\kb\,\ks^{\max}$ integer comparisons --- seconds on one core at
$m=10^4$ --- and it covers the entire operating surface, including
configurations that will never be deployed and values of $b$ that
have never been materialized as lists. For the recursive index the
indicator composes: inside a decomposed list the same test is applied
with the local signatures and the local routing, and $y$ is returned
iff it is covered at every level on its way to a leaf; the measured
depth-2 index lands on that composed prediction to $2\times10^{-4}$
(\S\ref{sec:recursion}).

\paragraph{What the certificate guarantees, and what it does not.}
\begin{enumerate}
\item \emph{It is an identity, not a bound.} The certified figure is
  the expected recall@$k$ the index serves on the query distribution
  the sample was drawn from --- not a lower bound on it, not a proxy
  correlated with it. It is an expectation over queries: a single query
  may be covered or not, and the per-query question is the subject of
  \S\ref{sec:oos}.
\item \emph{It is tied to a query distribution.} $Q$ must come from
  the workload whose recall is being stated. A different workload needs
  its own certificate, which costs one brute-force pass and no serving.
  The index also carries a per-query provision for queries from outside
  the certified distribution (\S\ref{sec:oos}).
\item \emph{It does not depend on the router.} Coverage is a property
  of the stored signatures and the routed term sets; replacing the
  router by exact search over $S$ moves the served recall by
  $\le10^{-4}$ (Remark~\ref{rem:router}), and line~3 of
  Algorithm~\ref{alg:certify} can use exact routing to state the
  router's contribution separately (\S\ref{sec:msmarco} does).
\item \emph{It does not depend on the kernel, provided the kernel is
  exact over the scanned set.} The quantized screens of
  \S\ref{sec:sq} re-rank $R$ survivors with the true distance and are
  measured bit-identical to the full-precision scan at the serving
  defaults; a kernel that dropped a candidate before evaluating
  $d(q,\cdot)$ would break the identity, which is why every kernel in
  the paper is checked against the full scan.
\item \emph{It holds up to ties at rank $k$, and for the $k$ of the
  ground-truth pass.} A pass at $k_{\max}$ certifies every
  $k\le k_{\max}$.
\item \emph{It survives updates in its integer part.} An insert adds
  one signature and a delete removes one; the term sets of every other
  object are untouched. The ground-truth part is a property of the
  database and is refreshed with it when a re-stated figure is wanted
  --- one brute-force pass, no serving.
\item \emph{Its precision is that of a mean over $m$ queries, and it
  is measured.} Pairs within a query are not independent, so the
  binomial standard error over $mk$ pairs understates the uncertainty.
  Table~\ref{tab:certboot} resamples queries
  (\num{certboot/sift/nboot} clustered bootstrap replicates, exact
  routing, no serving) on three corpora: at coverage $\ge0.99$ the
  $95\%$ interval has half-width at most
  \num{certboot/halfwidth-max-99} and the clustered standard error
  exceeds the binomial one by at most
  $\num{certboot/ratio-max-99}\times$; the ratio grows with the miss
  rate, reaching $\num{certboot/ratio-max}\times$ at GloVe's lowest
  cell (coverage \num{certboot/glove/min-cov}). The agreements reported
  below should be read at that scale.
\end{enumerate}

\begin{table}[t]
\centering\small
\begin{tabular}{lcccccc}
\toprule
dataset & $(b,\ks)$ & cov & 95\% CI & SE, clustered & SE, binomial & ratio (cell; grid range)\\
\midrule
SIFT-1M & $(16,32)$ & 0.9972 & [0.9968, 0.9976] & 0.00019 & 0.00017 & 1.15; 0.98--1.35\\
GloVe-1.2M & $(32,128)$ & 0.9905 & [0.9898, 0.9912] & 0.00038 & 0.00031 & 1.23; 0.98--2.06\\
Deep-10M & $(4,256)$ & 0.9944 & [0.9938, 0.9949] & 0.00030 & 0.00024 & 1.25; 0.96--1.51\\
\bottomrule
\end{tabular}

\caption{Precision of the certificate. For each corpus, the cell of the
$(b,\ks)$ grid of Table~\ref{tab:surface} (extended to $\ks\le512$)
whose coverage first exceeds $0.99$: its $95\%$ interval from a
clustered bootstrap over the \num{certboot/sift/m} certification
queries, the clustered standard error, the binomial standard error over
$mk$ pairs, and their ratio at that cell and across the whole grid.
Coverage is computed from stored signatures with exact routing over
$S$; nothing is served.}
\label{tab:certboot}
\end{table}

\paragraph{Reading a certificate.}
Table~\ref{tab:surface} is one. Every entry is a
$\mathrm{cov}(b,\ks)$ computed by Algorithm~\ref{alg:certify} from
stored signatures; no entry was obtained by serving. An operator who
needs recall $0.99$ on GloVe reads along the row for the $\ks$ they
can afford, finds the first $b$ that meets it, and serves that
configuration knowing its recall before the first query arrives. The
contrast with a navigable graph is not that the graph cannot be
measured --- it can, against the same ground truth --- but that each
cell of its tuning table must be obtained by serving that cell, a
cell never served has no number, the numbers are specific to the
traversal parameters that produced them, and a rebuilt index must be
re-served. For \solo{} the table is a property of the stored data.

Empirically the identity holds to four decimals on GloVe and Deep; on
SIFT the residual ($\le 9\times10^{-4}$) is fully attributed to ties
(integer coordinates; zero membership misses; 28/2000 queries have
tied 10-th distances within the exact ground truth itself). The
sharpest test is the hardest corpus at the largest scale: on MS MARCO
Web Search at $10^{8}$ --- $768$-dimensional text embeddings under
inner product, with queries drawn from a different distribution than
the passages --- the certificate computed from the stored signatures
alone matches the served recall to
$\num{cert/msmarco100m-max-err}$ at all
$\num{cert/msmarco100m-points}$ operating points, including those where
the served recall is poor ($\num{cert/msmarco100m-best-recall}$ at
best, \S\ref{sec:msmarco}). An index that predicts its own recall
before serving predicts it when the answer is unwelcome too, which is
the property that makes it worth deploying: routing exactly rather than
with the router costs a further
$\num{cert/msmarco100m-router-cost}$ of coverage, so the two sources of
loss are separately visible in the certificate.

\paragraph{The certificate, and what it costs.}
We call a recall figure \emph{certified} when it is computed from the
stored index rather than observed by running the system, and
Proposition~\ref{obs:identity} is what makes such a figure available:
it turns an empirical question into an arithmetic one
(Algorithm~\ref{alg:certify}). Two
consequences distinguish this from ordinary offline tuning. First, the
certified quantity \emph{is} the served recall rather than a proxy for
it --- Proposition~\ref{obs:identity} is an identity, not a bound or a
correlation. Second, the certificate is transferable: an operator
computes it on their own query distribution, which is the distribution
their recall will actually be drawn from, and re-certifies after a
workload shift without touching the index. For a navigable graph no
analogous object exists at any price: recall is a property of the
traversal, so the only way to learn it is to run it.

\paragraph{Out of sample.}
The certificate is an expectation over the query distribution it was
computed on. A query drawn from outside that distribution needs a
provision, and the index carries one --- the $\ks$ routed distances are
order statistics of a uniform sample of the query's own distance
distribution, so routing has already paid for a probe of the query's
neighborhood. Section~\ref{sec:oos} develops that provision and
measures it across seven corpora and thirteen query sets, including a
cross-modal shift where it fails.

\subsection{One rule: recursive decomposition}\label{sec:recursion}
Hubness skews list sizes (max/mean ${\sim}48\times$ on GloVe at every
$b$). \solo{} does not fight skew at build time; it applies its own
rule to any list larger than a bound $\tau$: sample the list at
$\alpha'$, post members to their $\blocal$ nearest \emph{local} sample
points, and route queries that enter the list to $\kslocal$ local
cells. Leaves --- lists below the bound, at any depth --- are always
scanned. Recall composes multiplicatively across levels, but so does
the replication insurance (\S\ref{sec:insurance}); measured inside
GloVe's oversized lists, local coverage reaches $0.9969$
($\blocal{=}8,\kslocal{=}32$) and $0.9994$ ($\kslocal{=}64$), and the
depth-2 index lands on its composed prediction to $2\times10^{-4}$.

\subsection{Quantized leaves}\label{sec:sq}
The leaf scan is a streaming pass over posting-ordered candidates ---
the computation SIMD integer arithmetic is best at. \solo{} uses the
simplest quantizer that could work: a global affine 8-bit grid (two
constants, one pass over the data, no training), or --- halving the
bytes streamed --- a per-dimension affine 4-bit grid ($2d$ constants),
packed two codes per byte. Serving screens with integer squared
distances, keeps the $R$ best \emph{distinct} candidates, and re-ranks
them at full precision; the returned ranking is exact over the
screened set. At every operating point measured --- three datasets,
$10^6$ to $10^8$ --- 8-bit codes with $R{=}40$ already reproduce the
full-precision results exactly, and we serve with $R{=}100$; the
coarser 4-bit screen needs $R{=}200$ to match on GloVe (SIFT is
unchanged at $R{=}100$), and all SQ4 rows in this paper use $R{=}200$.
For non-vector metric spaces the same architecture runs with plain
distances; quantization is an accelerator of the verifier, never part
of the core.

\paragraph{How far down does the stream go?} The screen is a sequential
pass, so its cost is bytes, and the question is how coarse a code still
recovers the pool's own ceiling after exact refine. We measured a third
rung below SQ4 --- a 2-bit ternary code (keep the $d/2$ largest $|u_i|$ as
$\pm1$, zero the rest), scored against a \emph{float} query, which leaves
the op count of an SQ8 dot product and divides its bytes by
four~\cite{connor2026evp}. Inside a routed pool at $\ks{=}32$, the pass set
needed to recover $99.9\%$ of the ceiling is $R{=}\num{rq/glove-screenK-sq8}$
for SQ8 but $\num{rq/wiki-screenK-evpft}$ at $d{=}1024$ and
$\num{rq/glove-screenK-evpft}$ at $d{=}100$: two bits genuinely rank worse,
by one to two orders of magnitude in $R$. It is still the cheaper design,
because the pass set is two orders of magnitude smaller than the pool ---
counting bytes moved per query, a 2-bit screen feeding an SQ8 re-rank
feeding the exact refine moves ${\sim}3.6\times$ less than an SQ8 screen at
$d{=}1024$ and ${\sim}2.9\times$ less at $d{=}100$ --- projected from the
measured pool and pass-set sizes with the gather penalty taken from this
paper's own SQ8-gather-versus-SQ4-stream measurement, not measured end to
end. The middle rung is what makes it work: it turns a wide
\emph{exact} gather into a wide SQ8 gather plus a narrow exact one. Two
practical corollaries: quantizing the query as well (a fully bitwise
product) is never worth it --- the query is one vector and keeping it float
buys one to seven points of coverage --- and the object id should leave the
code stream for a parallel array, since the screen never reads an id and
only the survivors do, which is the difference between
$\num{rq/page-d96-evp}$ and $\num{rq/page-d96-evpsplit}$ records per
$4$\,KB page at $d{=}96$, against $\num{rq/page-d96-sq8}$ for SQ8. These are $10^6$-scale laptop measurements
(\S\ref{sec:future}); the serving stack in this paper is SQ8 and SQ4
throughout.

The screen/scorer distinction is load-bearing, and we
measured its other side: SPANN built and served \emph{natively} in
Int8 on the identical 8-bit grid --- quantized distances as the final
scorer --- saturates at $\num{sat/spann-int8}$ on Deep-100M against float ground
truth ($58.2$\,h build, same machine), four points below its own
float build, while the same grid used as a screen with exact refine
is lossless. Quantization must screen, never score.

\subsection{Serving kernels and the memory model}\label{sec:kernels}
Serving state: router, postings, codes, vectors. Three facts matter at
$n\ge10^8$.

\emph{Visited state must be bits.} The scan deduplicates across
$\ks$ lists. An epoch-stamped 32-bit visited array costs
4\,B/object/thread --- 25.6\,GB at $10^8$ with 64 threads: invisible in
RAM benchmarks, fatal under a memory cap. A per-thread bitmap with
touched-clearing costs 12.5\,MB and is \emph{faster} (cache-resident;
$+15\%$ throughput). Deduplication must occur \emph{before} the
screen's $R$-heap: the most duplicated candidates are exactly the
nearest ones, and an undeduplicated heap fills with copies (we measured
the failure: recall $0.9982\to0.8394$, worsening with $\ks$).

\emph{The scan must stream, not gather.} With codes stored per
object, scanning a list gathers one cache line per posting from a
random address; with the codes \emph{duplicated in posting order}
(list-major: the record of list $\ell$'s $i$-th member sits at
position $i$ of $\ell$'s code stream) each of the $\ks$ lists is one
sequential segment, and the same layout is what the disk mode reads
(below). At $10^6$ on one laptop core the stream costs ${\sim}1.7$\,ns
per record at either 8 or 4 bits (instruction-bound: the 4-bit form
halves bytes, not instructions) plus $0.5$--$1.5\,\mu$s per segment
of prefetcher ramp-up, against $9$--$25$\,ns per gathered record; at all
ten cores the stream is bandwidth-bound and the 4-bit codes buy a
further $1.4\times$. The layout is written once, at $b{=}\kb$, with
each list ordered by the \emph{rank} its members assign to it --- the
objects for which $\ell$ is the nearest sample point first, then those
at rank 2, and so on. Two dials then cost nothing at build time: the
served $b$ is a prefix of every list, and the prefix may \emph{taper}
with routing position, $b(t)=\lceil b_0(1-t/\ks)^p\rceil$ for the
$t$-th routed list, $p{=}0$ being the rectangle of
Table~\ref{tab:solo-points}. A tapered frontier reads fewer records
than any rectangle of equal recall (on SIFT-1M, $0.9991$ from 54\% of
the records of the $(16,128)$ rectangle) --- the rank-$t$ lists far
from the query contribute mostly rank-1 members, which the frontier
keeps. The router is served at $ef{=}\ks$: raising it to $4\ks$ moves
no recall by more than $10^{-3}$ and costs four times the routing
time.

\emph{Load must not rebuild.} The router serializes to a flat,
position-independent, mmap-able image; loading is
$43$--$90\,\mu$s at every scale measured, versus seconds-to-minutes of
reconstruction, with bitwise-identical search. Under a memory cap the
mapped router is file-backed and reclaimable; a rebuilt router is
anonymous memory and is not.

\emph{Disk wants layout, not paging.} Serving the same index by
letting vector reads fault through \texttt{mmap} reads 6\,TB in 80
minutes under a 4\,GB cap (a 4\,KB page per 384\,B row, evicted before
reuse): an ${\sim}85\times$ amplification. \solo{}'s disk mode instead
\emph{duplicates codes into the lists} (SPANN's trade --- disk is the
cheap resource): leaf blocks of $[\mathrm{id}\,|\,\mathrm{code}]$
records in posting order, read sequentially into bounded per-thread
buffers, screened, then refined by ${\sim}R$ row reads from the single
full-precision file. At $(b,\ks)=(16,256)$ this is 37\,MB sequential
$+\;0.4$\,MB random per query, all explicit.

\subsection{The codes as a proximity sketch}\label{sec:sketch}
The ranked signatures serve a second role beyond routing: for objects
\emph{already indexed}, company overlap is a calibrated proximity
estimator computed by integer set intersection --- no vector access.
Measured on SIFT-1M (800 anchors), mean overlap at $b{=}64$ falls
smoothly from $38.8$ shared terms at neighbor rank 1 through $23.4$ at
rank 100 to $16.3$ at rank 1000, against a random-pair floor of
$0.35$ --- a dynamic range of ${\sim}110{:}1$, monotone across three
decades of rank; at $b{=}16$ a $\ge1$-shared-term test captures
$95\%$ of rank-100 pairs at a $1.8\%$ false-pair rate. Three uses
follow. (i) \emph{Result-set operations in the disk regime}:
re-ranking, diversification, or clustering of a returned candidate
set by code overlap costs zero reads where vector fetches are the
expensive resource (\S\ref{sec:capped}). (ii) \emph{Similarity
self-joins}: shared-term thresholds are a data-defined banding scheme
over the inverted file the index already maintains.
(iii) \emph{Standing queries}: a subscription encoded once matches
arriving objects' codes --- which their own inserts compute anyway ---
by intersection. One boundary: for a \emph{fresh} query,
encoding \emph{is} a search, so code comparison cannot accelerate
what produces it --- data-adapted resolution costs a search to encode,
the opposite end of the trade LSH \cite{indyk1998lsh} occupies; indeed \misi{}'s vote was
exactly code-overlap used as a ranking, and \S\ref{sec:compare}
shows scanning dominates it.

\subsection{Updates and determinism}\label{sec:dynamics}
Insert: one router search, $b$ list appends. Delete: removal from $b$
lists --- \emph{exact}, no tombstones, no graph repair, no rebalancing
protocol; no other object's representation changes. Every build and
every query is bitwise deterministic for a fixed seed at any thread
count; recall is machine-independent (our d0 and laptop runs agree bit
for bit, as they must).

\section{Measured laws}\label{sec:laws}
Three regularities of the index, each measured across datasets and
scales: coverage depends on the touched posting mass and not on how it
is split between $b$ and $\ks$; the work a recall target costs grows
sublinearly in $n$; and a covered neighbor is found in several touched
lists at once, which is what makes every lossy step downstream of
routing forgiving.

\subsection{The equal-work law and the regime spectrum}
Coverage collapses onto contours of the touched posting mass:
$\mathrm{cov}(b,\ks)\approx f_X(b\cdot\ks)$
(Figure~\ref{fig:equalwork}, Table~\ref{tab:surface}). On GloVe at ${\sim}140$k scanned
candidates, four different factorizations
$(b,\ks)\in\{(4,256),(8,128),(16,64),(64,16)\}$ cover
$\num{law/glove-contour1024-lo}$--$\num{law/glove-contour1024-hi}$;
the law reappears at Deep-100M
($(16,256)\approx(32,128)$ at $0.998$, equal throughput). Inside a
contour the split is free, so one chooses small $b$ (storage, build)
and buys recall with $\ks$ (routing over $\alpha n$ points is cheap).
The \emph{level} of $f_X$ is a one-scalar regime signature ---
$b\cdot\ks$ for coverage $0.99$: SIFT ${\sim}512$, Deep-10M
${\sim}1$--$2$k, GloVe ${\gtrsim}8$k --- the concentrated/spread
dichotomy of metric spaces, measurable from the index itself.

\begin{figure}[t]
\centering
\includegraphics[width=0.78\textwidth]{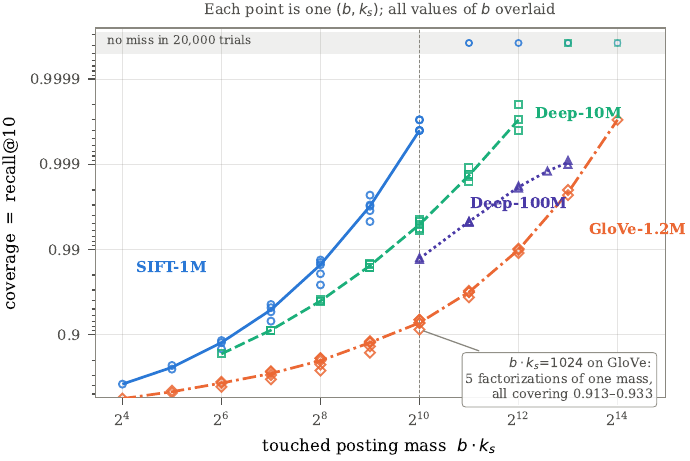}
\caption{The equal-work law. Every measured $(b,\ks)$ pair is plotted
against the posting mass it touches; the values of $b$ (1--64 on the
first three datasets, 8--32 on Deep-100M) are overlaid, not separated.
That they fall on one curve per dataset \emph{is} the law: within a
contour the factorization is free. The curves' horizontal offsets are
the regime spectrum --- SIFT reaches coverage $0.99$ at $b\cdot\ks$
${\sim}512$, GloVe needs ${\gtrsim}8$k. Deep-100M points are served
recalls, equal to coverage by Proposition~\ref{obs:identity}; the
others are coverage computed from signatures.}
\label{fig:equalwork}
\end{figure}

\begin{table}[t]
\centering\small
\begin{tabular}{lccccc}
\toprule
$\mathrm{cov}(b,\ks)$ & $b{=}4$ & $b{=}8$ & $b{=}16$ & $b{=}32$ & $b{=}64$\\
\midrule
SIFT, $\ks{=}64$ & 0.9849 & 0.9977 & 0.9997 & 1.0000 & 1.0000\\
SIFT, $\ks{=}128$ & 0.9965 & 0.9996 & 1.0000 & 1.0000 & 1.0000\\
GloVe, $\ks{=}128$ & 0.8626 & 0.9272 & 0.9683 & 0.9902 & 0.9980\\
GloVe, $\ks{=}256$ & 0.9126 & 0.9635 & 0.9890 & 0.9977 & 0.9997\\
\bottomrule
\end{tabular}

\caption{Coverage under bounded assignment ($\alpha{=}2\%$, seed 1,
2{,}000 queries, true 10-NN). By Proposition~\ref{obs:identity} these
\emph{are} served recalls.}
\label{tab:surface}
\end{table}

\subsection{Scale}
Work per query at fixed recall grows sublinearly in $n$, as
$n^{0.4\text{--}0.6}$ over the four measured decades, and the scanned
fraction of the database falls throughout. At fixed $(b,\ks)$ the level
of $f$ drifts down with $n$ ($0.9994\to0.9982$ at $(16,256)$ from
Deep-10M to Deep-100M); the drift is geometry, not router error
(Remark~\ref{rem:router}): finer cells at $|S|=2$M dilute the true
neighbor's membership. The dial compensates. At target $0.998$, each
scale interpolated on its own (evaluations, recall) frontier, a query
costs $\num{scale/evals-1e6-998}\to\num{scale/evals-1e7-998}\to
\num{scale/evals-1e8-998}\to\num{scale/evals-1e9-998}$ evaluations at
$n=10^{6}\dots10^{9}$ --- exponents $\num{scale/exp-lo-998}$,
$\num{scale/exp-hi-998}$ and $\num{scale/exp-top-998}$ per decade, the
last the steepest --- while the scanned fraction falls
$\num{scale/frac-1e6-998}\%\to\num{scale/frac-1e7-998}\%\to
\num{scale/frac-1e8-998}\%\to\num{scale/frac-1e9-998}\%$, so at $10^{9}$
a query at recall $0.998$ touches one part in $1400$ of the collection.
Two qualifications bound the series. The $10^{9}$ point is certified
rather than served --- coverage read from the stored signatures
(\S\ref{sec:deep1b}), which Proposition~\ref{obs:identity} equates with
recall; the served points at $10^{9}$ stop at
$\num{deep1b/floor-recall}$, below this target, and there the identity
holds to $\num{deep1b/s2/identity-err}$ --- and its interpolation
bracket is wide. And because the exponent rises with $n$, a fitted
exponent extrapolates optimistically past $10^{9}$; the paper makes no
claim there (\S\ref{sec:ceiling}). At the looser target $0.995$ the
series is
$\num{scale/evals-1e6-995}\to\num{scale/evals-1e7-995}\to\dots\to
\num{scale/evals-1e9-995}$; the $10^{8}$ sweep never dipped below recall
$0.998$, so its $0.995$ entry is an upper bound and no exponent is
fitted through it.

\subsection{Replication is insurance}\label{sec:insurance}
Among covered GloVe true neighbors at $\ks{=}64$, the count $j$ of
touched lists containing the neighbor has median ${\sim}13$; only
$2.5\%$ have $j{=}1$. A neighbor is lost only if all $j$ appearances
are missed: a local miss rate $\varepsilon$ costs
${\sim}\sum_j p_j\varepsilon^{j}$ globally. This is why the lossless
behavior of the SQ8 screen and the near-ceiling composition of the
recursion are not luck; the same $j$-structure recurses ($j$ median
3--8 inside decomposed lists).

\subsection{Choosing the assignment rule}\label{sec:rule}
Multi-assignment can also be done by distance ratio (``closure'',
SPANN's rule: assign to cells within $(1+\epsilon)$ of the nearest).
At equal mean replication, rank-$b$ dominates closure on both
datasets: SIFT, closure $\epsilon{=}0.1$ ($\bar b{=}7.3$) covers
$0.9960$ at $\ks{=}128$ vs.\ $0.9996$ for rank-$8$; GloVe, closure
$\epsilon{=}0.1$ ($\bar b{=}42.6$) covers $0.9788$ vs.\ $0.9902$ for
rank-$32$. The rank order of the signature is the better boundary
detector at this vocabulary scale, and it is the representation the
build already produces.

\section{The certificate out of sample}\label{sec:oos}
Proposition~\ref{obs:identity} certifies recall on the query
distribution the certificate was computed on. Production query
distributions drift, and an index whose selling point is a recall
figure must say what that figure is worth when they do. This section
develops the per-query provision sketched in \S\ref{sec:identity},
fits it on seven corpora and thirteen query sets, and reports the one
regime in which it fails.

\paragraph{The provision the index carries.}
The certificate is an expectation over the query distribution it was
computed on; a query from outside that distribution needs a provision,
and the index carries one. Because $S$ is a uniform sample of $X$, the
$\ks$ routed distances $d(q,s_{(1)})\le\dots\le d(q,s_{(\ks)})$ are
order statistics of a uniform sample of $q$'s own distance
distribution --- an unbiased probe of the query's neighborhood that
routing has already paid for. The local intrinsic dimension they
estimate (the maximum-likelihood estimator of Levina and
Bickel~\cite{levina2004mle}, used as a query-hardness statistic by
Amsaleg et al.~\cite{amsaleg2015lid}) stratifies coverage. On a
200{,}000-object corpus of text embeddings with in-distribution and
out-of-distribution query sets (experiment \texttt{gate\_o} in the
artifact; exact signatures, so the measurement is of the index rather
than of the inner search), the shift raises the miss rate
\num{ood/miss-ratio-8-64}$\times$ at $(b,\ks)=(8,64)$
(\num{ood/cov-in-8-64} to \num{ood/cov-ood-8-64}) and moves the median
estimated dimension from \num{ood/lid-med-in} to \num{ood/lid-med-ood},
while coverage \emph{conditioned} on the estimate barely moves. A
table of coverage per decile of the statistic, fitted on the
in-distribution certification sample alone, predicts the
out-of-distribution recall to \num{ood/err-table-8-64} at $(8,64)$ and
\num{ood/err-table-8-32} at $(8,32)$, against \num{ood/err-naive-8-64}
and \num{ood/err-naive-8-32} for the unconditioned certificate. The
same table drives the dial: choosing $\ks$ per query as the smallest
value whose row meets a target of $0.99$ meets it on the shifted
queries (\num{ood/adaptive-cov-b8}--\num{ood/adaptive-cov-b32} for
$b\in\{8,16,32\}$) with
\num{ood/miss-reduction-min}--\num{ood/miss-reduction-max}\% fewer
misses than a fixed $\ks$ touching the same posting mass. A query
whose statistic falls outside the calibrated range is served and
reported as uncertified. Underneath the table is the level of $f_X$
resolved per query: the rate $\kappa$ in
$1-\mathrm{cov}\approx\exp(-\kappa\,b\ks/|S|)$, fitted per decile of
the statistic at $(4,8)$, runs monotonically from \num{ood/kappa-lo} to
\num{ood/kappa-hi} and agrees decile by decile across the two query
sets. The triangle inequality also gives a deterministic exactness
radius --- if $2\,d(q,y_{10}) < d(q,s_{(\ks)})-d(q,s_{(b)})$ the served
answer is provably exact in any metric space --- but at these
dimensions it certifies under \num{ood/exact-frac-max}\% of queries,
and we do not rely on it.

\paragraph{The rate is a property of the query; the curve, of the corpus.}
The certificate is per corpus; the statistic behind it need not be.
Writing the touched mass as $b\ks\,n/|S|$ objects,
$1-\mathrm{cov}\approx\exp(-(\kappa/n)\cdot\mathrm{mass})$: $\kappa/n$
is the coverage bought per touched object, comparable across corpora
of different $n$ and $|S|$. Table~\ref{tab:lidlaw} repeats the
measurement on the corpora of \S\ref{sec:exp} and on three larger
ones: LAION-10M (CLIP image embeddings under cosine, the SISAP 2024
query set split in half), and two with a query shift built in ---
PubMed23 (24M abstract embeddings, with an out-of-distribution query
file), and Yandex T2I-10M under $\ell_2$ and under inner product (maximum inner-product search through the
standard augmentation), where the in-distribution queries are images
and the shifted queries are text. On every set the estimate stratifies
coverage (Spearman $\rho$ from $\num{lidlaw/rho-min}$ to
$\num{lidlaw/rho-max}$ at $(4,8)$) and the rate falls monotonically
from the first to the last decile of the statistic, by
\num{lidlaw/spread-min}--\num{lidlaw/spread-max}$\times$. A decile
table fitted on one query set predicts another to within
\num{lidlaw/err-max} on every in-domain transfer --- the other half of
a random split (\num{lidlaw/err-laion-ood} on LAION-10M), the shifted
wiki queries, and the shifted PubMed queries (\num{lidlaw/err-pubmed23-ood} against
\num{lidlaw/err-naive-pubmed23-ood} for the unconditioned
certificate). On PubMed23 the dial meets its $0.99$ target on the
shifted queries
(\num{lidlaw/pubmed23-adaptive-cov-min}--\num{lidlaw/pubmed23-adaptive-cov-max}
for $b\in\{8,16,32\}$) touching
\num{lidlaw/pubmed23-adaptive-mass-saving-min}--\num{lidlaw/pubmed23-adaptive-mass-saving-max}\%
less posting mass than the fixed $\ks$ that reaches the same coverage.

Figure~\ref{fig:lidlaw} plots $\kappa/n$ against the absolute value of
the statistic. Within a corpus the curve is invariant to everything we
varied. SIFT-1M and Deep-10M, ten times apart in $n$ and $|S|$, agree
bin by bin (median ratio \num{lidlaw/ratio-sift-deep10m} over
\num{lidlaw/ratio-sift-deep10m-bins} shared bins); the two halves of
the LAION split coincide (ratio \num{lidlaw/ratio-laion-ood-laion-in});
the shifted wiki and PubMed queries land on their in-distribution
curves (ratios \num{lidlaw/ratio-wiki-ood-wiki-in} and
\num{lidlaw/ratio-pubmed23-ood-pubmed23-in}); and T2I under inner
product lands on T2I under $\ell_2$ (ratio
\num{lidlaw/ratio-t2i-ip-in-t2i-in}) --- the rate is a property of the
query in its corpus, not of the distance function. Across corpora the
curves are parallel but offset, by up to
\num{lidlaw/ratio-wiki-in-sift}$\times$ at equal estimated dimension
(GloVe \num{lidlaw/ratio-glove-sift}$\times$, LAION
\num{lidlaw/ratio-laion-in-sift}$\times$ and PubMed23
\num{lidlaw/ratio-pubmed23-in-sift}$\times$ above SIFT, T2I
\num{lidlaw/ratio-t2i-in-deep10m}$\times$ above Deep-10M). The offset
is corpus-specific rather than a split by distance family --- the
three cosine corpora (wiki, LAION, PubMed23) sit at three levels
between the $\ell_2$ descriptors and the angular GloVe --- so the estimate is one of at least two statistics behind the
local level of $f_X$, and the table is fitted per corpus. The
cross-modal shift is the case the table does not cover, and the
certificate says so: text queries against the image corpus fall beyond
the 99th percentile of the image queries' estimates for
\num{lidlaw/beyond-t2i-ood}\% ($\ell_2$) and
\num{lidlaw/beyond-t2i-ip-ood}\% (inner product) of the set, and where
the two overlap the text queries sit
\num{lidlaw/ratio-t2i-ood-t2i-in}$\times$ above the image curve at
equal estimate: the shift changes the neighborhood structure, not only
its dimension. These queries are served uncertified; the decile table
would predict them to \num{lidlaw/err-t2i-ood} and
\num{lidlaw/err-t2i-ip-ood}, and the dial reaches
\num{lidlaw/t2i-adaptive-cov-min} and
\num{lidlaw/t2i-ip-adaptive-cov-min} at its $\ks$ cap against a target
of $0.99$. Certifying them takes a table fitted on a sample of text
queries, which is the same construction with a different calibration
set. The exactness radius is inert on every corpus (at most
\num{lidlaw/exact-frac-max}\% of queries certified).

\begin{figure}[t]
\centering
\includegraphics[width=\textwidth]{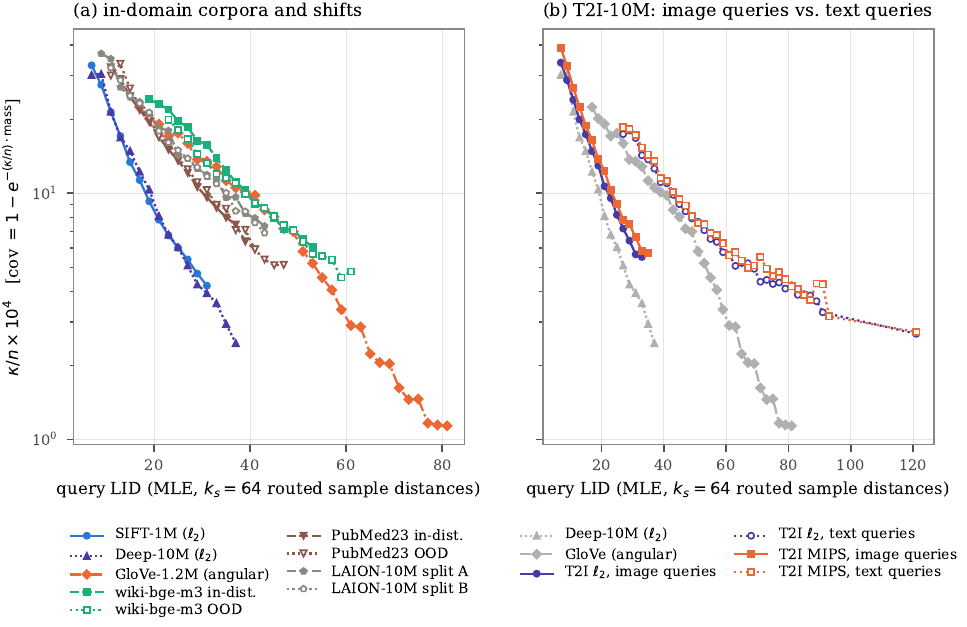}
\caption{The coverage rate per touched object, $\kappa/n$ in
$1-\mathrm{cov}=\exp(-(\kappa/n)\cdot\mathrm{mass})$, against the
query's intrinsic-dimension estimate (bins of width 2 with at least
100 queries; $\kappa$ fitted at $(b,\ks)=(4,8)$, the lowest-mass grid
point, where no bin is near saturation and the miss floor of
approximate signatures cannot enter). (a) Within a corpus one curve:
SIFT-1M and Deep-10M differ $10\times$ in $n$; the second half of
the LAION split and the shifted wiki and PubMed23 queries (hollow)
fall on their first-half and in-distribution curves. Across
corpora the curves are parallel with corpus-specific offsets. (b) On
T2I-10M the inner-product curve coincides with the $\ell_2$ curve, and
the text queries (hollow) sit above the image queries at equal
estimate --- a shift the estimate does not absorb.}
\label{fig:lidlaw}
\end{figure}

\begin{table}[t]
\centering\footnotesize
\setlength{\tabcolsep}{3pt}
\begin{tabular}{llcccrl}
\toprule
corpus (metric, $n$) & queries & LID (p10--p90) & cov$(8,64)$ & $\kappa/n\!\times\!10^4$ (dec.\ 1$\to$10) & $\rho$ & held-out err.\\
\midrule
SIFT-1M ($\ell_2$, 1M) & in-dist. & 18.4 (11--25) & 0.9977 & 26.7$\to$5.03 (5$\times$) & $-0.52$ & 0.0003 (0.0004)\\
\addlinespace[2pt]
Deep-10M ($\ell_2$, 10M) & in-dist. & 21.1 (13--30) & 0.9851 & 23.5$\to$3.22 (7$\times$) & $-0.57$ & 0.0000 (0.0001)\\
\addlinespace[2pt]
GloVe-1.2M (angular, 1.18M) & in-dist. & 53.0 (26--75) & 0.8792 & 20.0$\to$1.08 (18$\times$) & $-0.71$ & 0.0021 (0.0055)\\
\addlinespace[2pt]
wiki-bge-m3 (cosine, 0.2M) & in-dist. & 35.6 (24--48) & 0.9959 & 24.0$\to$6.18 (4$\times$) & $-0.52$ & --\\
 & OOD & 40.0 (28--53) & 0.9926 & 21.8$\to$5.69 (4$\times$) & $-0.48$ & 0.0009 (0.0033)\\
\addlinespace[2pt]
LAION-10M (cosine, 10M) & split A & 24.5 (13--44) & 0.9913 & 31.4$\to$5.76 (5$\times$) & $-0.59$ & --\\
 & split B & 24.1 (13--43) & 0.9920 & 34.0$\to$5.43 (6$\times$) & $-0.60$ & 0.0003 (0.0006)\\
\addlinespace[2pt]
PubMed23 (cosine, 24M) & in-dist. & 22.8 (14--34) & 0.9937 & 30.3$\to$6.29 (5$\times$) & $-0.55$ & --\\
 & OOD & 25.8 (16--40) & 0.9853 & 31.8$\to$6.11 (5$\times$) & $-0.58$ & 0.0039 (0.0084)\\
\addlinespace[2pt]
T2I-10M ($\ell_2$, 10M) & image$\to$image & 16.9 (9--27) & 0.9953 & 31.6$\to$5.39 (6$\times$) & $-0.58$ & --\\
 & text$\to$image & 58.4 (36--107) & 0.9281 & 24.1$\to$6.04 (86\% beyond) & $-0.47$ & 0.0440 (0.0672)\\
\addlinespace[2pt]
T2I-10M (MIPS, 10M) & image$\to$image & 18.3 (10--30) & 0.9964 & 34.7$\to$6.15 (6$\times$) & $-0.57$ & --\\
 & text$\to$image & 57.4 (36--98) & 0.9543 & 27.0$\to$6.38 (66\% beyond) & $-0.46$ & 0.0233 (0.0421)\\
\bottomrule
\end{tabular}

\caption{The per-query certificate across corpora ($\alpha{=}2\%$,
seed 1, true 10-NN, $10^4$ queries per set; exact signatures on
SIFT-1M, GloVe and wiki, \graft{} signatures on Deep-10M, PubMed23 and
T2I-10M). LID is the Levina--Bickel estimate over the $\ks{=}64$ routed
sample distances; $\kappa/n$ is the rate fitted at $(4,8)$ in the first
and last decile of the estimate, with the ratio between them, or, for
the text queries binned on the image queries' deciles, the share of
queries beyond the calibrated range; $\rho$ is Spearman's correlation
between the estimate and per-query coverage at $(4,8)$; the held-out
error is the mean absolute error of the coverage predicted by the
decile table at $(8,64)$ on queries it was not fitted on --- the other
half of a split for a single query set, the shifted set otherwise ---
with the error of the unconditioned certificate in parentheses.}
\label{tab:lidlaw}
\end{table}

\section{Evaluation}\label{sec:exp}
Every number below was produced by the run that measured it, into a
manifest, and every baseline was run by us on the same machine, queries
and ground truth, swept to saturation. The section reports the operating
surface, serving under enforced memory caps against five other systems,
the in-RAM comparison, the hardest corpus we found, and Deep-1B.

\paragraph{Setup.}\label{sec:setup}
Machines: a 10-core laptop (M5, 24\,GB) and a 4-socket Xeon E7-4809
v3 (Haswell, 2\,GHz, 8 cores/16 threads per socket, 1.5\,TB RAM, NVMe;
``d0''), whose four NUMA nodes deliver ${\sim}37$\,GB/s each and
${\sim}50$\,GB/s in aggregate to one process interleaved across them. Datasets: SIFT-1M \cite{jegou2011bigann} ($d{=}128$),
GloVe-1.2M \cite{pennington2014glove} ($d{=}200$, angular), Deep-10M and
Deep-100M \cite{babenko2016deep1b} ($d{=}96$);
10{,}000 queries each; recall@10 vs.\ exact ground truth. Protocol:
three interleaved repetitions, median throughput; fixed seeds; every
number is written by its run into a versioned manifest. Every baseline
is swept to saturation rather than reported at one operating point, so
that comparisons are between frontiers and not between tunings
\cite{aumuller2020annbenchmarks}. The fanless
laptop drifts by up to $1.7\times$ in throughput between a cold and a
hot run at identical work, so the laptop comparisons of
Table~\ref{tab:compare} time both systems in the same interleaved
repetitions (each repetition runs both sweeps back to back, the order
alternating), and Table~\ref{tab:solo-points} reports the min--max
spread over seed runs rather than a single figure. Seed
stability over 5 seeds: $\sigma(\mathrm{recall})\le10^{-4}$ (SIFT),
${\sim}5\times10^{-5}$ (Deep-10M), ${\sim}6\times10^{-4}$ (GloVe);
Deep-100M rows are single-seed (a seed is a full signature build), with
the 10M seed variance bounding the expected fluctuation.

\subsection{Operating surface}\label{sec:surface-exp}
Table~\ref{tab:solo-points} shows representative \solo{} operating
points. The dial is monotone everywhere measured --- recall never
saturates short of scanning everything --- and every recall agrees with
the certificate of Proposition~\ref{obs:identity}.

\begin{table}[t]
\centering\small
\begin{tabular}{llcc}
\toprule
dataset (machine) & configuration & recall@10 & qps\\
\midrule
SIFT-1M (laptop) & $b{=}2,\ks{=}192$, SQ4 & 0.9917 & 27{,}306--28{,}115\\
 & $b_0{=}8,\ks{=}64,p{=}0.5$, SQ4 & 0.9939 & 21{,}639--34{,}240\\
 & $b{=}8,\ks{=}64$, SQ4 & 0.9964 & 16{,}874--29{,}188\\
 & $b_0{=}16,\ks{=}64,p{=}1$, SQ4 & 0.9973 & 15{,}169--26{,}756\\
 & $b_0{=}8,\ks{=}128,p{=}0.5$, SQ4 & 0.9986 & 10{,}664--19{,}259\\
 & $b{=}8,\ks{=}128$, SQ4 & 0.9990 & 8{,}063--15{,}628\\
 & $b_0{=}16,\ks{=}128,p{=}0.5$, SQ4 & 0.9993 & 6{,}285--11{,}890\\
\midrule
GloVe-1.2M (laptop) & $b_0{=}32,\ks{=}128,p{=}0.5$, SQ4 & 0.9793 & 993--1{,}217\\
 & $b{=}16,\ks{=}256$, SQ4 & 0.9860 & 773--932\\
 & $b_0{=}32,\ks{=}256,p{=}1$, SQ4 & 0.9881 & 709--881\\
 & $b_0{=}32,\ks{=}256,p{=}0.5$, SQ4 & 0.9910 & 570--694\\
 & $b{=}32,\ks{=}256$, SQ4 & 0.9936 & 426--503\\
 & $b_0{=}32,\ks{=}512,p{=}1$, SQ4 & 0.9942 & 414--484\\
 & $b_0{=}32,\ks{=}512,p{=}0.5$, SQ4 & 0.9950 & 338--393\\
 & $b{=}32,\ks{=}512$, SQ4 & 0.9955 & 251--287\\
 & depth-2, $\ks{=}64,\kslocal{=}16$, exact & 0.9883 & 150\\
 & depth-2, $\ks{=}64,\kslocal{=}32$, exact & 0.9899 & 124\\
 & depth-2, $\ks{=}128,\kslocal{=}16$, exact & 0.9973 & 68\\
 & depth-2, $\ks{=}128,\kslocal{=}32$, exact & 0.9978 & 52\\
\midrule
Deep-10M (laptop) & $b{=}8,\ks{=}128$, SQ8 & 0.9952 & 3{,}659\\
 & $b{=}8,\ks{=}256$, SQ8 & 0.9982 & 1{,}754\\
 & $b{=}16,\ks{=}128$, SQ8 & 0.9985 & 1{,}884\\
 & $b{=}16,\ks{=}256$, SQ8 & 0.9994 & 685\\
\midrule
Deep-100M (d0) & $b{=}16,\ks{=}256$, SQ8 & 0.9982 & 427\\
 & $b{=}16,\ks{=}384$, SQ8 & 0.9988 & 320\\
 & $b{=}16,\ks{=}512$, SQ8 & 0.9990 & 252\\
 & $b{=}32,\ks{=}128$, SQ8 & 0.9981 & 471\\
 & $b{=}32,\ks{=}256$, SQ8 & 0.9991 & 268\\
\bottomrule
\end{tabular}

\caption{\solo{} operating points. Deep rows use the SQ8 screen at the
serving default $R{=}100$; their recalls are bit-identical to the
full-precision scan of the same configuration, and every measured row
is listed. SQ4 rows are the list-major 4-bit stream with rank-ordered
postings (\S\ref{sec:kernels}; $R{=}200$, router $ef{=}\ks$, all ten
cores, $10^4$ queries), from signatures built at
$\kb{=}\num{sift/kb}$ (SIFT) and $\kb{=}\num{glove/kb}$ (GloVe); a
row with $p$ is a tapered frontier $b(t)=\lceil b_0(1-t/\ks)^p\rceil$,
and the rows shown are the non-dominated points of each sweep, thinned
to recall steps of $2\times10^{-4}$. Depth-2 GloVe rows are
full-precision scans of the recursive index (\S\ref{sec:recursion}).
Deep rows were served from signatures built once at $\kb{=}64$ ---
$b$ is serve-time. SIFT SQ4 recall is the median over
\num{sift/seeds} seeds and its throughput the full min--max spread
across them (laptop thermal drift), GloVe SQ4 over \num{glove/seeds};
other rows are single-seed.}
\label{tab:solo-points}
\end{table}

\subsection{Serving Deep-100M at 1\,GB resident, and below}\label{sec:capped}
Table~\ref{tab:capped} is the disk mode of \S\ref{sec:kernels} under
enforced cgroup caps: swap off, every input evicted from the page cache
\emph{before} it is mapped, one operating point per process, the
scope's own \texttt{memory.peak}. The object level is the same in every
row --- posting-ordered SQ8 leaf blocks on NVMe, read by explicit
\texttt{pread}, results bit-identical to RAM serving --- and what
changes is the router. At depth 1 it is the flat $2$M-point image,
$\num{d2c/router-flat-mib}$\,MiB, file-backed and reclaimable page by
page. At depth 2 it is the index's own rule pointed upward
(\S\ref{sec:ceiling}): a $2\%$ sample of the vocabulary,
$\num{d2c/router-d2-mib}$\,MiB resident
($\num{d2c/router-ratio}\times$ less), with the vocabulary's own
posting lists as an SQ4 stream on NVMe read by the same kernel, so a
query routes in two hops --- $\num{d2c/router-io-mb}$\,MB of counted
IO --- and then scans the object level exactly as before.

\begin{table}[t]
\centering\small
\resizebox{\textwidth}{!}{
\begin{tabular}{clrrcrrrr}
\toprule
depth & cap & thr. & $\ks$ & recall@10 & qps cold / warm & route ms & router MiB & B/obj.\\
\midrule
1 & 2\,GB & 16 & 256 & 0.9965 & 20.9 / 20.2 & 1.0 & 963 & 21.5\\
1 & 2\,GB & 16 & 512 & 0.9979 & 10.7 / 10.5 & 1.3 & 963 & 21.5\\
1 & 1\,GB & 16 & 256 & 0.9965 & 8.0 / 1.7 & 547.3 & 963 & 10.7\\
1 & 1\,GB & 16 & 512 & \multicolumn{3}{l}{\emph{timed out (thrashing)}} & 963 & 10.7\\
\midrule
2 & 2\,GB & 16 & 256 & 0.9964 & 19.9 / 19.9 & 3.6 & 18 & 21.5\\
2 & 2\,GB & 16 & 512 & 0.9977 & 10.2 / 9.8 & 7.5 & 18 & 21.5\\
2 & 1\,GB & 16 & 256 & 0.9964 & 17.6 / 16.3 & 12.8 & 18 & 10.7\\
2 & 1\,GB & 16 & 512 & 0.9977 & 9.1 / 8.7 & 18.4 & 18 & 10.7\\
2 & 768\,MB & 16 & 256 & 0.9964 & 16.5 / 15.1 & 16.2 & 18 & 8.1\\
2 & 768\,MB & 8 & 512 & 0.9977 & 8.9 / 8.3 & 25.9 & 18 & 8.1\\
2 & 768\,MB & 16 & 512 & \multicolumn{3}{l}{\emph{OOM-killed}} & 18 & 8.1\\
2 & 512\,MB & 8 & 256 & 0.9964 & 13.8 / 12.8 & 30.4 & 18 & 5.4\\
2 & 512\,MB & 4 & 512 & 0.9977 & 5.5 / 5.3 & 54.8 & 18 & 5.4\\
2 & 512\,MB & 8 & 512 & \multicolumn{3}{l}{\emph{OOM-killed}} & 18 & 5.4\\
2 & 384\,MB & 8 & 256 & 0.9964 & 13.0 / 12.3 & 33.5 & 18 & 4.0\\
2 & 384\,MB & 4 & 512 & 0.9977 & 5.2 / 5.0 & 63.2 & 18 & 4.0\\
2 & 384\,MB & 8 & 512 & \multicolumn{3}{l}{\emph{OOM-killed}} & 18 & 4.0\\
2 & 256\,MB & 4 & 256 & 0.9964 & 7.2 / 7.3 & 66.8 & 18 & 2.7\\
2 & 256\,MB & 2 & 512 & 0.9977 & 2.9 / 2.9 & 129.0 & 18 & 2.7\\
2 & 256\,MB & 4 & 512 & \multicolumn{3}{l}{\emph{OOM-killed}} & 18 & 2.7\\
\bottomrule
\end{tabular}
}
\caption{Deep-100M under enforced cgroup caps through the flat router
(depth 1) and through the router indexed by \solo{} (depth 2). $b{=}16$,
$R{=}100$, 1{,}000-query subset, cold / warm reps; ``route'' is the
warm rep's routing time. Resident B/object is the cap divided by $n$;
the cgroup peak equals the cap in every finished row. A row that did
not finish says why: the flat router at 1\,GB thrashed past a 40-minute
limit, and the depth-2 points below 1\,GB were killed at $\ks{=}512$ at
the thread count that served $\ks{=}256$ --- the larger pass set's
buffers, not the router --- and finish at half that thread count, the
row above each kill. Rows at different thread counts are never
compared.}
\label{tab:capped}
\end{table}

\paragraph{The flat router does not fit 1\,GB at $10^8$.}
At 2\,GB depth 1 serves at its uncapped rate; at 1\,GB it reads
$\num{d2c/d1-1g-ks256-qps-cold}$ qps cold and
$\num{d2c/d1-1g-ks256-qps-warm}$ warm, with routing at
$\num{d2c/d1-1g-ks256-route-ms}$\,ms per query, recall unchanged: the
beam paging its graph while the scan's own reads evict it. Its floor
sits between 1.5 and 2\,GB and depends on the operating point ---
Table~\ref{tab:capped-floor} --- $16$--$21$ B/object against the
$\num{ceiling/bytes-per-object}$\,B/object router term of
\S\ref{sec:ceiling}, the difference being the scratch and the reads in
flight.

\paragraph{Depth 2 removes the wall and costs nothing above it.} At
2\,GB the two routes read the same recall and rate
($\num{d2c/d2-2g-ks512-recall}$ at $\num{d2c/d2-2g-ks512-qps-warm}$ vs.\
$\num{d2c/d1-2g-ks512-recall}$ at $\num{d2c/d1-2g-ks512-qps-warm}$ qps at
$\ks{=}512$); at 1\,GB depth 2 serves $\num{d2c/d2-1g-ks512-recall}$ at
$\num{d2c/d2-1g-ks512-qps-warm}$ qps where depth 1 collapsed, and the
route costs $\num{d2c/d2-1g-ks512-route-ms}$\,ms of the query. That is
$\num{capped/bytes-per-object}$ resident bytes per object at recall
$\num{capped/1g-recall-d2}$.

\paragraph{Below 1\,GB the cap is on the threads, not the index.} With an $\num{d2c/router-d2-mib}$-MiB router the
resident state is per-thread scratch and buffers --- the dedupe bitmap
alone is $n/8$ bytes a thread, $200$\,MB at 16 threads --- so the
floor moves with the thread count: $512$\,MB at 8 threads serves
$\num{d2c/d2-512m-ks256-qps-warm}$ qps ($\num{d2c/d2-512m-b-per-object}$
B/object) and $\num{capped/solo-floor-mb}$\,MB at
$\num{capped/solo-floor-threads}$ threads
$\num{d2c/d2-256m-ks256-qps-warm}$ qps ($\num{capped/solo-bytes}$
B/object), both at recall $\num{capped/solo-floor-recall}$. The
$\ks{=}512$ points killed at those thread counts finish at half of
them, every one at recall $\num{capped/solo-floor512-recall}$: 768\,MB at
$\num{capped/solo-ks512-768m-threads}$ threads
($\num{capped/solo-ks512-768m-qps}$ qps), 512 and 384\,MB at
$\num{capped/solo-ks512-512m-threads}$
($\num{capped/solo-ks512-512m-qps}$ and
$\num{capped/solo-ks512-384m-qps}$ qps), and
$\num{capped/solo-floor512-mb}$\,MB at
$\num{capped/solo-floor512-threads}$ threads
($\num{capped/solo-floor512-qps}$ qps) --- the 1\,GB row's recall at the
256\,MB row's $\num{capped/solo-bytes}$ B/object, bought with threads.
The resident
budget is router plus threads times (scratch plus reads in flight), the
three-term law of \S\ref{sec:ceiling}, and at $10^8$ the second term is
the whole of it once the router recurses.

\paragraph{The flat router's floor, and a measurement trap.}
Table~\ref{tab:capped-floor} measures the flat router at 1\,GB under
three protocols. With the router image evicted before it is mapped, so
that its pages are charged to the scope, the row reads
$\num{floor/1g-evict-qps-ks256-cold}$ / $\num{floor/1g-evict-qps-ks256-warm}$
qps. With the image read once inside the scope before the timed
repetitions, the cold repetition reads
$\num{floor/1g-prewarm-qps-ks256-cold}$ and the warm one
$\num{floor/1g-prewarm-qps-ks256-warm}$, as the scan evicts the image.
With the image already in page cache from another process, charged to
nobody, the same row reads $19.8$ qps: the cap then bounds the scan's
page cache and not the router, and the figure is not a 1\,GB
measurement. A cgroup cap on a process whose index is mapped bounds
only what that process faulted in itself; every row in this paper
therefore evicts every input before mapping it, and the recursion was
served at this scale because the flat router's true floor sits above
1\,GB.

\begin{table}[t]
\centering\small
\resizebox{\textwidth}{!}{
\begin{tabular}{llcrr}
\toprule
cap & router image at start & recall@10 ($\ks{=}256/512$) & qps cold/warm, $\ks{=}256$ & $\ks{=}512$\\
\midrule
2\,GB & evicted & 0.9965 / 0.9979 & 20.3 / 20.5 & 10.5 / 10.3\\
1.5\,GB & evicted & 0.9965 / 0.9979 & 20.3 / 19.9 & 5.3 / 1.1\\
1\,GB & evicted & 0.9965 / -- & 7.7 / 1.7 & timed out\\
1\,GB & pre-read in scope & 0.9965 / 0.9979 & 20.5 / 10.9 & 1.5 / 0.8\\
\midrule
1\,GB (Table~\ref{tab:capped}) & cached before the scope & 0.9965 / 0.9979 & 19.8 / 19.3 & 10.2 / 10.0\\
\bottomrule
\end{tabular}
}
\caption{The flat router's floor at $10^8$, 16 threads, 1{,}000
queries, one process per cap. ``Evicted'': the router image is
\texttt{fadvise}d out before mapping, so its pages are charged to the
scope; ``pre-read in scope'': read once before the timed reps; the last
row has the image cached by another process before the scope existed. Recall is identical in every row; the cap moves
throughput only.}
\label{tab:capped-floor}
\end{table}

\subsection{The same cap, six other systems}\label{sec:capped-baselines}
Table~\ref{tab:capped-baselines} runs SPANN, DiskANN, vote-served
\misi{}, a Faiss IVF-PQ index, ScaNN~\cite{guo2020scann} and
RaBitQ~\cite{gao2024rabitq,gao2025exrabitq} through the identical
protocol, each swept to its own recall ceiling. Two things separate the arms, and only
one of them is the memory floor.

The floors are $\num{capped/solo-bytes}$, $\num{capped/rabitq-rq1-codes/bytes}$,
$\num{capped/ivfpq-bytes}$, $\num{capped/scann-ah/bytes}$,
$\num{capped/rabitq-rq4-codes/bytes}$, $80$, $80$, $96$,
$\num{capped/scann-bf16/bytes}$ and $\num{capped/rabitq-rq4-refine/bytes}$
bytes per object, and they order the
systems exactly: below $8$\,GB the graph and cluster indexes stop ---
not degrade, stop. SPANN is killed at $6$\,GB and every cap under it,
\misi{} at $4$, DiskANN at $8$, ScaNN's codes-only index at
$\num{capped/scann-ah/died-gb}$ and its re-ranking index at
$\num{capped/scann-bf16/died-gb}$; above their floors they pay almost
nothing for the cap, SPANN serving $\num{capped/spann-8g-recall}$ at
$\num{capped/spann-8g-qps}$ qps under an $8$\,GB cap against
$\num{capped/spann-uncapped-qps}$ uncapped. That is the shape of a
resident working set: it fits or it does not, and page cache is the only
thing the surplus was buying.

The second separation is the interesting one, because it is not about
memory at all. IVF-PQ \emph{does} run at $1$\,GB --- it is the only
baseline that does --- so the region is not empty for want of a small
enough index. It is empty because the small index cannot answer: at
$\num{capped/ivfpq-bytes}$\,B/object its recall saturates at
$\num{capped/ivfpq-ceiling}$ and stays there at every cap and every
$\mathit{nprobe}$ from $8$ to $512$, since PQ distances are the final
score rather than a screen --- the same ceiling SPANN's Int8 scorer
meets at $\num{sat/spann-int8}$ (\S\ref{sec:compare}). And what the cap
costs it is throughput, catastrophically: the same configuration runs at
$\num{capped/ivfpq-1g-qps}$ qps at $1$\,GB and $182$ at $8$, a
$40\times$ collapse, because a probed list is a random read. \solo{}
under the same reduction of cap moves by $\num{capped/solo-1g-vs-2g-pct}\%$
($\num{d2c/d2-1g-ks512-qps-warm}$ vs.\ $\num{d2c/d2-2g-ks512-qps-warm}$
qps through the depth-2 route), because its reads were sequential and
counted before the query began. The $85\times$ read
amplification of \S\ref{sec:kernels} is not a property of graphs; it is
a property of not knowing the access order, and it reappears here in a
system with no graph in it.

ScaNN makes the same point from the other side, and from the system a
reader is likeliest to name. Its tree-and-asymmetric-hashing index
(20{,}000 leaves, 4-bit codes with two dimensions per block, anisotropic
threshold $0.2$, trained on $2.5$\,M points) comes in two forms. Scoring
on the codes alone it is $\num{capped/scann-ah/bytes}$\,B/object on
disk --- the IVF-PQ shape --- and its recall saturates at
$\num{capped/scann-ah/ceiling}$ at every cap it survives and every leaf
count from $400$ to $1{,}600$: a 4-bit code as the final score is a
worse scorer than an 8-bit one, as it should be. Re-ranking the top
candidates against a bfloat16 copy of the dataset recovers the quality
--- $\num{capped/scann-bf16/ceiling}$ at a re-rank depth of $5{,}000$
and $3{,}200$ leaves, $\num{capped/scann-bf16/uncapped-qps}$ qps
uncapped on 64 threads, above HNSW's saturation --- but the copy is
$192$ of the $\num{capped/scann-bf16/bytes}$\,B/object, ScaNN loads its
artifacts into memory, and the index is killed at
$\num{capped/scann-bf16/died-gb}$\,GB and below; at
$\num{capped/scann-bf16/floor-gb}$\,GB it serves
$\num{capped/scann-bf16/32g-recall}$ at
$\num{capped/scann-bf16/32g-qps}$ qps (depth $2{,}000$). So at Deep-100M
ScaNN occupies both ends of the table at once, and neither is the
region in question: small and capped at $\num{capped/scann-ah/ceiling}$,
or accurate and resident at $\num{capped/scann-bf16/floor-gb}$\,GB.
Where it runs under a cap its rate is unchanged, as SPANN's and
DiskANN's are: the artifacts fit or they do not.

RaBitQ closes the quantizer column. Its 1-bit codes
($\num{capped/rabitq-rq1-codes/bytes}$\,B/object with ids) run at every
cap of the table and saturate at $\num{capped/rabitq-rq1-codes/ceiling}$;
the 4-bit extension ($\num{capped/rabitq-rq4-codes/bytes}$\,B/object)
saturates at $\num{capped/rabitq-rq4-codes/ceiling}$, within three
thousandths of IVF-PQ's ceiling, and like IVF-PQ it pays the cap in
throughput, $\num{capped/rabitq-rq4-codes/1g-qps}$ qps at 1\,GB against
$\num{capped/rabitq-rq4-codes/12g-qps}$ at 12, because a probed list of
codes is a random read. Re-ranking against the exact vectors reaches
$\num{capped/rabitq-rq4-refine/ceiling}$ with the 4-bit codes and
$\num{capped/rabitq-rq1-refine/ceiling}$ with the 1-bit ones, and the
library materializes the flat copy on load, so the re-ranked index runs
at $\num{capped/rabitq-rq4-refine/floor-gb}$\,GB and is killed at
$\num{capped/rabitq-rq4-refine/died-gb}$. Three quantizers, five code
lengths, one shape: codes as the final score saturate below $0.9$ at
every cap, and codes as a screen before the true distance reach $0.99$
only with the true distance resident.

\begin{table}[t]
\centering\small
\resizebox{\textwidth}{!}{
\begin{tabular}{lrcccccc}
\toprule
\multirow{2}{*}{system} & \multirow{2}{*}{B/obj} & \multicolumn{6}{c}{recall@10 / qps under an enforced cap}\\
\cmidrule(lr){3-8}
 & & 1\,GB & 2\,GB & 4\,GB & 6\,GB & 8\,GB & 12\,GB\\
\midrule
\textbf{SOLO} (scan, depth 2) & 2.7 & 0.9977 / 9.1 & 0.9977 / 10.2 & $\cdot$ & $\cdot$ & $\cdot$ & $\cdot$\\
Faiss IVF-PQ & 56.1 & 0.8747 / 4.5 & 0.8747 / 6.0 & 0.8747 / 15.4 & 0.8747 / 178 & 0.8747 / 182 & 0.8747 / 182\\
ScaNN (AH 4-bit codes) & 52.2 & --- & --- & --- & --- & 0.4281 / 4{,}525 & 0.4281 / 4{,}608\\
ScaNN (AH + bf16 re-rank) & 244.2 & --- & --- & --- & --- & --- & ---\\
RaBitQ (4-bit codes) & 76.1 & 0.8715 / 4.1 & 0.8715 / 5.1 & 0.8715 / 9.7 & 0.8715 / 32.3 & 0.8715 / 96.5 & 0.8715 / 96.5\\
RaBitQ (1-bit codes) & 28.1 & 0.3450 / 167 & 0.3450 / 624 & 0.3450 / 1{,}371 & 0.3450 / 1{,}350 & 0.3450 / 1{,}382 & 0.3450 / 1{,}300\\
RaBitQ (4-bit + flat re-rank) & 460.1 & --- & --- & --- & --- & --- & ---\\
RaBitQ (1-bit + flat re-rank) & 412.1 & --- & --- & --- & --- & --- & ---\\
SPANN & 80 & --- & --- & --- & --- & 0.9911 / 185 & 0.9981 / 46.6\\
\misi{} (votes) & 80 & --- & --- & --- & 0.9936 / 7.2 & 0.9936 / 8.1 & $\cdot$\\
DiskANN & 96 & --- & --- & --- & --- & --- & 0.9984 / 568\\
\bottomrule
\end{tabular}
}
\caption{Deep-100M under enforced cgroup caps (E5: swap off, inputs
evicted, the scope's own \texttt{memory.peak}), each system at the
setting reaching its highest recall at that cap. A dash is an OOM kill;
a dot is a cap not run, which for \solo{} is every cap above $4$\,GB
(its recall is already at ceiling and its rate within
$\num{capped/solo-1g-vs-2g-pct}\%$ of uncapped at $1$\,GB) and for
\misi{} is $12$\,GB. Bytes per object are the measured floor: \solo{}'s
is the $\num{capped/solo-floor-mb}$\,MB depth-2 point of
Table~\ref{tab:capped} (at $\num{capped/solo-floor-threads}$ threads,
$\ks{=}256$), the others the smallest cap each survived --- IVF-PQ's
codes plus ids, ScaNN's codes and leaf map, RaBitQ's codes plus ids
(plus, for the re-ranking indexes, ScaNN's bfloat16 and RaBitQ's float
copy of the dataset), SPANN's and \misi{}'s postings, DiskANN's PQ
table and graph.}
\label{tab:capped-baselines}
\end{table}

\subsection{Comparison with prior indexes}\label{sec:compare}
Table~\ref{tab:compare} places \solo{} against the strongest systems
we could run on the same machines, queries, and ground truth, and
Figure~\ref{fig:pareto} draws the Deep-100M in-RAM frontier.

\begin{figure}[t]
\centering
\includegraphics[width=0.72\textwidth]{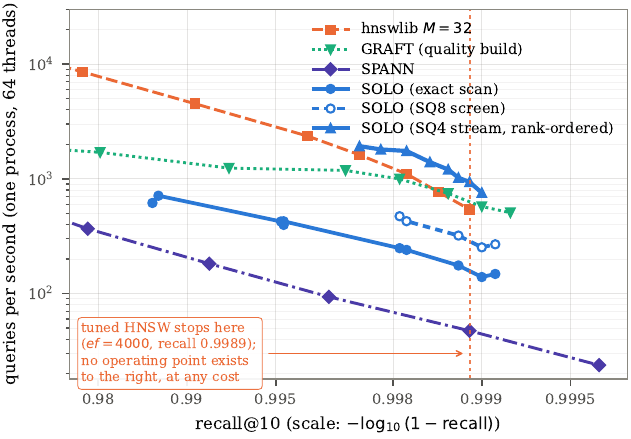}
\caption{Deep-100M in RAM, same machine, queries, and ground truth;
every system swept to saturation; one process, 64 threads. Served
with the gather kernel and 8-bit codes (hollow circles), \solo{} is
slower than the graphs wherever they can go; served with the
list-major 4-bit stream and rank-ordered postings of
\S\ref{sec:kernels} (triangles; timed against hnswlib in the same
interleaved repetitions --- the hnswlib curve drawn is its saturation
sweep, which the interleaved rerun reproduces within 6\%, identical recall), the same
index is the faster system at every recall above
$\num{deep100m/speedup-floor}$, by
$\num{deep100m/speedup-lo}$--$\num{deep100m/speedup-hi}\times$, and
continues to $\num{deep100m/stream-max-recall}$ past the orange wall,
where the tuned graph runs out of dial: hnswlib has no operating
point to its right at any $ef$. The scan architectures do not have
that wall --- SPANN climbs past it too, at a steeper price. DiskANN
is not plotted: it belongs to the capped-memory regime of
Table~\ref{tab:capped-baselines} ($0.9984$ at $568$ qps from $12$\,GB --- faster
than \solo{} there, and unable to run below $9.6$\,GB).}
\label{fig:pareto}
\end{figure}

\begin{table}[t]
\centering\small
\begin{tabular}{llccl}
\toprule
scale & system & recall@10 & qps & note\\
\midrule
\multirow{9}{*}{Deep-100M (d0, RAM)}
 & hnswlib $M{=}32$ & 0.9989 & 531 & saturated ($ef{=}4000$)\\
 & \graft{} & 0.9992 & 506 & quality build, $ef{=}4000$\\
 & \solo{} & 0.9991 & 268 & SQ8 gather, dial monotone\\
 & \solo{} & 0.9989 & 936 & $b_0{=}32,\ks{=}512,p{=}1$, SQ4, matched\\
 & \solo{} & 0.9990 & 755 & $b_0{=}32,\ks{=}512,p{=}0.5$, SQ4\\
 & SPANN & 0.9916 & 182 & authors' cfg, SIR $=512$\\
 & SPANN & 0.9989 & 47 & SIR $=2048$\\
 & SPANN & 0.9996 & 24 & SIR $=4096$\\
 & SPANN, native Int8 & 0.9561 & 73 & saturated (scorer, not screen)\\
\midrule
\multirow{3}{*}{Deep-100M (d0, $4\times$ socket-local)}
 & hnswlib $M{=}32$ & 0.9989 & 876 & saturated ($ef{=}4000$)\\
 & \solo{} & 0.9989 & 1{,}918 & $b_0{=}32,\ks{=}512,p{=}1$, SQ4, matched\\
 & \solo{} & \textbf{0.9990} & 1{,}631 & $b_0{=}32,\ks{=}512,p{=}0.5$, SQ4\\
\midrule
\multirow{3}{*}{Deep-100M (d0, capped)}
 & DiskANN & 0.9984 & 568 & 12\,GB; OOM $<9.6$\,GB\\
 & \misi{} (vote, disk) & 0.9936 & 8--30 & 8--16\,GB\\
 & \solo{} (leaf blocks) & 0.9982 & 20 & \textbf{1\,GB}\\
\midrule
\multirow{3}{*}{Deep-10M (d0)}
 & SPANN & 0.9921 & 384 & authors' cfg, SIR $=256$\\
 & SPANN & 0.9993 & 99 & SIR $=1024$\\
 & \solo{} & 0.9994 & 460 & $b{=}16,\ks{=}256$, f32\\
\midrule
\multirow{5}{*}{GloVe-1.2M (laptop)}
 & hnswlib $M{=}32$ & 0.9863 & 268 & saturated ($ef{=}3442$)\\
 & \misi{} (vote) & 0.9803 & 149 & $\ks{=}256$, $C{=}50{,}000$\\
 & \solo{} & 0.9881 & 808 & $b_0{=}32,\ks{=}256,p{=}1$, SQ4, matched\\
 & \solo{} & 0.9955 & 271 & $b{=}32,\ks{=}512$, SQ4\\
 & \solo{} & \textbf{0.9978} & 52 & depth-2, exact\\
\midrule
\multirow{3}{*}{SIFT-1M (laptop)}
 & hnswlib $M{=}32$ & 0.9991 & 4{,}166 & saturated ($ef{=}452$)\\
 & \solo{} & 0.9992 & 8{,}836 & $b_0{=}16,\ks{=}128,p{=}1$, SQ4, matched\\
 & \solo{} & \textbf{0.9993} & 6{,}977 & $b_0{=}16,\ks{=}128,p{=}0.5$, SQ4\\
\bottomrule
\end{tabular}

\caption{Same-machine comparisons. Each row is a system at the ceiling
of its own sweep, pooled over every seed run, so no system is penalised
for an unlucky one; where \solo{}'s sweep passes the baseline's
ceiling, its fastest row at or above that ceiling is added
(``matched''). The hnswlib and \solo{} SQ4 rows --- laptop and
Deep-100M --- were timed in the same interleaved repetitions (Setup),
laptop \solo{} rows being medians over \num{sift/interleaved-seeds}
(SIFT) and \num{glove/interleaved-seeds} (GloVe) index seeds; SQ4 rows
are the list-major stream at $ef{=}\ks$, $R{=}200$ on the laptop and
$R{=}2000$ with a $0.1\%$-clipped grid at $10^8$. The ``$4\times$
socket-local'' block serves both systems as four processes pinned one
per NUMA node, each with its own copy of its index (and stream),
queries split four ways, aggregate throughput; the single-process
block above it is the conservative row. NAPP \cite{tellez2013napp,naidan2015permutation},
discussed in the text, reaches $1.0$ at $10^6$ with a constant-size
vocabulary that does not scale its resolution with $n$.}
\label{tab:compare}
\end{table}

\paragraph{Throughput depends on the kernel, not the architecture.}
Served by the gather kernel with 8-bit codes, \solo{} is slower than
hnswlib wherever the graph can go --- by $2$--$5\times$ at $10^8$ --- and
\graft{}'s quality-tuned build attains $0.9992$. Served by the list-major
4-bit stream with rank-ordered postings (\S\ref{sec:kernels}), the same
index is the faster system at every recall the tuned graph reaches above
$\num{deep100m/speedup-floor}$ at $10^8$, by
$\num{deep100m/speedup-lo}$--$\num{deep100m/speedup-hi}\times$ in one
process: at hnswlib's ceiling, $\num{sat/hnsw-deep100m}$ at
$\num{sat/hnsw-deep100m-qps}$ qps against \solo{}'s
$\num{deep100m/matched-recall}$ at $\num{deep100m/matched-qps}$, and
$\num{deep100m/stream-max-recall}$ at $\num{deep100m/stream-max-qps}$
where the graph has stopped. The same holds above
$\num{sift/speedup-floor}$ on SIFT-1M
($\num{sift/speedup-lo}$--$\num{sift/speedup-hi}\times$; at hnswlib's
ceiling $\num{sat/hnsw-sift}$ at $\num{sat/hnsw-sift-qps}$ qps against
$\num{sift/matched-recall}$ at $\num{sift/matched-qps}$) and above
$\num{glove/speedup-floor}$ on GloVe
($\num{glove/speedup-lo}$--$\num{glove/speedup-hi}\times$; at the graph's
ceiling $\num{sat/hnsw-glove}$ at $\num{sat/hnsw-glove-qps}$ qps against
$\num{glove/matched-recall}$ at $\num{glove/matched-qps}$), where the
dial continues to $\num{glove/flat-max-recall}$. The single-process
$10^8$ rows are bound by d0's memory system: the stream draws
${\sim}70\%$ of the ${\sim}50$\,GB/s its four sockets deliver to one
interleaved process, while hnswlib's random walk is latency-bound.
Serving both systems as four socket-local replicas lifts hnswlib by
$\num{deep100m/repl-hnsw-gain}\times$ and \solo{} by
$\num{deep100m/repl-solo-gain}\times$, the stream becoming compute-bound
at ${\sim}20$\,ns per record per core, and the matched-recall ratio
becomes $\num{deep100m/repl-speedup-lo}$--$\num{deep100m/repl-speedup-hi}\times$
($\num{deep100m/repl-matched-qps}$ against $\num{deep100m/repl-hnsw-qps}$
qps at the graph's ceiling). Recall is bit-identical between the two
protocols.

\paragraph{What the graphs do not offer.} A monotone dial (hnswlib has
no operating point above $\num{sat/hnsw-deep100m}$ on this dataset at any
$ef$); an a-priori certificate; one-search inserts and exact deletes
(both graphs are static or repair-based); a build without clustering or
graph refinement ($5$\,min at $10^7$ against $19$\,min for \graft{}'s
tuned build at the same scale and $5.5$\,h for its quality build at
$10^8$); and the 1\,GB serving mode of Table~\ref{tab:capped}. The
claim there is capability, not efficiency per gigabyte: DiskANN serves
$0.9984$ at $568$ qps from $12$\,GB and is the faster system per unit of
memory it is given, and it does not run at all below $9.6$\,GB, where
\solo{} still serves this dataset.

\paragraph{Vote-based serving of the identical postings} --- \misi{}'s
path, top-$C$ by shared-term count then verification --- reaches
$\num{sat/misi-glove}$ on GloVe where the same data scanned reaches
$\num{glove/solo-max}$: ranking, not content, was the binding
constraint, consistent with the vote-resolution analysis of
\cite{chavez2026misi}.

\paragraph{SPANN} confirms the family diagnosis at both scales. Its
dial, like \solo{}'s, does not saturate --- it is a scan architecture
--- but its high-recall cost is steeper. At Deep-10M under its authors'
configuration \solo{} serves matched recall at $2$--$4.6\times$ the
throughput ($0.9994$ at $460$ qps against $0.9993$ at $99$), with a build
of ${\sim}5$ minutes against SPANN's $3.66$ hours ($3.1$\,h of it
$k$-means head selection), consistent with rank-$b$ dominating closure at
equal replication (\S\ref{sec:rule}). At Deep-100M SPANN's dial climbs
past everything ($0.9996$ at 24 qps), but at matched recall \solo{}
serves $5$--$7\times$ the throughput ($0.9988$ at $320$ against $0.9989$
at $47$), and the build asymmetry is stark: $57.5$\,h ($48.4$\,h of it
$k$-means head selection) and a 235\,GB store against \solo{}'s minutes
from signatures and 160\,GB of leaf blocks, on the same machine, data and
ground truth.

\subsection{MS MARCO Web Search at $10^{8}$: where the architecture is
  weakest}\label{sec:msmarco}
Every corpus above is either low intrinsic dimension (SIFT, Deep) or
$10^{6}$ (GloVe). MS MARCO Web Search is neither: $\num{msmarco/n-d}$
passage embeddings of $768$ dimensions under inner product, with queries
that are \emph{queries} while the database is \emph{passages} --- the
out-of-distribution case of \S\ref{sec:oos} at scale. We report it because
it is the hardest case we could find for a coverage-based index, and
because what survives there is more informative than another win on Deep.

Three things had to be right before the numbers meant anything, and
two of them concern the screen and apply to any per-dimension grid at
this dimension. Maximum inner product is reduced to a metric by the sphere
augmentation $x'=[x,\sqrt{M^{2}-\lVert x\rVert^{2}}]$, $q'=[q,0]$, so that
$\lVert q'-x'\rVert^{2}=\lVert q\rVert^{2}+M^{2}-2\langle q,x\rangle$ and
the existing L2 machinery orders by inner product exactly; objects are
routed as $[x,0]$, which costs nothing and yields the plain
$\langle x,s\rangle$ assignment kernel. The screen needed two corrections.
First, its integer SSD weights dimension $j$ by $1/\mathrm{step}_j^{2}$, so
a per-dimension grid serves a \emph{reweighted} metric: invisible where the
per-dimension ranges are homogeneous (SIFT, Deep), fatal at $768$
dimensions where they span $0.51$--$2.00$. One step for every coded
dimension is required. Second, the augmented coordinate must not be coded
at all --- $\lVert x\rVert$-derived, query-independent, and with a range
that dwarfs the data dimensions' --- so $M^{2}-\lVert x\rVert^{2}$ is
carried as a per-record bias added to the screen's integer SSD, with the
$\mathrm{f}32$ refinement reading the augmented rows. With the grid
per-dimension and the coordinate coded, recall is flat in $\ks$ and rises
with $R$ (the signature of a screen ranking by noise); with both
corrections it is flat in $R$ --- $R{=}100$, $500$ and $2000$ agree to four
decimals --- and tracks coverage.

\begin{table}[t]
\centering\small
\begin{tabular}{ccrrrrrr}
\toprule
& & \multicolumn{2}{c}{$10^{7}$} & \multicolumn{4}{c}{$10^{8}$}\\
\cmidrule(lr){3-4}\cmidrule(lr){5-8}
$b$ & $\ks$ & recall & qps & certificate & recall & qps & p99 (ms)\\
\midrule
8 & 64 & 0.7983 & 928 & 0.7696 & 0.7695 & 1{,}057 & 37\\
8 & 128 & 0.8623 & 640 & 0.8200 & 0.8199 & 575 & 51\\
8 & 256 & 0.9111 & 386 & 0.8602 & 0.8602 & 308 & 78\\
16 & 64 & 0.8728 & 862 & 0.8179 & 0.8178 & 820 & 47\\
16 & 128 & 0.9212 & 506 & 0.8644 & 0.8644 & 440 & 71\\
16 & 256 & 0.9539 & 275 & 0.8987 & 0.8987 & 235 & 110\\
\bottomrule
\end{tabular}

\caption{MS MARCO Web Search, $\alpha{=}2\%$, exact top-10 by inner
product. \emph{certificate} is coverage computed from the stored
signatures under the deployed router, before serving. Candidates per query
are $n$-independent by construction ($\num{msmarco/cand-1e7}$ at $10^{7}$
vs.\ $\num{msmarco/cand-1e8}$ at $10^{8}$ for $b{=}16,\ks{=}256$), and the
latency tail stays tight; what changes with $n$ is the level of the
coverage curve.}
\label{tab:msmarco}
\end{table}

\paragraph{The result, including where the index is not competitive.} At
$10^{8}$ the best point we measured is
$\num{msmarco/recall-1e8-b16-ks256}$ at
$\num{msmarco/qps-1e8-b16-ks256}$ qps, where tuned graph indexes on this
corpus operate near $0.99$. The cause is visible in the table: at matched
work the coverage level \emph{drifts down with $n$} by
$\num{msmarco/drift}$ per decade here
($\num{msmarco/drift-1e7}\to\num{msmarco/drift-1e8}$ at $b{=}16,\ks{=}256$)
against $0.0012$ on Deep over the same step. This is the spread regime of
\S\ref{sec:laws} compounding with scale: where coverage is a slow function
of $b\cdot\ks$, a decade of $n$ costs real recall, and buying it back needs
$b{=}32$ --- a stream we could not hold --- or a much larger $\ks$. On this
corpus, at this scale, \solo{} is not competitive at high recall, and no
amount of kernel work changes that; it is a property of how the neighbours
of a $768$-dimensional text embedding distribute over a $2\%$ sample.

\paragraph{What survives is the claim the paper is actually about.} The
certificate column is computed from the stored signatures alone --- no
serving, no stream, no kernels --- and it matches the served recall to
$\num{cert/msmarco100m-max-err}$ at all
$\num{cert/msmarco100m-points}$ operating points. Proposition~\ref{obs:identity}
therefore holds at $10^{8}$, on $768$-dimensional inner-product data, with
out-of-distribution queries, in the regime where the index performs
\emph{worst}: an operator running this workload would have known these
recalls from one ground-truth pass, before deploying anything, and would
have rejected the configuration for that reason. An index that predicts
its own recall only when the answer is favourable would not be worth
much. The same computation separates the two losses: routing exactly
rather than with the router would raise coverage by
$\num{cert/msmarco100m-router-cost}$
(to $\num{cert/msmarco100m-cov-ceiling}$ at $b{=}16,\ks{=}256$), so the
ceiling and the router's share of the gap are each visible before a query
is served.

\subsection{Deep-1B: the surface certified before it is served, then served}
\label{sec:deep1b}
At $10^{9}$ the asymmetry the certificate exploits is at its widest. A
serving artifact for a billion objects is hundreds of gigabytes and a day
of machine time; the \emph{signatures} are $\num{deep1b/sig-gib}$\,GiB and
answer, from one ground-truth pass, what recall the index would give at
every $(b,\ks)$ --- before any of that is built. That is what was run
first: a uniform $2\%$ vocabulary ($\num{deep1b/ns}$ points, GRAFT-indexed
in $\num{deep1b/router-h}$\,h) and one signature per object
($\num{deep1b/sig-gib}$\,GiB at $k_b{=}16$), and read the operating
surface off it (Table~\ref{tab:deep1b}). Then we built the stream and
served it (Table~\ref{tab:deep1b-served}).

\begin{table}[t]
\centering\small
\begin{tabular}{ccrrrr}
\toprule
$b$ & $\ks$ & cov.\ (exact route) & cov.\ (router) & cand./query & MB/query\\
\midrule
8 & 64 & 0.9388 & 0.9389 & 52{,}449 & 2.5\\
8 & 128 & 0.9709 & 0.9711 & 102{,}304 & 4.9\\
8 & 256 & 0.9883 & 0.9885 & 198{,}846 & 9.5\\
8 & 512 & 0.9955 & 0.9955 & 385{,}355 & 18.5\\
16 & 64 & 0.9726 & 0.9725 & 105{,}492 & 5.1\\
16 & 128 & 0.9889 & 0.9889 & 206{,}311 & 9.9\\
16 & 256 & 0.9960 & 0.9960 & 401{,}879 & 19.3\\
16 & 512 & 0.9986 & 0.9986 & 780{,}212 & 37.5\\
\bottomrule
\end{tabular}

\caption{Deep-1B ($10^{9}\times96$, L2), $\alpha{=}2\%$, $k_b{=}16$, the
track's $10^{4}$ queries against published ground truth. Coverage
\emph{is} the recall scan-only serving would deliver
(Proposition~\ref{obs:identity}); MB/query is the SQ4 scan the posting
budget implies. Lists average $800$ objects with a max/mean of
$\num{deep1b/list-skew}$.}
\label{tab:deep1b}
\end{table}

\paragraph{The certificate.} Three readings. First, the surface is
comfortable: recall $\num{deep1b/cov-b16-ks256}$ costs $19.3$\,MB of scan
per query --- four hundredths of one percent of the database --- and
$\num{deep1b/best-cov}$ costs $\num{deep1b/best-mb}$\,MB. Second,
\emph{the router is free at this scale}: coverage under the
$\num{deep1b/ns}$-point GRAFT router matches exact routing over the whole
vocabulary to within $\num{deep1b/router-gap}$ at every point, so the
approximation we make for speed costs nothing certifiable here (contrast
MS MARCO, \S\ref{sec:msmarco}, where the same substitution costs
$\num{cert/msmarco100m-router-cost}$). Third, the flat router at this $n$
is $\num{deep1b/router-flat-mib}$\,MiB of resident state --- above any
sane cap --- so a deployed billion-scale index is necessarily the depth-2
design of \S\ref{sec:ceiling}, not the flat one. The rest of this
subsection measures both.

\paragraph{Served, flat router (depth 1).} The $b{=}8$ SQ4 stream is
$\num{deep1b/s2/stream-gb}$\,GB and lives on a rotational hardware RAID
--- the slowest storage in this paper, chosen because it was the only
one large enough --- with the $\num{deep1b/s2/router-gb}$\,GB router
mapped, under a 16\,GB cgroup cap, 64 threads. At $(b,\ks,R)=(8,512,2000)$
the served recall@10 is $\num{deep1b/s2/recall}$ on
$\num{deep1b/s2/nq}$ queries; the certificate's account of the same
queries is coverage $\num{deep1b/s2/cov-same}$ less a screen loss of
$\num{deep1b/s2/screen-loss}$, and the residual is
$\num{deep1b/s2/identity-err}$ --- Proposition~\ref{obs:identity} holds
at $10^{9}$ as it did at $10^{6}$--$10^{8}$. Throughput is
$\num{deep1b/s2/qps-pread}$\,qps, $\num{deep1b/s2/io-mb}$\,MB of counted
IO per query, and where it comes from matters more than the number: the
same artifact served through page faults on the mapped stream reads
$\num{deep1b/s2/qps-mmap}$\,qps ($\num{deep1b/s2/pread-gain}\times$
slower; scan $\num{deep1b/s2/scan-ms-mmap}$ against
$\num{deep1b/s2/scan-ms-pread}$\,ms/query), because the kernel's
readahead and fault path, not the array, were the bottleneck. Serving is
explicit \texttt{pread} of the $\ks$ list segments. A three-stage
cascade with an SQ8 re-rank on NVMe between the 4-bit screen and the
exact refine is lossless and worth $\num{deep1b/s2/cascade-gain}\times$
--- once readahead is fixed the refine is cheap enough that a
$96$\,GB side-index buys nothing.

\begin{table}[t]
\centering\small
\resizebox{\textwidth}{!}{
\begin{tabular}{clrrrrrrrr}
\toprule
depth & cap & thr. & queries & $\ks$ & recall@10 & qps & route & scan & resident B/obj.\\
& & & & & & & \multicolumn{2}{c}{(ms/query)} & \\
\midrule
1 & 16\,GB & 64 & 500 & 512 & 0.9912 & 25.44 & 12.4 & 26.9 & 17.18\\
1 & 12\,GB & 64 & 500 & 512 & 0.9912 & 1.84 & 516.6 & 27.2 & 12.88\\
1 & 8\,GB & 64 & 500 & 512 & 0.9912 & 0.40 & 2442.2 & 38.0 & 8.59\\
\midrule
2 & 16\,GB & 64 & 500 & 512 & 0.9904 & 30.15 & 9.1 & 24.1 & 17.18\\
2 & 8\,GB & 64 & 500 & 512 & 0.9904 & 25.18 & 9.2 & 30.5 & 8.59\\
2 & 4\,GB & 64 & 500 & 512 & 0.9904 & 19.59 & 11.4 & 39.6 & 4.29\\
2 & 2\,GB & 64 & 500 & 256 & 0.9832 & 20.52 & 12.8 & 35.9 & 2.15\\
2 & 1\,GB & 16 & 200 & 512 & 0.9925 & 11.92 & 28.2 & 55.7 & 1.07\\
2 & 1\,GB & 8 & 200 & 512 & 0.9925 & 7.14 & 46.5 & 93.5 & 1.07\\
2 & \textbf{512\,MB} & 8 & 200 & 512 & \textbf{0.9925} & \textbf{6.93} & 47.7 & 96.6 & \textbf{0.54}\\
\bottomrule
\end{tabular}
}
\caption{Deep-1B served: depth 1 (flat $\num{deep1b/ns}$-point router,
$\num{deep1b/router-flat-mib}$\,MiB) against depth 2 (the router
indexed by \solo{}: a $2\%$ sample of it, $\num{deep1b/router-d2-mib}$\,MiB)
under falling cgroup caps, $b{=}8$, $R{=}2000$, explicit \texttt{pread}
of the $\ks$ list segments on both levels. Resident B/object is the cap
divided by $n$ (the cgroup peak equals the cap in every row). Rows at
different thread or query counts are reported but never compared: the
200-query subset reads $\num{deep1b/floor-recall}$ where the 500-query
set reads $\num{deep1b/recall-d2}$.}
\label{tab:deep1b-served}
\end{table}

\paragraph{Served, depth 2, and the collapse it removes.} The router is
a database of $\num{deep1b/ns}$ points, so we indexed it with the same
rule: a $2\%$ sample of it ($400$k points) is the resident router, and
the vocabulary's own posting lists are an SQ4 stream on NVMe served by
the same kernel as the object level. The resident router falls from
$\num{deep1b/router-flat-mib}$ to $\num{deep1b/router-d2-mib}$\,MiB
($\num{deep1b/router-ratio}\times$, the $1/\alpha$ the law predicts),
and the second hop costs $\num{deep1b/router-io-mb}$\,MB of counted IO.
Depth 2 is not a throughput sacrifice: at 16\,GB it serves
$\num{deep1b/qps-d2-16g}$ against depth 1's $\num{deep1b/qps-d1-16g}$\,qps
($\num{deep1b/d2-gain-16g}\times$), because a two-hop route whose second
hop is a counted read beats walking a mapped graph of ten gigabytes
(routing $\num{deep1b/route-ms-d2-16g}$ vs.\
$\num{deep1b/route-ms-d1-16g}$\,ms/query). And depth 1 does not degrade
below its footprint, it collapses: $\num{deep1b/qps-d1-16g}\to
\num{deep1b/qps-d1-12g}\to\num{deep1b/qps-d1-8g}$\,qps at $16/12/8$\,GB,
all of it in routing, recall unchanged --- the beam traversal paging
its graph, the cliff of \S\ref{sec:capped-baselines} reproduced inside
\solo{}. Depth 2 at 8\,GB serves $\num{deep1b/qps-d2-8g}$\,qps,
$\num{deep1b/d2-gain-8g}\times$ depth 1 at the same cap, and at
4\,GB $\num{deep1b/qps-d2-4g}$. (Thrashing rows carry a $\sim2\times$
run-to-run spread; the collapse is one to two orders of magnitude, not
a value.)

\paragraph{The floor.} Below 4\,GB the cap is no longer a cap on the
router but on the per-thread scratch, and two resident terms surface
there; the law of \S\ref{sec:ceiling} accounts for both. The scan kernels' dedupe scratch was a
bitmap of $n$ bits \emph{per thread} ---
$\num{deep1b/bitmap-mib-thread}$\,MiB at $10^{9}$,
$\num{deep1b/bitmap-gib-64}$\,GiB across 64 threads, against a
$\num{deep1b/router-d2-mib}$\,MiB router --- and is now an
open-addressed set sized by the candidates a query can touch. The
rank-starts array of the layout is $\Theta(\alpha n)$ and does not
recurse ($\num{deep1b/starts-mib-before}$\,MiB loaded); serving at a
fixed $b$ reads two of its $b{+}1$ columns and lists are contiguous, so
one CSR offsets array ($\num{deep1b/starts-mib-after}$\,MiB,
$\num{deep1b/starts-ratio}\times$) carries both, mapped rather than held.
With both, $10^{9}$ objects serve at recall@10
$\num{deep1b/floor-recall}$ from a
$\textbf{\num{deep1b/floor-cap-mb}\,MB}$ cap
($\num{deep1b/floor-threads}$ threads, $\num{deep1b/floor-qps}$\,qps,
$\num{deep1b/floor-b-per-object}$ resident bytes per object). Against
the same cap and thread count at $10^{8}$ (Table~\ref{tab:capped}:
$\num{deep1b/1e8-512m-b-per-object}$\,B/object at recall
$\num{deep1b/1e8-512m-recall}$), a decade more data costs
$\num{deep1b/floor-vs-1e8}\times$ \emph{fewer} resident bytes each; against
the flat router's floor at $10^{8}$
($\num{deep1b/1e8-flat-floor-b-per-object-lo}$--$\num{deep1b/1e8-flat-floor-b-per-object-hi}$\,B/object),
thirty to forty times fewer.
That inversion is the recursion; what it does and does not say about
$n$ is the subject of \S\ref{sec:ceiling}.

\paragraph{Depth 3, and the 8-bit router.} Two more things were built on
the same deployment, each closing a step the law of \S\ref{sec:ceiling}
would otherwise leave as arithmetic (Table~\ref{tab:deep1b-stage4}). \emph{Depth 3}: a
$2\%$ sample of the $400$k-point depth-2 router --- $8{,}000$ points,
$\num{s4/d3-router-mib}$\,MiB resident, $\num{s4/d3-router-ratio}\times$
below depth 2 --- with the depth-2 router's own posting lists as a third
SQ4 stream on NVMe. A query now routes in three hops, the third a counted
$\num{s4/d3-extra-io-mb}$\,MB. At 16\,GB it costs $\num{s4/d3-route-ms}$
against $\num{s4/d2-route-ms}$\,ms of routing and returns the same
coverage and recall to four decimals ($\num{s4/d3-cov-diff-16g}$,
$\num{s4/d3-recall-diff-16g}$), at $\num{s4/d3-qps-16g}$ vs.\
$\num{s4/d2-qps-16g}$\,qps against the same-day depth-2 control. Down to
$256$\,MB the two depths are indistinguishable, because the depth-2
router is file-backed and a $189$\,MiB image still mostly fits; at
$128$\,MB it does not --- depth 2 is killed at 4 threads and, at 2
threads, pages its beam at $\num{s4/d2-floor-qps}$\,qps --- while depth
3 serves $10^{9}$ at recall $\num{s4/d3-floor-recall}$ from
$\num{s4/d3-floor-cap-mb}$\,MB ($\num{s4/d3-floor-threads}$ threads,
$\num{s4/d3-floor-qps}$\,qps, $\num{s4/d3-floor-b-per-object}$ resident
bytes per object). That is the recursion doing at $10^{9}$ what the law
says a level does: the router term divided by $1/\alpha$ again, the
floor handed over entirely to the threads. \emph{The 8-bit router}: the
depth-2 router rebuilt over 8-bit-quantized copies of its points
(\S\ref{sec:ceiling}) routes to coverage and recall identical to the
float router's at every cap tried, in the same time. Stage 4 stored that
router in float32, so it showed only that the substitution costs
nothing; stage 5 stores it as 8-bit rows.

\paragraph{The 8-bit router, resident.} The same depth-2 router image,
its graph byte for byte, with its rows re-encoded as 8-bit integers on
one global affine grid and searched by an integer beam with the query
quantized on the same grid (Table~\ref{tab:deep1b-stage5}). The resident
image falls from $\num{s5/float-router-mib}$ to
$\num{s5/u8-router-mib}$\,MiB ($\num{s5/router-ratio}\times$; what
remains is mostly graph state, the $g$ term of \S\ref{sec:ceiling}), and
at every paired cap --- 16\,GB on 64 threads, 512\,MB on 8, 256\,MB on 4
--- coverage and recall are identical to the float control to four
decimals ($\num{s5/16g-t64-cov-diff}$, $\num{s5/512m-t8-recall-diff}$,
$\num{s5/256m-t4-recall-diff}$) and routing is not slower
($\num{s5/16g-t64-u8-route-ms}$ vs.\ $\num{s5/16g-t64-route-ms-float}$\,ms
at 64 threads). Where the float router pages, the 8-bit one fits: at
128\,MB on 2 threads the float depth-2 route paged its beam at
$\num{s4/d2-floor-qps}$\,qps, the 8-bit one serves
$\num{s5/128m-t2-u8-qps}$\,qps with routing at
$\num{s5/128m-t2-u8-route-ms}$\,ms, and at $\num{s5/u8-floor-cap-mb}$\,MB
it serves $10^{9}$ at recall $\num{s5/u8-floor-recall}$ at
$\num{s5/u8-floor-qps}$\,qps --- the floor depth 3 reached with a third
level (at $\num{s4/d3-floor-qps}$\,qps), reached at depth 2 by storing
the router in a quarter of the bytes. 64\,MB is killed at either depth,
and 128\,MB at 4 threads is the margin where either router may be
killed or page, as the two rows there say.

\begin{table}[t]
\centering\small
\resizebox{\textwidth}{!}{
\begin{tabular}{llrrccrrrr}
\toprule
cap & router rows & thr. & queries & coverage & recall@10 & qps & route ms & router MiB & B/obj.\\
\midrule
16\,GB & float & 64 & 500 & 0.9936 & 0.9904 & 25.84 & 9.1 & 188.7 & 17.18\\
16\,GB & u8 & 64 & 500 & 0.9936 & 0.9904 & 30.29 & 8.7 & 78.9 & 17.18\\
\midrule
512\,MB & float & 8 & 200 & 0.9925 & 0.9925 & 6.98 & 48.1 & 188.7 & 0.54\\
512\,MB & u8 & 8 & 200 & 0.9925 & 0.9925 & 6.89 & 47.3 & 78.9 & 0.54\\
\midrule
256\,MB & float & 4 & 200 & 0.9925 & 0.9925 & 3.65 & 90.7 & 188.7 & 0.27\\
256\,MB & u8 & 4 & 200 & 0.9925 & 0.9925 & 3.70 & 87.3 & 78.9 & 0.27\\
\midrule
128\,MB & float & 4 & 200 & 0.9925 & 0.9925 & 0.81 & 1059.0 & 188.7 & 0.13\\
128\,MB & u8 & 4 & --- & \multicolumn{4}{l}{\emph{OOM-killed or timed out}} & --- & 0.13\\
\midrule
128\,MB & u8 & 2 & 200 & 0.9925 & 0.9925 & 1.91 & 169.2 & 78.9 & 0.13\\
\midrule
96\,MB & u8 & 2 & 200 & 0.9925 & 0.9925 & 1.90 & 168.0 & 78.9 & 0.10\\
\midrule
64\,MB & u8 & 2 & --- & \multicolumn{4}{l}{\emph{OOM-killed or timed out}} & --- & 0.07\\
\bottomrule
\end{tabular}
}
\caption{Deep-1B, stage 5: the depth-2 router served from a float image
(188.7\,MiB) and from the same image with its rows as 8-bit integers on
one global grid (78.9\,MiB; graph sections byte-identical), each pair at
the same cap, thread count and query count on the same day. $b{=}8$,
$\ks{=}512$, $\ks'{=}256$, $R{=}2000$, explicit \texttt{pread} at both
levels. Coverage is computed on the served queries from the stored
signatures. Rows at different thread or query counts are never
compared.}
\label{tab:deep1b-stage5}
\end{table}

\begin{table}[t]
\centering\small
\resizebox{\textwidth}{!}{
\begin{tabular}{llrrccrrrr}
\toprule
cap & route & thr. & queries & coverage & recall@10 & qps & route ms & router MiB & B/obj.\\
\midrule
16\,GB & depth 2 (control) & 64 & 500 & 0.9936 & 0.9904 & 25.60 & 9.1 & 188.7 & 17.18\\
16\,GB & depth 2, 8-bit router & 64 & 500 & 0.9936 & 0.9904 & 30.00 & 9.1 & 188.7 & 17.18\\
16\,GB & depth 3 & 64 & 500 & 0.9936 & 0.9904 & 29.14 & 10.0 & 3.6 & 17.18\\
\midrule
512\,MB & depth 2 (control) & 8 & 200 & 0.9925 & 0.9925 & 6.88 & 48.7 & 188.7 & 0.54\\
512\,MB & depth 2, 8-bit router & 8 & 200 & 0.9925 & 0.9925 & 6.99 & 47.9 & 188.7 & 0.54\\
512\,MB & depth 3 & 8 & 200 & 0.9925 & 0.9925 & 6.73 & 50.5 & 3.6 & 0.54\\
\midrule
256\,MB & depth 2 (control) & 4 & 200 & 0.9925 & 0.9925 & 3.66 & 91.5 & 188.7 & 0.27\\
256\,MB & depth 3 & 4 & 200 & 0.9925 & 0.9925 & 3.19 & 128.0 & 3.6 & 0.27\\
\midrule
128\,MB & depth 2 (control) & 4 & 200 & \multicolumn{4}{l}{\emph{OOM-killed}} & 188.7 & 0.13\\
128\,MB & depth 3 & 4 & 200 & 0.9925 & 0.9925 & 3.12 & 136.5 & 3.6 & 0.13\\
\midrule
128\,MB & depth 2 (control) & 2 & 200 & 0.9925 & 0.9925 & 0.49 & 1645.7 & 188.7 & 0.13\\
128\,MB & depth 3 & 2 & 200 & 0.9925 & 0.9925 & 1.59 & 270.3 & 3.6 & 0.13\\
\midrule
96\,MB & depth 3 & 2 & 200 & 0.9925 & 0.9925 & 1.64 & 269.8 & 3.6 & 0.10\\
\bottomrule
\end{tabular}
}
\caption{Deep-1B, stage 4: the depth-2 router rebuilt over 8-bit
copies of its points, and depth 3 (an $8{,}000$-point router, the
depth-2 router's lists as a third SQ4 stream), each beside a same-day
depth-2 float control at the same cap, thread count and query count.
$b{=}8$, $\ks{=}512$, $\ks'{=}256$, $\ks''{=}128$, $R{=}2000$, explicit
\texttt{pread} at every level. Coverage is computed on the served
queries from the stored signatures. Rows at different thread or query
counts are never compared.}
\label{tab:deep1b-stage4}
\end{table}

\subsection{Build and asset ledger}\label{sec:build}
From scratch: one router build plus $n$ independent searches ---
SIFT-1M in 5\,s (laptop), Deep-10M in ${\sim}5$\,min (d0, 64
threads); assignment for every $b$ is free thereafter. Assets at
Deep-100M ($b{=}16$): postings 6.4\,GB; SQ8 codes 9.6\,GB (158\,s);
optional disk leaf blocks 160\,GB (830\,s, sequential); router image
1.0\,GB. No codebooks, no graphs, no training anywhere in the
pipeline.

\subsection{Ablation: recursion vs.\ per-list graphs}\label{sec:ablation}
Before adopting the one-rule recursion we built the natural
alternative: attach a small graph index to every oversized list and
answer by per-list graph searches. The recursion wins at every matched
GloVe operating point on \emph{both} axes: $\num{abl/rec-recall-ks64}$
at $\num{abl/rec-qps-ks64}$ qps against the hybrid's
$\num{abl/hyb-recall-ks64}$ at $\num{abl/hyb-qps-ks64}$ ($\ks{=}64$),
and $\num{abl/rec-recall-ks128}$ at $\num{abl/rec-qps-ks128}$ against
$\num{abl/hyb-recall-ks128}$ at $\num{abl/hyb-qps-ks128}$
($\ks{=}128$) --- while replacing 58 minutes of graph construction
with 2 minutes of GEMM assignments and 4.2\,GB of graphs with
postings. One caveat on the throughput half of that claim: the
$\ks{=}64$ rows of the two designs were measured in different sessions
on the same laptop, so their ratio carries the machine's thermal drift
as well as the design difference; the $\ks{=}128$ pair was measured
interleaved, and the recall comparison --- stable to $10^{-4}$ across
re-runs --- is unaffected at both settings. Uniformity costs nothing;
the auxiliary structure retired.

\section{The resident-memory law}\label{sec:ceiling}
Three terms are resident when \solo{} serves from storage: a router over
$\alpha^{L}n$ points, per-thread scratch, and the page cache of the list
reads. The rule applied to the router divides the first by $1/\alpha$ per
level, and the floors of \S\ref{sec:capped} and \S\ref{sec:deep1b} are
that division carried out through depth 3 at $10^{9}$. This section
states the law, the ceiling it implies at each depth, what it says and
does not say about $n$, and what the router term costs under
quantization.

\paragraph{Three terms.}
With sample fraction $\alpha$, a per-sample-point cost $c_r$
($c_r=4d+g$ for a float router, $c_r=d+g$ with 8-bit router vectors;
$g\approx64$\,B of graph state, measured), $T$ serving threads and an
operating point $\mathit{op}=(b,\ks,R)$,
\[
\begin{aligned}
\mathrm{resident}(n,L,T,\mathit{op})\;=\;&
\underbrace{\alpha^{L}\,n\,c_r}_{\text{router}}
\;+\;\underbrace{T\cdot s(\mathit{op})}_{\text{scratch}}\\
&\;+\;O(\text{page cache}),
\end{aligned}
\]
the last term the page cache of the $\ks$ list reads, where $L$ is the
recursion depth of the router (\S\ref{sec:recursion},
pointed upward; $L=1$ is the flat router) and $s(\mathit{op})$ is the
per-thread dedupe scratch, an open-addressed set of $8$\,B slots whose
count is the next power of two above twice the candidates a query can
touch. Everything else the layout needs --- the rank-starts offsets
($\Theta(\alpha n)$, and they do \emph{not} recurse), the code stream,
the base vectors --- is mapped or read explicitly and reclaimable page by
page. The scratch and page-cache terms were identified at $10^{9}$
(\S\ref{sec:deep1b}), where each exceeded the router it was compared
against: a bitmap of $n$ bits per thread was
$\num{deep1b/bitmap-gib-64}$\,GiB against a
$\num{deep1b/router-d2-mib}$\,MiB router, and the starts array
$\num{deep1b/starts-mib-before}$\,MiB before it was compacted.

\paragraph{The flat ceiling, and why it binds.}
At depth 1, $\alpha=2\%$, $d=96$ the router term is
$\num{ceiling/bytes-per-object}$\,B/object, so the flat ceiling is
$n^{\ast}(1)=M/(\alpha c_r)\approx\num{ceiling/nstar-1gb}\times10^{8}$
at $M=1$\,GB. The flat router's measured floor at $10^8$
(Table~\ref{tab:capped-floor}),
$\num{deep1b/1e8-flat-floor-b-per-object-lo}$--$\num{deep1b/1e8-flat-floor-b-per-object-hi}$\,B/object
once scratch and the reads in flight are counted, sits above it: the
flat-router design does not reach 1\,GB at $10^8$, and the Deep-100M
deployment at that cap is the depth-2 one (Table~\ref{tab:capped}). At
$10^{9}$ the same arithmetic gives
$\num{ceiling/1e9-flat-analytic-gib}$\,GiB and the built router is
$\num{deep1b/router-flat-mib}$\,MiB, the difference being graph state
above the $64$-byte estimate. The ceiling cannot be dodged by shrinking
$\alpha$ --- a sublinear vocabulary is the constant-pivot regime whose
resolution does not scale, and the linear vocabulary is the recall
engine --- nor by paging the router: beam traversal is random access,
the one pattern that thrashes under a cap, and
Table~\ref{tab:deep1b-served} shows what that looks like,
$\num{deep1b/qps-d1-16g}\to\num{deep1b/qps-d1-12g}\to\num{deep1b/qps-d1-8g}$\,qps
as the cap falls from 16 to 8\,GB under a $\num{deep1b/s2/router-gb}$\,GB
router, with recall untouched.

\paragraph{The rule pointed upward.}
The router is a database of $\alpha n$ points, so \solo{} indexes it:
the resident state becomes the sample of the sample, and the router's
posting lists move into a code stream on storage like everything else.
The router term becomes $\alpha^{L}n\,c_r$ and the ceiling
\[
n^{\ast}(L)=\frac{M-T\,s(\mathit{op})}{\alpha^{L}c_r}
      \;\approx\;n^{\ast}(1)\cdot(1/\alpha)^{\,L-1}
\]
when the scratch is small against $M$: each level multiplies the ceiling
by $1/\alpha=50$, and each level adds one routing stage that is a
counted read of ${\sim}\ks\cdot(b_r/\alpha)$ codes
($\num{deep1b/router-io-mb}$\,MB per query at $10^{9}$). The step is
measured where it starts. At $10^{8}$ the $2$M-point vocabulary (a
$\num{upward/resident-flat-mib}$\,MiB flat router) was itself indexed by
\solo{} --- a $2\%$ sample of $40$k points, $\num{upward/resident-mib}$\,MiB
resident, $\num{upward/ratio}\times$ less --- with the vocabulary posted
to it at $b_r{=}8$ and queries routed in two hops, the router level
scanned by the same fused kernel as the object level. Routing loses what
one expects and the object level absorbs it: at $\ks{=}64$ the two-hop
route returns $\num{upward/agree-d2-ks64}$ of the exact top-$\ks$
vocabulary terms against the flat router's $\num{upward/agree-d1-ks64}$,
yet end-to-end recall falls only from $\num{upward/recall-d1-ks64}$ to
$\num{upward/recall-d2-ks64}$, at most $\num{upward/max-recall-loss}$
across the grid, because a covered neighbor sits in a median of several
touched lists (\S\ref{sec:insurance}). Throughput costs
$\num{upward/thru-ks256}\times$ at $\ks{=}256$
($\num{upward/qps-d1-ks256}\to\num{upward/qps-d2-ks256}$ qps) where the
flat router fits in RAM, and the certificate composes: coverage under
the depth-2 route predicts the served recall to
$\num{upward/compose-err}$, the same residual as at depth 1. At $10^{9}$,
where the flat router does not fit, the sign of the throughput cost
flips --- depth 2 is $\num{deep1b/d2-gain-16g}\times$ faster than depth 1
at 16\,GB and $\num{deep1b/d2-gain-8g}\times$ at 8\,GB
(Table~\ref{tab:deep1b-served}) --- and the measured floor is
$\num{deep1b/floor-cap-mb}$\,MB at recall $\num{deep1b/floor-recall}$,
$\num{deep1b/floor-b-per-object}$\,B/object, the whole of it the
$\num{deep1b/router-d2-mib}$\,MiB depth-2 router plus eight threads of
scratch; depth 3 serves from $\num{s4/d3-floor-cap-mb}$\,MB
(Table~\ref{tab:deep1b-stage4}). At $M=1$\,GB and $d=96$ the arithmetic
places $n^{\ast}$ near $5.6\times10^{9}$ at depth 2 and
$3\times10^{11}$--$10^{12}$ at depth 3 (float or 8-bit top router).

\paragraph{What the law says about $n$, and what it does not.}
The router term is linear in $n$: recursion divides its constant by $50$
per level and does not remove it, so the $\num{deep1b/router-d2-mib}$\,MiB
depth-2 router at $10^{9}$ is ${\sim}\num{ceiling/1e10-d2-router-gib}$\,GiB
at $10^{10}$ at the same depth, and ${\sim}\num{ceiling/1e10-d3-router-mib}$\,MiB
at depth 3. The scratch is independent of $n$ only at a fixed operating
point, and holding recall fixed does not fix the operating point: the
candidates a query must touch at a recall target grow with $n$ as the
exponents of \S\ref{sec:laws} say, $n^{0.4\text{--}0.6}$ over the
measured range and steepest in the last decade, so $T\cdot s(\mathit{op})$
grows the same way --- a few MiB per thread at $10^{9}$, and not a
constant. ``Deliverable at any database size'' therefore means that for
every $n$ there is a depth $L$ at which the resident state fits a fixed
budget, each such $L$ a build; it is measured through depth 3 at
$10^{9}$, where depth 3 serves from $\num{s4/d3-floor-cap-mb}$\,MB and
depth 2 no longer fits $128$\,MB. What would block $10^{10}$ in this
implementation is not memory: the object identifiers are 32-bit in both
kernels and in the layout builder, the layout build sorts $n\cdot b$
identifiers in RAM, and the stream and base together are
${\sim}7.6$\,TiB. Those are engineering, stated here so that the ceiling
is not read as a claim past the last measured point.

\paragraph{The router term under quantization.}
The law admits an 8-bit router, and what that costs is a question about
the certificate: quantizing the router changes which vocabulary points a
query routes to, and coverage is recall (Prop.~\ref{obs:identity}). We
measured it on four corpora spanning dimension and intrinsic dimension,
ranking the vocabulary under each code, \emph{paired} over the same
$\num{rq/deep-nq}$ queries under both routers --- which is what makes it
decidable, since the paired difference is dominated by the many queries
whose routed pool is identical under the two codes. An 8-bit router is
free: at $\ks{=}32$ the paired cost is $\num{rq/sq8-worst-diff}$ on every
corpus, with a bootstrap interval no wider than $\pm\num{rq/sq8-worst-ci}$,
and the number of queries whose routed pool changes at all is
$\num{rq/wiki-sq8-ndiff}$ of $\num{rq/wiki-nq}$ on wiki-1024,
$\num{rq/deep-sq8-ndiff}$ on Deep-96, $\num{rq/sift-sq8-ndiff}$ on
SIFT-128 and $\num{rq/glove-sq8-ndiff}$ on GloVe. Four bits is free on
three of the four ($\num{rq/deep-sq4-diff}$, $\num{rq/sift-sq4-diff}$,
$\num{rq/wiki-sq4-diff}$) and costs $\num{rq/glove-sq4-diff}$ on GloVe.
Two bits splits by concentration rather than by dimension: at
essentially equal dimension the penalty is $\num{rq/deep-evp-diff}$ on
Deep-96 against $\num{rq/glove-evp-diff}$ on GloVe-100, and only
$\num{rq/wiki-evp-diff}$ at $d{=}1024$, where the ternary code keeps
$512$ coordinates --- the same regime split the equal-work law draws
(\S\ref{sec:laws}). Quantizing the query as well costs a further
$\num{rq/glove-evptt-diff}$ on GloVe and $\num{rq/deep-evptt-diff}$ on
Deep, for one vector, and is never worth it.

\begin{table}[t]\centering\small
\caption{The router term under quantization: $c_r$ and the depth-1 ceiling
$n^{\ast}$ at $M=1$\,GB, from $c_r=(\text{bytes per router vector})+g$ with
$g\approx64$\,B and $\alpha=2\%$ (\S\ref{sec:ceiling}); coverage cost
measured \emph{paired} over $\num{rq/deep-nq}$ queries per corpus.}
\label{tab:router-q}
\resizebox{\textwidth}{!}{
\begin{tabular}{@{}lrrrrl@{}}
\toprule
& \multicolumn{2}{c}{$d=96$} & \multicolumn{2}{c}{$d=1024$} & \\
\cmidrule(lr){2-3}\cmidrule(lr){4-5}
router codes & $c_r$ & $n^{\ast}$ & $c_r$ & $n^{\ast}$ & coverage cost at $\ks{=}32$\\
\midrule
float32 & 448 & $1.2{\times}10^{8}$ & 4{,}160 & $1.3{\times}10^{7}$ & reference\\
8-bit & 160 & $3.4{\times}10^{8}$ & 1{,}088 & $4.9{\times}10^{7}$ & \textbf{none}: 0.0000 everywhere, $\le$ 49/2{,}000 queries affected\\
4-bit & 112 & $4.8{\times}10^{8}$ & 576 & $9.3{\times}10^{7}$ & none, except -0.0072 on GloVe\\
2-bit ternary & 88 & $6.1{\times}10^{8}$ & 320 & $1.7{\times}10^{8}$ & -0.0018 ($d{=}1024$), -0.0121 (Deep-96), -0.0731 (GloVe)\\
\bottomrule
\end{tabular}
}
\end{table}

Table~\ref{tab:router-q} prices the consequence: an 8-bit router
multiplies the depth-1 ceiling by $\num{rq/gain-sq8-d96}\times$ at
$d{=}96$ ($n^{\ast}$ from $\num{ceiling/nstar-1gb}\times10^{8}$ to
$\num{rq/nstar-8}\times10^{8}$ at $M{=}1$\,GB) and by
$\num{rq/gain-sq8-d1024}\times$ at $d{=}1024$, and by the same factor at
every depth; at embedding dimensions the 2-bit router is worth another
$3.4\times$ for about $20\%$ more scanned data (float to 2 bits is
$\num{rq/gain-evp-d1024}\times$ the ceiling at $d{=}1024$). These are
$10^6$-scale laptop measurements with $\num{rq/deep-nq}$ queries, one
$\alpha$ and one $b$; the substitution is confirmed end to end at
$10^{9}$ on the depth-2 router (Table~\ref{tab:deep1b-stage4}), rebuilt
over 8-bit copies of the same $400$k points, which routes to the same
coverage and recall to four decimals on the same queries
($\num{s4/e1-cov-diff-16g}$, $\num{s4/e1-recall-diff-16g}$) in the same
time ($\num{s4/e1-route-ms-sq8}$ vs.\ $\num{s4/e1-route-ms-float}$\,ms);
and stored as 8-bit rows (Table~\ref{tab:deep1b-stage5}) the resident
image falls from $\num{s5/float-router-mib}$ to
$\num{s5/u8-router-mib}$\,MiB and serves $10^{9}$ from
$\num{s5/u8-floor-cap-mb}$\,MB at depth 2, the floor depth 3 needed a
third level for (\S\ref{sec:deep1b}). The ratio is
$\num{s5/router-ratio}\times$ rather than the $4\times$ of the vector
bytes because at low $d$ the $64$-byte constant $g$ is the binding term
--- $40\%$ of an 8-bit $c_r$ at $d{=}96$, and more than half of the
8-bit image measured --- so below 8 bits the graph state, not the
vectors, is what must be attacked.

\section{Where \solo{} sits}\label{sec:where}
Table~\ref{tab:compare} reports throughput against HNSW because a reader
expects it, but a throughput contest is a contest over a region both
architectures already reach, decided by kernels and by the machine
(\S\ref{sec:setup}). What distinguishes \solo{} is a set of operating
points no measured system has, and the structure that produces them.
Four points, each with its evidence in this paper.

\paragraph{It serves where nothing else fits and answers.}
At Deep-100M \solo{} serves recall $\num{capped/1g-recall-d2}$ from
$\num{capped/peak-gb}$\,GB of enforced resident memory and from
$\num{capped/solo-floor512-mb}$\,MB on $\num{capped/solo-floor512-threads}$
threads (Table~\ref{tab:capped}). Six other systems were run through the
identical protocol (Table~\ref{tab:capped-baselines}): SPANN, DiskANN and
vote-served \misi{} are killed below $8$, $12$ and $6$\,GB, the
re-ranking indexes of ScaNN and RaBitQ below
$\num{capped/scann-bf16/floor-gb}$ and
$\num{capped/rabitq-rq4-refine/floor-gb}$, and the indexes that do run
small --- Faiss IVF-PQ, ScaNN scoring on its 4-bit codes, RaBitQ on its
1- and 4-bit codes --- saturate at $\num{capped/ivfpq-ceiling}$,
$\num{capped/scann-ah/ceiling}$, $\num{capped/rabitq-rq1-codes/ceiling}$
and $\num{capped/rabitq-rq4-codes/ceiling}$ recall. The region is the conjunction of
high recall and a tight budget, and it is the conjunction, not either
half, that no other system reaches. The reason is a mechanism. Paging a
random-access navigation structure amplifies reads ${\sim}85\times$ ---
letting vector reads fault through \texttt{mmap} pulls $6$\,TB in $80$
minutes under a $4$\,GB cap (\S\ref{sec:kernels}) --- because a beam goes
where the graph leads and the access order is not known until the query
is over; a scan's access order is known before the query starts, which
is what makes explicit, sequential, counted IO possible. The alternative
that fits under the cap, scoring with quantized codes instead of
verifying with the true distance, cannot reach the recall: SPANN's
native Int8 scorer saturates at $\num{sat/spann-int8}$ on the identical
grid that is lossless as a screen with exact refinement
(\S\ref{sec:sq}), and IVF-PQ, ScaNN and RaBitQ repeat the mechanism in
three systems with no graph in them. Under a cap the arms split by access
order: capping IVF-PQ from 8 to 1\,GB costs it $40\times$ in throughput
where it costs \solo{} $\num{capped/solo-1g-vs-2g-pct}\%$ from 2 to
1\,GB, because a probed list is a random read and a scan is not.

\paragraph{The load decomposes vertically, which a graph cannot do.}
Every non-leaf list in \solo{} is a complete \solo{} index behind a
two-argument interface: given a query and a local budget, return
candidates. Nothing crosses a level except a candidate set. A beam
search exposes no such interface, which is why distributed graph systems
replicate whole indexes rather than split them. The decomposition
behaves: recall composes multiplicatively across levels and the depth-2
index lands on its composed prediction to $2\times10^{-4}$
(\S\ref{sec:recursion}); coverage inside the decomposed lists is not
degraded but improved, up to $0.9994$; the recursion beat a graph index
attached to each oversized list on recall and throughput together with a
build $25\times$ cheaper (\S\ref{sec:ablation}); and the whole path is
bitwise deterministic at any thread count and across machines
(\S\ref{sec:dynamics}), so a sharded deployment can be verified against
a single-node reference, object for object. Served with its two levels
in separate processes --- a $\num{split/router-mib}$\,MiB routing tier
and $\num{split/bytes-per-query}$ bytes per query on the wire --- the
depth-2 GloVe index returns answers \num{split/identical} to the
single-process index at every operating point; the prototype's hop is
synchronous and both tiers share one machine, so its throughput cost
($12$--$33\%$) measures the prototype, not the design. Because per-level
coverage multiplies and each level is certifiable on its own, the recall
of a distributed deployment is predictable from per-level certificates
before it is assembled.

\paragraph{Replication pays here and does not pay for the graph.}
Four socket-local replicas lift \solo{} by
$\num{deep100m/repl-solo-gain}\times$ and hnswlib by
$\num{deep100m/repl-hnsw-gain}\times$ on the same machine and queries
(\S\ref{sec:compare}). A scan is bandwidth-bound and a walk is
latency-bound, so replication converts memory controllers into
throughput for one and buys much less for the other; the same asymmetry
holds across sockets, NUMA nodes, threads and machines.

\paragraph{The same rule pointed upward.}
The one resident term that grows with $n$ is the router, and each
recursion level divides it by $1/\alpha=50$ (\S\ref{sec:ceiling}). The
arithmetic places $n^{\ast}\approx\num{ceiling/nstar-1gb}\times10^{8}$ at
depth 1 and $M=1$\,GB --- where the measured Deep-100M deployment of
Table~\ref{tab:capped} sits --- $5.6\times10^{9}$ at depth 2 and
$10^{11}$--$10^{12}$ at depth 3, each level adding a few megabytes of
sequential IO per query. Depth 2 is measured at $10^{8}$
($\num{upward/ratio}\times$ less resident routing state for
$\num{upward/max-recall-loss}$ of recall, the certificate composing to
$\num{upward/compose-err}$) and at $10^{9}$, where it is faster than the
flat router and serves from $\num{deep1b/floor-cap-mb}$\,MB; depth 3 is
measured at $10^{9}$ (\S\ref{sec:deep1b}). Work per query grows
sublinearly on top: the scanned fraction of the database falls
throughout the four measured decades (\S\ref{sec:laws}), with the
steepest decade last, so the levels buy capacity without the scan
growing in proportion, and the paper claims nothing past the last
measured point.

\section{Related work}\label{sec:rel}
\emph{Cluster-routed lists.} IVF and its quantized descendants
\cite{jegou2011pq} route to $k$-means cells and scan them; SPANN
\cite{chen2021spann} scales this to storage with balanced hierarchical
clustering, closure multi-assignment ($\le8$ replicas), and an
in-memory centroid graph \cite{sptag2018}, with SPFresh
\cite{xu2023spfresh} adding update protocols.
\solo{} occupies the same serving silhouette with a \emph{random}
vocabulary (no clustering; deterministic; metric-black-box), rank-$b$
assignment (measured to dominate closure, \S\ref{sec:rule}),
serve-time replication, recursion instead of balancing, an exact
scan (its 8-bit screen re-ranks exactly, unlike lossy IVFPQ scoring),
and a certificate.
\emph{Permutation and pivot indexes.} Objects can be described by the
order in which a fixed pivot set is seen from them
\cite{chavez2008permutations}; MI-File \cite{amato2014mifile} indexes
that description in an inverted file and NAPP
\cite{tellez2013napp,naidan2015permutation} keeps only the nearest
pivots and intersects. All of them draw the pivots from a set of
\emph{constant} size: NAPP attains perfect recall at $10^6$
(Table~\ref{tab:compare}, text) but its resolution cannot grow with
$n$, which is the property $10^8$ demands. \misi{} \cite{chavez2026misi} made
the vocabulary a linear-size sample and served by vote-then-verify;
\solo{} keeps the representation and replaces the serving theory:
scan-only, certificate, recursion, quantized leaves.
\emph{Navigable graphs.} HNSW \cite{malkov2020hnsw}, DiskANN
\cite{subramanya2019diskann}, and \graft{} \cite{graft2026} are the throughput
frontier in RAM below their recall ceilings; their recall is emergent,
their updates hard, and their disk incarnations pin navigation state.
The comparison of \S\ref{sec:compare} is deliberately against the
strongest of each.

\section{Limitations and future work}\label{sec:future}

\subsection{Filtered search: what the identity gives, and what it
  still owes}\label{sec:filter}
\emph{This subsection is analysis, not measurement.} Predicate-constrained
search --- the top-$k$ among objects satisfying a filter --- is where
navigable graphs are most awkward: the filtered subgraph of a graph built
for the unfiltered set need not be navigable, and selective predicates
disconnect it, which is why filtered variants are separate systems
\cite{gollapudi2023filtereddiskann,patel2024acorn}. A scan-only inverted
file has no such failure mode, because it has no traversal to break.
Let $F\subseteq X$ be the surviving set of a predicate, given as a
bitmap. Serving intersects each touched list with $F$ and scans the
survivors; the returned answer is the exact top-$k$ of the scanned
survivors, so Proposition~\ref{obs:identity} holds verbatim with every
quantity conditioned on $F$:
\[
\mathrm{recall@}k\;\bigl|\;F \;=\; \Pr\bigl[\,T_b(y)\cap R_{\ks}(q)\neq\emptyset
\;\bigm|\; y\in F\,\bigr].
\]
Two consequences are worth stating because they are properties no graph
index has. First, the \emph{cost} of a filtered query is known before it
runs: the scanned mass is $\sum_{s\in R_{\ks}(q)}|L_s\cap F|$, a sum of
bitmap popcounts over the touched lists, so a planner can price the query
and, using the rank-ordered postings of \S\ref{sec:kernels}, extend $b$
at query time until the surviving mass reaches its target --- the dial the
unfiltered path already uses. Second, the \emph{certificate} conditions
the same way: coverage computed over the surviving objects certifies
filtered recall, without serving the filter, provided the ground-truth
sample is drawn under it.

What this does not yet settle is the level of that conditional coverage.
A predicate correlated with the geometry --- a category that occupies a
region of the space --- removes neighbors and vocabulary mass
unevenly, and at high selectivity the surviving mass per list falls
faster than $b\cdot\ks$ can be raised. The equal-work law predicts the
compensation required; whether it is affordable at selectivities of
$10^{-2}$ and below, and how the result compares with
filtered-DiskANN and ACORN on the same corpus, is an experiment this
paper does not contain, and it claims no outcome for it.

\subsection{Other limitations}
\begin{itemize}
\item \textbf{One machine for the $10^8$ throughput comparison.} Its
  memory system (four 2\,GHz Haswell sockets, ${\sim}50$\,GB/s to one
  process) bounds a streaming kernel more than a latency-bound graph
  walk; the socket-local protocol removes most of that bound
  (\S\ref{sec:compare}), and the same kernel runs ten times faster per
  core on the laptop, so the ratio between the two systems on current
  hardware remains to be measured.
\item \textbf{Query-batched scanning} (int8 GEMM/ADC over per-list tiles,
  reuse $n_q\ks/|S|$) is unimplemented. Disk throughput is
  bandwidth-bound at ${\sim}40$\,KB mean reads; batching and SQ4 project
  $50$--$100$ qps at 1\,GB.
\item \textbf{Deep-100M rows are single-seed.} The 10M seed variance
  (\S\ref{sec:setup}) bounds the expected fluctuation.
\item \textbf{The out-of-sample provision} (\S\ref{sec:oos}) is measured
  on \num{lidlaw/n-corpora} corpora, with one random split (LAION-10M),
  two in-domain shifts (wiki, PubMed23) and one cross-modal shift
  (T2I-10M). The inner search adds a term the calibration table absorbs
  by construction, and the three corpora with \graft{} signatures are
  the measurement of it. The corpus-specific offsets of
  Figure~\ref{fig:lidlaw} mean the table is per corpus; whether a table
  fitted on a sample of the cross-modal queries then transfers is
  untested.
\item \textbf{Beyond $10^{9}$} nothing is measured; the engineering
  blockers at $10^{10}$ are listed in \S\ref{sec:ceiling}.
\item \textbf{Per-object adaptive $b$} (rank-gap statistics) may dominate
  both fixed $b$ and closure, and is future work.
\end{itemize}

\section{Conclusion}
Making the serving path trivial --- route, scan, return --- turns recall
into a property of the stored index. One ground-truth pass certifies
every operating point before any is served, and the certificate held to
four decimals from $10^6$ to $10^9$ objects, on the corpus where the
index performs worst as on the ones where it performs best. The same
simplicity turns replication into insurance, quantization into lossless
acceleration, skew into a recursion, and the router into one more
database to index, which is what serves a hundred million vectors from
a quarter of a gigabyte and a billion from under a hundred megabytes.
Inserts are one search and deletes are exact, the build is sampling and
searching, and the comparisons are against the strongest systems we
could run on the same machine, queries and ground truth. At scale, the
right primitive may not be a better navigator, but a better bill of
what must be looked at.

\paragraph{Artifact and reproducibility.}
Every number in this paper was written by the run that produced it
into a versioned manifest recording the code commit, the machine, the
complete parameter set, the seed, and the raw timings; baselines were
run by us on the same machines, queries, and ground truth as \solo{},
under saturation sweeps rather than at a single operating point. The
implementation, the experiment drivers, and the manifests are at
\url{https://github.com/zevahcle/SOLO}.

\end{document}